\documentclass[11pt]{article} 

\usepackage[margin=1in]{geometry} 
\usepackage[utf8]{inputenc}
\usepackage[T1]{fontenc}
\usepackage[sc]{mathpazo} 

\usepackage{authblk}
\usepackage{comment,xspace}
\usepackage[normalem]{ulem} 

\usepackage{booktabs,multicol,multirow}
\usepackage{subcaption}

\usepackage{enumitem} 
\usepackage[dvipsnames]{xcolor} 
\definecolor{ptblue}{RGB}{15,76,129} 
\definecolor{ptemerald}{HTML}{009473} 
\definecolor{ptgray}{HTML}{939597} 

\usepackage{tikz} 
\usetikzlibrary{decorations.pathreplacing}

\usepackage{amsmath,amsfonts,amssymb,amsthm,bbm,mathtools}

\let\OLDland\land
\renewcommand{\land}{\:\OLDland\:}
\let\OLDlor\lor
\renewcommand{\lor}{\:\OLDlor\:}
\let\OLDforall\forall
\renewcommand{\forall}{\;\OLDforall\:}
\let\OLDexists\exists
\renewcommand{\exists}{\;\OLDexists\,}

\DeclareMathOperator*{\argmin}{arg\,min}
\newcommand{\indicator}[1]{\mathbbm{1}_{\left\{#1\right\}}\xspace}

\usepackage[ruled,linesnumbered,vlined]{algorithm2e}

\RequirePackage{boxedminipage}
\newcommand{\pbDef}[3]{%
	\noindent
	\begin{center}
		\begin{boxedminipage}{0.85\textwidth}
			 {#1}
			\smallskip\\
			\begin{tabular}{lp{0.9 \textwidth - \widthof{~~~Question:}}}
				Input: &  #2\\
				Question: &  #3
			\end{tabular}
		\end{boxedminipage}
	\end{center}
}

\usepackage[square,sort]{natbib} 

\usepackage{hyperref}
\hypersetup{
linktocpage,
colorlinks=true,
citecolor=ptemerald, 
urlcolor=ptblue, 
linkcolor=Plum, 
}
\usepackage{cleveref} 

\theoremstyle{plain}
\newtheorem{theorem}{Theorem}[section]

\newtheorem{lemma}[theorem]{Lemma}

\theoremstyle{definition}
\newtheorem{definition}[theorem]{Definition}
\newtheorem{example}[theorem]{Example}

\theoremstyle{remark}

\newcommand{\divGoods}{M}
\newcommand{\agents}{N}
\newcommand{\alloc}{A}
\newcommand{\MMS}{\text{MMS}}
\newcommand{\twoAgentOneHalfMMS}{$2$-\textsc{Agent}-$\frac{1}{2}$-\textsc{MMS}-\textsc{Alg}}

\title{\resizebox{\textwidth}{!}{Fair Allocation of Divisible Goods under Non-Linear Valuations}\thanks{A preliminary version of the paper appeared in \textit{Proceedings of the 24th International Conference on Autonomous Agents and Multiagent Systems (AAMAS)}~\citep{AzizHeLu25}.}}
\author[1]{Haris Aziz}
\author[2]{Zixu He}
\author[3]{Xinhang Lu}
\author[1]{Kaiyang Zhou}
\affil[1]{UNSW Sydney, \nolinkurl{haris.aziz@unsw.edu.au}, \nolinkurl{kaiyang.zhou@student.unsw.edu.au}}
\affil[2]{University of Massachusetts Amherst, \nolinkurl{taohe@umass.edu}}
\affil[3]{Kyushu University, \nolinkurl{xinhang.lu@inf.kyushu-u.ac.jp}}
\date{}

\begin{document}
\maketitle

\begin{abstract}
We study the problem of dividing homogeneous divisible goods among agents with non-linear valuations.
Specifically, the value that an agent gains from a given good depends only on the amount of the good they receive, and is not necessarily linear with respect to the amount.
For instance, under one-breakpoint piecewise-constant valuations, each agent specifies a threshold for each good such that this agent receives utility zero (resp., full utility of the good) when getting an amount below (resp., at least) the threshold.
Given non-linear valuations that are additive across the goods, we focus on designing fair allocation algorithms and consider two well-known fairness properties: the maximin share (MMS) guarantee and envy-freeness (EF).
For MMS, we devise an algorithm which always produces a $\frac{1}{2n-1}$-MMS allocation for $n$ agents with arbitrary non-decreasing valuations.
It is worth noting that this algorithmic result is almost tight as we give an impossibility of guaranteeing more than $1/n$ approximation to MMS, even when agents have one-breakpoint piecewise-constant valuations.
For $n \leq 3$ agents, we show the ratio~$1/n$ is tight.
Regarding envy-freeness, we show it is NP-hard to check the existence of an EF and Pareto optimal (PO) allocation for $n$ agents and at least three goods, even when agents have one-breakpoint piecewise-constant valuations.
We complement the hardness result by considering the case with a single divisible good, and devising a polynomial-time algorithm to check whether an EF and PO allocation exists or not for agents with piecewise-linear valuations.
\end{abstract}

\section{Introduction}

The allocation of scarce resources among multiple agents with different preferences is a fundamental issue that arises frequently in our society, for example, when dividing cloud computing resources such as processing time, memory, and communication bandwidth, or when handing out research grants.
We often want to ensure that the allocation is \emph{fair} to the agents, and possibly ideal in terms of other desiderata such as computation tractability, economic efficiency, etc.
When the resource to be allocated is divisible and \emph{heterogeneous}, the problem is commonly known as \emph{cake cutting}~\citep{Steinhaus49}, with the cake serving as a metaphor for heterogeneous divisible resource, and has been extensively studied by mathematicians, economists, political scientists, and more recently computer scientists~\citep{BramsTa96,LindnerRo24,Procaccia16,RobertsonWe98,Suksompong21}.

The rich literature of cake cutting provides several ways to capture fairness, with the two most prominent notions are \emph{proportionality} (a fair-share-based concept) and \emph{envy-freeness} (a comparison-based fairness concept).
An allocation is said to be proportional if each of the $n$ agents receives a utility of at least~$1/n$ of her total value for all resources~\citep{Steinhaus49}, and envy-free (EF) if every agent values her own bundle the most in the allocation~\citep{Foley67}.
Both notions can always be satisfied~\citep{Steinhaus49,Su99}, admit discrete and bounded protocols in the Robertson-Webb query model~\citep{AzizMa16-FOCS,RobertsonWe98}, and have been examined together with economic efficiency considerations such as being Pareto optimal (PO) or maximizing social welfare~\citep{BeiChHu12,BramsFeLa12,CohlerLaPa11,ReijniersePo98}.

As is common in the cake cutting literature, agents are assumed to have additive valuations across different pieces of the cake, yet the valuation function within each piece can be highly complicated.
Despite being a versatile model for fair division of divisible resources such as land, time, money, and computational resources, it fails to capture the natural scenario where agents care only about how much each divisible resource they receive rather than which part, and potentially do not have linear valuations in proportion to the amount they receive, as discussed in a few papers~\citep{BeiLiLu25,BuermannGeRa20,CaragiannisGkPs22,FeigeTe14}.
More specifically, consider the following real-world applications.
\begin{itemize}
\item \textbf{Computational resources.}
Given computational resources such as CPU time, memory and communication bandwidth to be divided between various users, each user requires at least a certain amount of resources to achieve meaningful performance in their computational tasks.
Some user may have CPU-intensive tasks and thus benefit more when receiving more CPU-time.
Another user may have a small file, so she gradually gets benefits in proportion to the size of her file being included in the memory but will stop getting more benefits once the file is fully included.

\item \textbf{Grant money.}
Agents' utilities are typically assumed to be proportional with respect to the amount of money they receive.
However, this may not be the case if, for example, a funding agency has specific guidelines on how much each budget category can be spent.
While some research groups may require equipment for experiments and be satisfied with receiving a large amount of funding in the ``Equipment'' category, other theory groups may find that most of the funding in this category is irrelevant.

\item \textbf{Space.}
Lastly, consider the scenario where a hall is shared among various community groups.
One community group needs at least half the hall to be able to organize a dinner event, and does not gets any additional benefit until it gets to book the whole hall in which case the full music system can also be played.
\end{itemize}

\emph{Given that agents have valuations depending on the amount of each homogeneous divisible resource they receive and the valuations are \emph{non-linear}, how shall we allocate the resources among agents in a fair manner?}

\subsection{Our Contributions}

\begin{figure}[t]
\centering
\begin{subfigure}[b]{.45\linewidth}
\centering
\begin{tikzpicture}
\draw[->] (0,0) -- (4,0) node[right] {$p$};
\draw[->] (0,0) -- (0,3) node[left] {$v(p)$};
\draw[domain=0:1] plot (\x, {0});
\draw[domain=1:2] plot (\x, {1});
\draw[domain=2:3] plot (\x, {1.2*\x-1.4});
\draw[domain=3:4] plot (\x, {2.7});
\node[draw, circle, fill, inner sep=1pt] at (1,1) {};
\node[draw, circle, inner sep=1pt] at (1,0) {};
\node[draw, circle, inner sep=1pt] at (3,2.2) {};
\node[draw, circle, fill, inner sep=1pt] at (3,2.7) {};
\end{tikzpicture}
\caption{Piecewise-linear valuation.}
\label{fig:piecewise-linear}
\end{subfigure}
\hfill
\begin{subfigure}[b]{.5\linewidth}
\centering
\begin{tikzpicture}
\draw[->] (0,0) -- (4,0) node[right] {$p$};
\draw[->] (0,0) -- (0,3) node[left] {$v(p)$};
\draw[domain=0:2] plot (\x, {0});
\draw[domain=2:4] plot (\x, {2});
\node[draw, circle, fill, inner sep=1pt] at (2,2) {};
\node[draw, circle, inner sep=1pt] at (2,0) {};
\end{tikzpicture}
\caption{One-breakpoint piecewise-constant valuation.}
\label{fig:example-piecewise-constant-one-breakpoint}
\end{subfigure}
\caption{Examples of non-linear utility functions, where $v(p)$ specifies the value of a fraction~$p$ of the respective good under~$v$.}
\end{figure}

In this paper, we study the fair allocation of divisible resources and consider a model deviating from the cake-cutting literature.
In more detail, there are $m$ homogeneous divisible goods to be allocated among $n$ agents.
Each agent has a \emph{non-decreasing} utility function, which specifies the value this agent receives from a given good: the value depends only on the amount of the good she receives and is not necessarily linear with respect to the amount.
Agents' utilities are assumed to be additive across different goods.
It is worth noting that our model captures the setting of indivisible-goods allocation with additive utilities, by letting each agent have any value only if the agent gets a good in its entirety.
The \emph{maximin share (MMS) guarantee} is a natural relaxation of proportionality when allocating indivisible goods~\citep{Budish11}, and has been extensively studied in various fair division settings~\citep[see, e.g.,][]{AmanatidisAzBi23}.

The majority of our results concern structured utility functions such as being \emph{piecewise-linear} (see \Cref{fig:piecewise-linear}) or \emph{piecewise-constant with one breakpoint} (see \Cref{fig:example-piecewise-constant-one-breakpoint}).
The central fairness concepts in this paper are \emph{MMS} and \emph{envy-freeness}.
In \Cref{sec:MMS}, we focus on MMS, which is satisfied if agents get bundles worth at least their own maximin share---the largest value an agent can guarantee for herself if she partitions the goods into $n$ parts and gets the worst part.

When allocating indivisible goods to agents with additive utilities, a \emph{constant} multiplicative approximation to MMS can always be achieved~\citep{KurokawaPrWa18}, with the state-of-the-art factor of $\frac{7}{9}$ due to~\citet{HuangZh25} and an upper bound of~$\frac{39}{40}$ due to~\citet{FeigeSaTa21}.
In our model, however, it is impossible to guarantee more than $1/n$ approximation to MMS, even for one-breakpoint piecewise-constant valuations.
We complement this negative result by devising an algorithm that always produces a $\frac{1}{2n-1}$-MMS allocation for agents with \emph{arbitrary non-decreasing} valuations, asymptotically matching the upper bound.
We then turn to special cases involving up to three agents with one-breakpoint piecewise-constant valuations, where we show the approximation ratio~$1/n$ to MMS is \emph{tight}.
Indeed, our motivating examples demonstrate that one-breakpoint piecewise-constant valuations naturally capture a wide range of real-world applications.
Moreover, a good few fair division applications including dividing resources between different faculties within an institution or assets between founding members of a company often involves a small number of participants, and quite a few prominent fair division works deal exclusively with up to four agents~\citep[e.g.,][]{AmanatidisChFe18,AzizMa16,BergerCoFe22,BramsFi00,BramsKiKl14,ChaudhuryGaMe24,HollenderRu25,KilgourVe18}.

In \Cref{sec:EF}, we focus on envy-freeness, which can be satisfied trivially and vacuously by dividing each good equally among the agents.
We thus also aim to achieve economic efficiency, which is not guaranteed by dividing each good equally.
We show it is NP-hard to check the existence of an EF and Pareto optimal (PO) allocation, even for \emph{three} goods and one-breakpoint piecewise-constant valuations.
For a single good and piecewise-linear valuations, we devise a polynomial-time algorithm that (i) finds an allocation being EF and PO among all EF allocations, and (ii) checks the existence of an EF and PO allocation.

\subsection{Related Work}

Besides cake cutting, the fair allocation of indivisible goods has received extensive attention~\citep{AmanatidisAzBi23,NguyenRo23,Suksompong25}.
More recently, the fair allocation of resources of mixed nature has also been explored~\citep{LiuLuSu24}.

Perhaps the works most closely related to ours are the papers by \citet{CaragiannisGkPs22} and \citet{BeiLiLu25}.
\citeauthor{CaragiannisGkPs22} studied the fair allocation of homogeneous divisible goods in which each agent's value depends only on the amount of each good they receive and valuations are additive across different goods.
They mainly focused on randomized algorithms that achieve ex-ante envy-freeness and ex-ante approximate-PO among all envy-free lotteries; note that each deterministic allocation in the support is only required to be feasible.
They also considered more general valuations, and thus adopted a query model to elicit agents' valuations and focused on query complexity.
In contrast, we take a deeper dive into more structured valuations and focus on deterministic allocations which has guaranteed fairness ex post.
To this end, we also examine MMS that has not been explored for this model.

\citet{BeiLiLu25} studied a fair division model with \emph{subjective divisibility}, in which each good is either completely indivisible to some agents or completely divisible to other agents (i.e., the agents' valuation functions are linear with respect to the fraction of each good).
They adapted a hierarchy of envy-freeness relaxations, including EF1M~\citep{CaragiannisKuMo19}, EFM~\citep{BeiLiLi21} and EFXM~\citep{NishimuraSu25}, ordered by increasing strength, and investigated its compatibility with economic efficiency concepts.
Regarding the MMS guarantee, they showed an impossibility of guaranteeing more than $\frac{2}{3}$-MMS, and this is \emph{tight} for $2$ and $3$ agents.\footnote{\citet{BeiDiLi26} later devised polynomial-time algorithms that compute $\frac{2}{3}$-MMS allocations for up to four agents.}
A $\frac{1}{2}$-MMS allocation always exists for any number of agents.
The MMS approximation ratio was later improved to~$\frac{5}{9}$ in the work of \citet{BeiDiLi26}.
Moreover, \citet{BeiDiLi26} showed that the ratio~$\frac{2}{3}$ is tight in the case where all agents value all items equally.
Our model generalizes theirs as we allow a broader class of valuations.
Nevertheless, we still manage to give (asymptotically) tight approximation to MMS in various cases.

There have been works examining the division of a \emph{single} homogeneous divisible good with non-linear valuations~\citep{BuermannGeRa20,FeigeTe14}.
\citeauthor{FeigeTe14} presented a randomized allocation mechanism that gives each agent at least $1/2$ of her \emph{fair share} (i.e., the expected utility that she would get if she could choose the allocation rule that maximizes her expected utility).
\citeauthor{BuermannGeRa20} assumed that the available amount of the single divisible good is given by a probability distribution, and studied the trade-off between social welfare and envy-freeness as well as showed computational intractability of optimizing ex-ante social welfare subject to ex-ante EF (where randomness comes from the amount of the good).

When allocating resources of different types (e.g., computing resources) to agents with heterogeneous demands for each resource, a typical and classic assumption is that agents demand the resources in fixed proportions, known as \emph{Leontief preferences}~\citep{GhodsiZaHi11,ParkesPrSh15} in the economics literature.
This assumption requires that the resources are divisible and agents receive utilities in proportion to the resources allocated to them.
\citet{ParkesPrSh15} made a more practical assumption by requiring a minimum, \emph{indivisible} bundle of resources for the agents to receive utilities, and they conceptualized it as a step function.
Their negative results (on the incompatibility between PO, strategyproofness and fairness) hold with a single resource, meaning that these continue to hold in our setting.
The notable difference between their and our settings is that we assume additive utilities between the resources, instead of having the fixed proportion relations between the resources.

When allocating indivisible items, the MMS guarantee has been studied for valuations more general than additive.
The state-of-the-art MMS approximation guarantee for separable piecewise-linear concave (SPLC), submodular, fractionally subadditive (XOS) and subadditive valuations is $\frac{1}{2}$~\citep{ChekuriKuKu24}, $\frac{10}{27}$~\citep{BenUziahuFe23}, $\frac{3}{13}$~\citep{AkramiMeSe23} and $\frac{1}{14 \log n}$~\citep{FeigeHu25}, respectively.
Complementing the positive results, \citet{GhodsiHaSe22} provided an upper bound of~$\frac{3}{4}$ for submodular valuations and~$\frac{1}{2}$ for XOS (hence, subadditive) valuations.

\section{Preliminaries}

For any positive integer~$t$, let $[t] \coloneqq \{1, 2, \dots, t\}$.
Our model includes a set of~$n$ agents $N = [n]$ to whom we allocate a set of~$m$ homogeneous divisible goods $\divGoods = \{g_1, g_2, \dots, g_m\}$.
We assume in our work that each agent's value derived from each good depends on the fraction of the good allocated to the agent and that this derived value is not necessarily in proportion to the fraction allocated.
More precisely, each agent~$i \in N$ has a utility function $v_i \colon \divGoods \times [0, 1] \to \mathbb{R}_{\geq 0}$ such that $v_i(g, p)$ specifies the value of a fraction~$p$ of item~$g \in \divGoods$ allocated to agent~$i$.
Throughout the paper, we assume normalization, meaning that $v_i(g, 0) = 0$ for all~$i \in N$ and~$g \in \divGoods$, and monotonicity (a.k.a., ``free disposal''), meaning that for all~$i \in \agents$ and~$g \in \divGoods$, $v_i(g, p) \leq v_i(g, p')$ if $p \leq p'$.
For simplicity, we will write $v_i(g) = v_i(g, 1)$.

A \emph{bundle} of goods~$\divGoods$ is represented by an $m$-dimensional vector $\mathbf{x} = (x_1, x_2, \dots, x_m)$, where each coordinate~$x_j \in [0, 1]$ denotes the fraction of good~$g_j$ in bundle~$\mathbf{x}$.
For ease of expression, we will use ``$x_j \cdot g_j$'' to denote the part of good~$g_j$ allocated to a bundle.
The bundle~$\mathbf{x}$ is said to be \emph{integral} if all goods in the bundle are whole goods (as opposed to fractional goods), i.e., $x_j \in \{0, 1\}$ for all~$j \in [m]$.
In what follows, we will make it explicitly clear if we refer to an integral bundle.

We assume that agents' valuations across different goods are \emph{additive}, that is, given a bundle~$\mathbf{x}$, for all~$i \in \agents$, $v_i(\mathbf{x}) \coloneqq \sum_{j \in [m]} v_i(g_j, x_j)$.
Denote by $\alloc = (A_{i, g})_{i \in \agents, g \in \divGoods}$ the \emph{allocation} of goods~$\divGoods$ among the agents, where $A_{i, g}$ specifies the fraction of item~$g$ allocated to agent~$i$, and $\sum_{i \in N} A_{i, g} = 1$ for each $g\in \divGoods$ is the feasibility constraint.
Let $\alloc_i = (A_{i, g})_{g \in \divGoods}$ denote the bundle of goods allocated to agent~$i$.
Each agent~$i$ derives a utility of $v_i(A_i) \coloneqq \sum_{g \in \divGoods} v_i(g, A_{i, g})$ under allocation~$\alloc$.

Under this setting, we consider a few interesting sub-classes of general utility functions for the agents, namely piecewise-linear, piecewise-constant, and one-breakpoint piecewise-constant functions.
Note that each of the latter is a proper sub-class of the former classes.
A utility function~$v$ supported on~$[0, 1]$ is said to be
\begin{itemize}
\item \emph{piecewise-linear} if, for any~$g \in \divGoods$, $v(g, \cdot)$ can be written as a collection of linear functions; see \Cref{fig:piecewise-linear} for an illustration.
More formally, for some positive integer~$d$:
\begin{equation*}
v(g, p)=
\begin{cases}
a_1 \cdot p + b_1 & \text{if}~p \in [c_0, c_1) \\
a_2 \cdot p + b_2 & \text{if}~p \in [c_1, c_2) \\
\vdots \\
a_d \cdot p + b_d & \text{if}~p \in [c_{d-1}, c_d],
\end{cases}
\end{equation*}
where, $c_0 = 0$, $c_d = 1$, and for each~$j \in [d]$, $a_j, b_j \in \mathbb{R}$ are the coefficients of the corresponding linear function in segment $[c_{j-1}, c_j]$.
We say $c_1, c_2, \dots, c_{d-1}$ are the \emph{breakpoints} of the utility function.
Due to the normalization assumption, we have $b_1 = 0$.
Moreover, since all utility functions are assumed to be non-decreasing, we must have $a_j \geq 0$ for each $j \in [d]$ and $a_j \cdot c_j + b_j \leq a_{j+1} \cdot c_j + b_{j+1}$ for each~$j \in [d-1]$.

\item \emph{piecewise-constant} if, for any~$g \in \divGoods$, $v$ is piecewise-linear, and moreover, $a_j = 0$ for all~$j \in [d]$.

\item \emph{one-breakpoint piecewise-constant} if, for any~$g \in \divGoods$, $v$ is piecewise-constant, and furthermore, $d = 2$; see, e.g., \Cref{fig:example-piecewise-constant-one-breakpoint}.
\end{itemize}

We will refer to one-breakpoint piecewise-constant valuations frequently later.
To avoid ambiguity, we give an alternative form (that we will use more extensively) in the following: for each~$i \in \agents$ and~$g \in \divGoods$,
\[
v_i(g, p) =
\begin{cases}
v_i(g) & \text{if}~p \geq c_{i, g} \\
0      & \text{otherwise},         \\
\end{cases}
\]
where $c_{i, g} \in (0, 1]$ denotes the threshold of agent~$i$ for good~$g$ such that she starts getting positive utility.
The valuation can be succinctly represented by ``$v_i(g) \cdot \indicator{p \geq c_{i, g}}$'', where $\indicator{\cdot}$ is the indicator function that is~$1$ when agent~$i$ receives a fraction of good~$g$ at least the threshold~$c_{i, g}$ and~$0$ otherwise.
It is worth noting that when $c_{i, g} = 1$ for all~$i \in \agents$ and~$g \in \divGoods$, our setting becomes exactly the indivisible-goods allocation in which agents have additive utilities.

\section{Maximin Share Guarantee}
\label{sec:MMS}

In this section, we investigate the well-known share-based fairness notion---the \emph{maximin share (MMS) guarantee}, which was first proposed for indivisible goods~\citep{Budish11}, and has also been investigated in settings with a mix of both divisible and indivisible goods~\citep{BeiLiLu21,BeiLiLu25}.

\begin{definition}[$\alpha$-MMS]
Let~$\Pi_n(\divGoods)$ be the set of all $n$-partitions of~$\divGoods$.
The \emph{maximin share (MMS)} of~$i \in N$ is
\[
\MMS_i(n, \divGoods) \coloneqq \max_{(P_1, \dots, P_n) \in \Pi_n(\divGoods)} \min_{j \in [n]} v_i(P_j).
\]
Any partition for which the maximum is attained is called an \emph{MMS partition} of agent~$i$.

An allocation~$\alloc$ is said to satisfy the \emph{$\alpha$-MMS}, for some~$\alpha \in [0, 1]$, if for every agent~$i \in N$, $v_i(A_i) \geq \alpha \cdot \MMS_i(n, \divGoods)$.
\end{definition}

We demonstrate agents' MMS values and an approximate-MMS allocation below.

\begin{example}
    Consider an instance involving three agents $\{1, 2, 3\}$ and five goods $\{g_1, g_2, g_3, g_4, g_5\}$.
    Each agent has the one-breakpoint piecewise-constant valuation $0.4 \cdot \indicator{p \geq 0.3}$ for exactly two goods, and the valuation $0.2 \cdot \indicator{p \geq 0.8}$ for the remaining three goods.
    Specifically, agent~$1$ (resp., $2$ and $3$) regards goods $g_1, g_2$ (resp., $g_2, g_3$ and $g_3, g_4$) in manner as the first valuation.

    It can be verified that the MMS of each agent is~$1$.
    We illustrate an MMS partition of agent~$1$:
    \begin{itemize}
        \item each of the two goods with breakpoint at~$0.3$ (i.e., $g_1$ and $g_2$) is divided equally across three different bundles so that its value~$0.4$ is attained in all three bundles;
        \item the three remaining goods with breakpoint at~$0.8$ are allocated integrally to the three bundles respectively.
    \end{itemize}

    \begin{center}
        \begin{tikzpicture}[scale=2]
            \draw (0,0) rectangle (1,1);
            \node at (0.5,.1) {$g_3$};
            \draw[dashed] (0,.2) to (1,.2);
            \node at (.5,.4) {$\frac{1}{3} \cdot g_2$};
            \draw[dotted,color=ptgray] (0,.6) to (1,.6);
            \fill[ptgray] (0,.6) rectangle (1,1);
            \node at (.5,.8) {$\frac{1}{3} \cdot g_1$};

            \draw[decorate,decoration={brace,raise=1pt,amplitude=3pt}] (0,0) to (0,.2);
            \node[label=left:{$0.2$}] at (0,.1) {};
            \draw[decorate,decoration={brace,raise=1pt,amplitude=3pt}] (0,.6) to (0,1);
            \node[label=left:{$0.4$}] at (0,.8) {};

            \draw (1.5,0) rectangle (2.5,1);
            \node at (2,.1) {$g_4$};
            \draw[dashed] (1.5,.2) to (2.5,.2);
            \node at (2,.4) {$\frac{1}{3} \cdot g_2$};
            \draw[dotted,color=ptgray] (1.5,.6) to (2.5,.6);
            \fill[ptgray] (1.5,.6) rectangle (2.5,1);
            \node at (2,.8) {$\frac{1}{3} \cdot g_1$};

            \draw (3,0) rectangle (4,1);
            \node at (3.5,.1) {$g_5$};
            \draw[dashed] (3,.2) to (4,.2);
            \node at (3.5,.4) {$\frac{1}{3} \cdot g_2$};
            \draw[dotted,color=ptgray] (3,.6) to (4,.6);
            \fill[ptgray] (3,.6) rectangle (4,1);
            \node at (3.5,.8) {$\frac{1}{3} \cdot g_1$};
            \draw[decorate,decoration={brace,mirror,raise=1pt,amplitude=3pt}] (4,.2) to (4,0.6);
            \node[label=right:{$0.4$}] at (4,.4) {};
        \end{tikzpicture}
    \end{center}

    A similar MMS partition can be obtained for agents~$2$ and~$3$ by relabelling the goods according to their valuations.
    The following is a $0.8$-MMS allocation:
    \begin{itemize}
        \item Agent~$1$ receives $\{g_1, 0.5 \cdot g_2, g_5\}$ and gets utility~$1$.
        \item Agent~$2$ receives $\{0.5 \cdot g_2, 0.5 \cdot g_3\}$ and gets utility~$0.8$.
        \item Agent~$3$ receives $\{0.5 \cdot g_3, g_4\}$ and gets utility~$0.8$.
    \end{itemize}
\end{example}

\subsection{Any Number of Agents}

We start with an impossibility result.
Given a fair division instance, the \emph{MMS approximation guarantee} of the instance is the maximum~$\alpha$ such that the instance admits an $\alpha$-MMS allocation.
We show that the worst-case MMS approximation guarantee is at most~$1/n$.

\begin{theorem}
\label{thm:MMS:n-agent-impossibility}
For~$n$ agents with one-breakpoint piecewise-constant valuations, the worst-case MMS approximation guarantee is at most~$\frac{1}{n}$.
\end{theorem}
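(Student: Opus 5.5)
We construct a single instance in which every allocation leaves some agent with at most a $\frac{1}{n}$ fraction of her maximin share. The idea is to pit one agent, for whom goods are almost perfectly shareable, against $n-1$ agents for whom goods are effectively indivisible. The only thing to get right is tuning the thresholds. They should make the first agent's MMS as large as $n$, yet force the other agents' demands to consume almost all the goods.

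\textbf{Instance.} Take $m = n$ goods $g_1, \dots, g_n$.
\begin{itemize}
\item Agent~$1$ has $v_1(g, p) = 1 \cdot \indicator{p \geq 1/n}$ for every good~$g$.
\item Each agent $i \in \{2, \dots, n\}$ has $v_i(g, p) = 1 \cdot \indicator{p \geq 1}$ for every good~$g$. That is, agent~$i$ values only whole goods.
\end{itemize}
All of these are one-breakpoint piecewise-constant valuations.

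\textbf{MMS values.} For agent~$1$, split every good into $n$ equal parts and put one part of each good into each bundle. Each bundle then holds a $\frac{1}{n}$ fraction of every good, so it is worth $n$. Hence $\MMS_1(n, \divGoods) \geq n$, and in fact equality holds since $v_1(\divGoods) = n$. For $i \geq 2$, giving one whole good to each bundle shows $\MMS_i(n, \divGoods) \geq 1$. Only the fact that this share is positive matters below.

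\textbf{Impossibility.} Suppose for contradiction that an allocation~$\alloc$ is $\alpha$-MMS for some $\alpha > \frac{1}{n}$.
\begin{itemize}
\item Each agent $i \geq 2$ must get positive utility. So she must receive some good~$g$ with $A_{i,g} = 1$.
\item These agents are distinct and each whole good can go to only one of them. Therefore $n-1$ distinct goods are entirely consumed by agents $2, \dots, n$, and agent~$1$ gets nothing from them.
\item Agent~$1$ can thus obtain value only from the single remaining good. So $v_1(A_1) \leq 1 = \frac{1}{n} \cdot \MMS_1(n, \divGoods) < \alpha \cdot \MMS_1(n, \divGoods)$, a contradiction.
\end{itemize}

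Hence no allocation of this instance is $\alpha$-MMS for any $\alpha > \frac{1}{n}$. The allocation giving each agent one whole good is exactly $\frac{1}{n}$-MMS, so the instance's MMS approximation guarantee is exactly $\frac{1}{n}$, which proves the theorem.
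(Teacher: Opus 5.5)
Your proposal is correct and follows the paper's proof essentially verbatim: the paper uses the same instance with $n$ goods, agent~$1$ having threshold $1/n$ on every good and agents $2,\dots,n$ valuing only whole goods. It also uses the same argument that agents $2,\dots,n$ must each take a distinct whole good, which leaves agent~$1$ value from at most one good. The only difference is immaterial: the paper scales agent~$1$'s values to $1/n$ per good so that every agent's MMS equals~$1$, whereas you use value~$1$ per good and get $\MMS_1 = n$.
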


\begin{proof}
Consider the following instance with~$n$ goods.
\begin{center}
\begin{tabular}{@{}l|*{4}{l}@{}}
\toprule  & $v_i(g_1, \cdot)$                                           & $v_i(g_2, \cdot)$ & \ldots & $v_i(g_n, \cdot)$ \\
\midrule
Agent~$1$ & $\frac{1}{n} \cdot \indicator{A_{1, g_1} \geq \frac{1}{n}}$ & $\frac{1}{n} \cdot \indicator{A_{1, g_2} \geq \frac{1}{n}}$ & \ldots   & $\frac{1}{n} \cdot \indicator{A_{1, g_n} \geq \frac{1}{n}}$ \\
\midrule
Agent~$2$ & $\indicator{A_{2, g_1} = 1}$ & $\indicator{A_{2, g_2} = 1}$ & \ldots & $\indicator{A_{2, g_n} = 1}$ \\
\midrule
$\vdots$  & $\vdots$ & $\vdots$ & $\ddots$ & $\vdots$ \\
\midrule
Agent~$n$ & $\indicator{A_{n, g_1} = 1}$ & $\indicator{A_{n, g_2} = 1}$ & \ldots & $\indicator{A_{n, g_n} = 1}$ \\
\bottomrule
\end{tabular}
\end{center}

In words, for each good, agent~$1$ gets utility~$1/n$ when receiving at least $1/n$-th fraction of the good, and all other agents gets utility~$1$ when receiving the whole good.
The maximin share of each agent is exactly~$1$.
For agent~$1$, this can be seen by partitioning each good into~$n$ equal pieces and letting each bundle contain exactly one piece of each good.
For each of the remaining agents, having exactly one good in each bundle gives her MMS partition.

For any allocation in which all agents get positive utility, each agent of~$\{2, \dots, n\}$ needs to receive at least one good.
Then, agent~$1$ can only get a utility of at most~$1/n$.
The conclusion follows.
\end{proof}

In what follows, we provide an algorithm which always produce a $\frac{1}{2n - 1}$-MMS allocation for agents with non-decreasing valuations, asymptotically matching the upper bound.
The pseudocode can be found in \Cref{alg:1/(2n-1)-MMS}.
Note that our algorithmic result works for a much broader valuation class than that being used in the impossibility result.
We remark that one can adopt the \emph{value} and \emph{cut queries} from \citet{CaragiannisGkPs22} to access agents' non-decreasing valuations.
This section is most interested in existence results, and won't discuss query or computational complexities.

\begin{theorem}
    \label{thm:MMS:1/(2n-1)-MMS}
    For~$n$ agents with arbitrary non-decreasing valuation functions, \Cref{alg:1/(2n-1)-MMS} computes a $\frac{1}{2n - 1}$-MMS allocation.
\end{theorem}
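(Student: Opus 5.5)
The plan is to run a ``moving-knife''/bag-filling style procedure over \emph{pieces} rather than goods. The key observation is that in any MMS partition $(P_1,\dots,P_n)$ of agent~$i$, each good $g$ is split into at most $n$ fractions $P_{1,g},\dots,P_{n,g}$. So the total value agent~$i$ can extract from $g$ across the partition is at most $n\cdot v_i(g,\cdot)$ evaluated at pieces. The natural scaling is therefore: agent~$i$ considers the $n$-th ``share-pieces'' of her MMS partition.

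I would describe and analyse the algorithm as follows. Each remaining agent $i$ fixes her MMS partition $P^i$ and threshold $\mu_i=\MMS_i(n,\divGoods)/(2n-1)$.

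\textbf{Single large piece.} If some remaining agent finds a single piece $P^i_{j,g}$ of a single good worth at least $\mu_i$, we want to give her that fraction of $g$. Concretely, we allocate the smallest fraction $p$ of (the still unallocated part of) $g$ with $v_i(g,p)\ge \mu_i$ and remove her.

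\textbf{Bag filling.} Otherwise, every piece is small (each worth $<\mu_i$ to $i$). We then fill a bag by adding whole remaining goods, or pieces, one at a time until some remaining agent values the bag at least her threshold. She takes the bag and leaves. Because each added item is worth less than $\mu_i$ to any agent who has not yet claimed, no remaining agent values the bag at more than about $2\mu_i$ before it is claimed.

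The invariant to maintain is that, for each remaining agent~$i$ when $k$ agents have left, the unallocated goods still admit enough value: $i$ can still find bundles worth $\mu_i$. The accounting goes as follows.

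\emph{Value lost per departed agent.} Each departed agent removes from $i$'s perspective at most ``one bundle's worth.'' The first subtlety is that allocating a fraction $p$ of a good $g$ can destroy value in \emph{every} bundle of $P^i$ containing part of $g$, because valuations are non-linear. I would handle this by having $i$ reason only about pieces: when a fraction of $g$ is removed, $i$ can re-fit her pieces of $g$ into the remainder, losing at most the pieces that no longer fit. If the removed amount is at most one of her pieces in length, at most one piece (value $<\mu_i$) is destroyed. If it is larger, then she herself valued it at less than $\mu_i$, giving a bound via monotonicity.

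\emph{Counting.} After $k<n$ departures, at least $n-k\ge1$ of $i$'s bundles survive, up to a destruction of value at most $2k\mu_i$ in total. Since $i$'s bundles total at least $n\cdot\MMS_i$, the remaining value is at least $n(2n-1)\mu_i-2k\mu_i\ge\mu_i$, so the procedure never gets stuck and the last agent receives everything left.

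The main obstacle is the non-linear interaction: one claimed fraction of a good may reduce $i$'s value in many of her MMS bundles simultaneously. This is exactly why the factor is $2n-1$ rather than $n$. I would first try to prove a clean lemma: removing a fraction $q$ of $g$ lets $i$ keep all but a set of her pieces of $g$ of total size at most $q$ plus one piece. I would then charge each departure at most $2\mu_i$, or $\MMS_i$ for ``bundle'' losses, and combine this with the per-bundle counting to obtain $\frac{1}{2n-1}$.
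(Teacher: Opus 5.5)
There is a genuine gap; in fact there are several. First, the statement is about \Cref{alg:1/(2n-1)-MMS}, but you analyse a different and underspecified procedure. In yours, largeness is judged on pieces of a fixed MMS partition, agents are served one at a time from a good that stays available, and bags may contain fractional pieces. Even a sound argument would therefore not establish the theorem. Second, your accounting is invalid under non-linear valuations. You equate the sum of $i$'s bundle values with the value of what is left for her, but the same part of a good counts in many bundles. For example, agent~$1$ in the instance of \Cref{thm:MMS:n-agent-impossibility} has bundle values summing to $n$ while $v_1(\divGoods)=1$. The slack in $n(2n-1)\mu_i-2k\mu_i\ge\mu_i$ is the symptom: with threshold $\MMS_i/2$ in place of $\mu_i$, the same accounting would give $\frac12$-MMS for every $n$, contradicting that theorem. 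Also, once bags may contain fractional pieces, $v_j(B)<2\mu_j$ no longer bounds $j$'s loss. A sliver of a good with $v_j(g,p)=v_j(g)\cdot\indicator{p\ge 1}$ is worth $0$ inside the bag, yet removing it destroys all of $v_j(g)$.

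Third, the per-departure charge of $2\mu_i$ fails for your rule. Your claim that a removed amount longer than $i$'s piece is worth less than $\mu_i$ to $i$ has no basis, since the fraction is minimal for the departing agent, not for $i$. Take $n=2$, $v_1(g,p)=V\cdot\indicator{p\ge 1/2}$, $v_1(h,p)=\varepsilon\cdot\indicator{p\ge 1}$, and $v_2(g,p)=v_2(h,p)=\indicator{p\ge 1}$. Then $\MMS_1=V$, $\MMS_2=1$, and both agents have a large piece of $g$. If agent~$2$ is served from $g$, she needs all of it. This wipes out $V$ in both of agent~$1$'s bundles and leaves agent~$1$ with $\varepsilon<V/3$.

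The missing idea is the selection rule in \cref{alg:1/(2n-1)-MMS:identify-agents}: agents are served from $g_j^*$ in increasing order of $x_{i,g_j^*}$, and the good is then discarded. So any agent $i$ left out of the group $N_j$ has $(|N_j|+1)\,x_{i,g_j^*}>1$. Hence at most $|N_j|$ of her MMS bundles get value at least $\mu_i$ from $g_j^*$, and every other bundle loses less than $\mu_i$. Let $N'$ be the agents served in Phase~I and $N''$, $\divGoods''$ the remaining agents and goods. Since each iteration removes at least one agent, some bundle of $i$ survives Phase~I with value at least $(2n-1-|N'|)\mu_i$, so $v_i(\divGoods'')\ge(2n-1-|N'|)\mu_i$. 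Phase~II then removes $|N''|-1$ \emph{integral} bags, each worth less than $2\mu_i$ to $i$, and for these $v_i$ is exactly additive. This leaves the last agent at least $(|N'|+1)\mu_i\ge\mu_i$.
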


\begin{algorithm}[t]
    \caption{$\frac{1}{2n - 1}$-MMS Allocation Algorithm}
    \label{alg:1/(2n-1)-MMS}
    \DontPrintSemicolon

    \KwIn{Agents~$N = [n]$ and goods~$\divGoods$.}
    \KwOut{A $\frac{1}{2n-1}$-MMS allocation~$\alloc$.}

    \ForEach{$i \in N$}{
        $A_i \gets \emptyset$\;
        Compute agent~$i$'s maximin share~$\MMS_i(n, \divGoods)$.\;
    }

    \BlankLine
    \tcp{Phase~I: Allocate large goods.}

    \While{$\exists g^* \in \divGoods$ s.t.\ $v_i(g^*) \geq \frac{\MMS_i}{2n - 1}$ for some~$i \in N$}{ \label{alg:1/(2n-1)-MMS:large-goods-condition}
        \ForEach{$i \in N$}{
            \eIf{$v_i(g^*) \geq \frac{\MMS_i}{2n - 1}$}{
                $x_{i, g^*} \gets \argmin_{p \in [0, 1]} v_i(g^*, p) \geq \frac{\MMS_i}{2n - 1}$\;
            }{
                $x_{i, g^*} \gets +\infty$\;
            }
        }
        Let~$S \subseteq N$ denote the subset of agents such that: (i) for all~$i \in S$ and for all~$j \in N \setminus S$, $x_{i, g^*} \leq x_{j, g^*}$, (ii) $\sum_{i \in S} x_{i, g^*} \leq 1$, and (iii) for all~$j \in N \setminus S$, $\sum_{i \in S \cup \{j\}} x_{i, g^*} > 1$.\; \label{alg:1/(2n-1)-MMS:identify-agents}
        \lForEach{$i \in S$}{
            $A_{i, g^*} \gets x_{i, g^*}$
        }
        $N \gets N \setminus S$, $\divGoods \gets \divGoods \setminus \{g^*\}$\; \label{alg:1/(2n-1)-MMS:large-goods-alloc}
    }

    \BlankLine
    \tcp{Phase~II: Bag-filling for small goods.}

    \While{$|N| > 1$}{ \label{alg:1/(2n-1)-MMS:small-goods-begin}
        Add one good at a time to an empty bundle~$B$ until $v_i(B) \in \left[ \frac{\MMS_i}{2n - 1}, \frac{2 \cdot \MMS_i}{2n - 1} \right]$ holds for some~$i \in N$.\; \label{alg:1/(2n-1)-MMS:bag-filling}
        $A_i \gets B$\;
        $N \gets N \setminus \{i\}$, $\divGoods \gets \divGoods \setminus B$\; \label{alg:1/(2n-1)-MMS:small-goods-alloc}
    }

    Give all remaining goods to the last agent.\; \label{alg:1/(2n-1)-MMS:last-agent}

    \Return{$(A_1, A_2, \dots, A_n)$}
\end{algorithm}

Given agents~$N$, goods~$\divGoods$ and agents' MMS values $(\MMS_i)_{i \in N}$, good~$g \in \divGoods$ is said to be a \emph{large good} if there exists some agent~$i \in N$ such that $v_i(g) \geq \frac{\MMS_i}{2n - 1}$; otherwise, we will say~$g$ is a \emph{small good}.
At a high level, \Cref{alg:1/(2n-1)-MMS} has two phases.
\Cref{alg:1/(2n-1)-MMS} starts by processing large goods in the first \verb|while|-loop (\crefrange{alg:1/(2n-1)-MMS:large-goods-condition}{alg:1/(2n-1)-MMS:large-goods-alloc}) by allocating the goods to agents.
After all large goods are allocated, the remaining goods are allocated to the remaining agents in \crefrange{alg:1/(2n-1)-MMS:small-goods-begin}{alg:1/(2n-1)-MMS:small-goods-alloc} via \emph{bag-filling}.
We remark that although the last agent gets all remaining goods, this agent may not necessarily receive a lot more value due to her non-linear utility over each good.
Note that in the following, $\MMS_i$ refers to $\MMS_i(n, \divGoods)$ (i.e., the maximin share computed in the original $n$-agent instance).
Sometimes, we will refer to maximin share in reduced instances---we will make it clear.

First, we investigate instances that have only small goods.

\begin{lemma}
    \label{lem:1/(2n-1)-MMS:small-goods}
    Given $\langle [n], \divGoods \rangle$ in which $v_i(g) < \frac{1}{2n - 1} \cdot \MMS_i$ for all~$i \in N$ and~$g \in \divGoods$, \Cref{alg:1/(2n-1)-MMS} computes a $\frac{1}{2n - 1}$-MMS allocation.
\end{lemma}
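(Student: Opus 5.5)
With only small goods, Phase~I never runs, so \Cref{alg:1/(2n-1)-MMS} is pure bag-filling. The plan has three steps. First, show the bag-filling loop is well defined and never overshoots. Second, show it never runs out of goods before $n-1$ agents are served. Third, show the last agent still has enough value left.

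\textbf{Step 1 (no overshoot).} Goods are added to $B$ as whole goods, one at a time. For every remaining agent $i$, each good contributes less than $\frac{\MMS_i}{2n-1}$. Just before the step at which some agent first reaches $\frac{\MMS_i}{2n-1}$, that agent was below the threshold. Adding one more good therefore keeps her value below $\frac{2\MMS_i}{2n-1}$, by additivity across goods. So the first agent to cross the threshold lands in the interval $\left[\frac{\MMS_i}{2n-1}, \frac{2\MMS_i}{2n-1}\right]$ and is the one who receives $B$. Every agent served in the loop thus gets at least $\frac{\MMS_i}{2n-1}$.

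\textbf{Step 2 (invariant on remaining value).} The key observation concerns any agent $j$ still present when a bag is given to someone else. At every earlier stage of that bag, $j$'s value was below $\frac{\MMS_j}{2n-1}$, and the bag ends when the first agent crosses. So the bag's value to $j$ is less than $\frac{2\MMS_j}{2n-1}$: either $j$ is still below the threshold, or $j$ crossed at the same final step, in which case the bound from Step~1 applies to $j$ as well. Hence after $k$ bags are removed, any agent $j$ still present has lost value less than $\frac{2k\MMS_j}{2n-1}$.

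Now use additivity and monotonicity to bound the starting value: $v_j(\divGoods)\ge \sum_{\ell} v_j(P_\ell)\ge n\cdot\MMS_j$, where $(P_\ell)$ is an MMS partition. Monotonicity is needed for the first inequality. The parts' fractions of each good sum to $1$, so each fraction is at most $1$ and $v_j(g,\cdot)$ on a part is at most $v_j(g)$. This inequality goes the wrong way for a sum over parts, so the correct chain is $v_j(\divGoods)=\sum_g v_j(g)\ge \max_\ell v_j(P_\ell)\ge \MMS_j$. That alone is too weak, so this is the main obstacle.

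I would handle it by bounding how much value the removed bags can destroy relative to the MMS partition, rather than relative to $v_j(\divGoods)$. Each removed bag $B$ consists of whole goods worth less than $\frac{2\MMS_j}{2n-1}$ to $j$. Removing $B$ from every part of $j$'s MMS partition lowers each part's value by at most $v_j(B)$ in total. But we only need one part to survive, so a better accounting works on $\sum_\ell v_j(P_\ell)$. Each removed good $g$ lowers $\sum_\ell v_j(P_\ell)$ by $\sum_\ell v_j(g, P_{\ell,g})$, which may exceed $v_j(g)$ for nonlinear valuations.

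The fix is to use the cruder per-part bound. After $k\le n-1$ bags, each part $P_\ell$ restricted to the remaining goods retains at least $\MMS_j$ minus the value of goods removed. The total removed value is less than $\frac{2(n-1)\MMS_j}{2n-1}$, and this bound holds for every part simultaneously, since a part's loss is at most $\sum_{g\in\text{removed}} v_j(g)$. So every part keeps value more than $\MMS_j-\frac{2n-2}{2n-1}\MMS_j=\frac{\MMS_j}{2n-1}$. The same bound also holds for the full set of remaining goods.

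\textbf{Step 3 (conclusion).} By Step~2, while at least two agents remain, the remaining goods are worth more than $\frac{\MMS_j}{2n-1}$ to each remaining $j$. So bag-filling always reaches the threshold for some agent, and the loop never stalls. The last agent receives everything left, worth more than $\frac{\MMS_j}{2n-1}$ to her. Finally, I would double-check the edge case $\MMS_j=0$, where the guarantee is trivial, and confirm the count of at most $n-1$ removed bags.
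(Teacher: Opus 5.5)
Your proof is correct and is essentially the paper's argument. Bag-filling with whole goods makes each removed bag worth less than $\frac{2\MMS_j}{2n-1}$ to every remaining agent~$j$, so after at most $n-1$ bags the leftover goods are still worth at least $\frac{\MMS_j}{2n-1}$ to each remaining agent. The paper phrases this as a drop in reduced-instance MMS values, you phrase it via the parts of $j$'s MMS partition, and you also make explicit that the loop never stalls.

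One remark: the ``main obstacle'' you identify in Step~2 is illusory. The bound $v_j(\divGoods)\ge \MMS_j$ together with additivity already gives $v_j(\text{remaining}) = v_j(\divGoods) - v_j(\text{removed}) \geq \MMS_j - \frac{2(n-1)}{2n-1}\MMS_j = \frac{\MMS_j}{2n-1}$, which is exactly what your per-part detour proves.
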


\begin{proof}
    Given the $n$-agent instance, the first \verb|while|-condition in \cref{alg:1/(2n-1)-MMS:large-goods-condition} is evaluated as false due to the assumption in the \namecref{lem:1/(2n-1)-MMS:small-goods} statement that for all agents~$i \in N$ and goods~$g \in \divGoods$, $v_i(g) < \frac{\MMS_i}{2n - 1}$.
    \Cref{alg:1/(2n-1)-MMS} thus executes \crefrange{alg:1/(2n-1)-MMS:small-goods-begin}{alg:1/(2n-1)-MMS:small-goods-alloc} to process this instance.
    \Cref{alg:1/(2n-1)-MMS:bag-filling} adds one good at a time to an empty bundle~$B$ until there exists some agent~$i \in N$ such that $v_i(B) \geq \frac{\MMS_i}{2n - 1}$.
    As we have assumed that agents' valuations across goods are additive, such an agent~$i$ always exists.
    We allocate bundle~$B$ to agent~$i$ and remove both the agent and her goods from further consideration.
    Clearly, agent~$i$ is satisfied with her bundle and gets a utility of at least~$\frac{1}{2n - 1} \cdot \MMS_i$.

    For all agents~$j \in N$, we have $v_j(B) \leq \frac{2 \cdot \MMS_j}{2n - 1}$, because $v_j(g) < \frac{\MMS_j}{2n - 1}$ for all~$g \in \divGoods$.
    It implies that in the reduced instance $\langle N \setminus \{i\}, \divGoods \setminus B \rangle$, for all agents~$j \in N \setminus \{i\}$,
    \[
        \MMS_j(n-1, \divGoods \setminus B) \geq \left( 1 - \frac{2}{2n - 1} \right) \cdot \MMS_j.
    \]
    Put differently, whenever we remove one agent and their goods from consideration in \Cref{alg:1/(2n-1)-MMS}, the remaining agents' MMS values in the reduced instance decrease by at most~$\frac{2}{2n - 1}$ of their MMS values computed in the original ($n$-agent) instance.

    By the design of the algorithm, it is clear that each removed agent in \crefrange{alg:1/(2n-1)-MMS:small-goods-begin}{alg:1/(2n-1)-MMS:small-goods-alloc} is satisfied with their received bundle and gets a utility of at least $\frac{1}{2n - 1}$ of their own maximin share (computed in the original $n$-agent instance).
    It remains to show when there is only one agent left, this last agent is satisfied with receiving all remaining goods.
    This can be seen from the fact that all remaining goods is worth at least $1 - \frac{2 \cdot (n - 1)}{2n - 1} = \frac{1}{2n - 1}$ of her maximin share.
\end{proof}

In what follows, we investigate instances having large goods.

\begin{lemma}
    \label{lem:1/(2n-1)-MMS:large-goods}
    Given any instance with agents~$N = [n]$, goods~$\divGoods$ and $v_i(g) \geq \frac{1}{2n - 1} \cdot \MMS_i$ for some~$i \in N$ and~$g \in \divGoods$, \crefrange{alg:1/(2n-1)-MMS:large-goods-condition}{alg:1/(2n-1)-MMS:large-goods-alloc} of \Cref{alg:1/(2n-1)-MMS} computes an allocation $(A_i)_{i \in N'}$ for agents~$N' \subseteq N$ and a reduced instance with agents~$N'' = N \setminus N'$ and goods~$\divGoods'' = \divGoods \setminus \bigcup_{i \in N'} A_i$ such that
    \begin{enumerate}[label=(\roman*)]
        \item for each~$i \in N'$, $v_i(A_i) \geq \frac{\MMS_i}{2n - 1}$;
        \item for each~$i \in N''$ and~$g \in \divGoods''$, $v_i(g) < \frac{\MMS_i}{2n - 1}$;
        \item for each~$i \in N''$, $\MMS_i(|N''|, \divGoods'') \geq |N''| \cdot \frac{2 \cdot \MMS_i}{2n - 1}$.
    \end{enumerate}
\end{lemma}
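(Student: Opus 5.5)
The plan is to read off properties~(i) and~(ii) directly from Phase~I, and to prove~(iii) by tracking, for a fixed surviving agent~$j$, how her original MMS partition degrades over the iterations of the first \verb|while|-loop.

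\textbf{Properties (i) and (ii).} First I check that every iteration of \crefrange{alg:1/(2n-1)-MMS:large-goods-condition}{alg:1/(2n-1)-MMS:large-goods-alloc} removes a \emph{nonempty} set~$S$.
\begin{itemize}
\item Since $g^*$ is large for some remaining agent, some $x_{i,g^*}$ is finite and hence at most~$1$.
\item The agent with the smallest $x_{i,g^*}$ therefore fits, so the greedy prefix $S$ (agents sorted by $x_{i,g^*}$) is nonempty.
\item Feasibility of $A_{\cdot,g^*}$ follows from $\sum_{i\in S}x_{i,g^*}\le 1$.
\end{itemize}
Each $i\in S$ receives a fraction $x_{i,g^*}$ with $v_i(g^*,x_{i,g^*})\ge \MMS_i/(2n-1)$, which gives~(i). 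Property~(ii) is exactly the negation of the loop condition when the loop terminates. Consequently the number of iterations is at most $n-|N''|$, and since a large good exists initially, at least one iteration occurs, so $|N''|\le n-1$.

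\textbf{Setting up (iii).} Fix $j\in N''$ and an MMS partition $(P_1,\dots,P_n)$ of $j$ for the original instance; every bundle is worth at least $\MMS_j$ to~$j$. I maintain, through the iterations, a collection of bundles partitioning the current remaining goods. Each iteration that removes good~$g^*$ and agent set~$S$ should cost at most $|S|$ bundles, and every surviving bundle should lose less than $\MMS_j/(2n-1)$ in value for~$j$.

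\textbf{The iteration step.} Let $g^*$ and $S$ be as in the current iteration, and consider the bundles containing a fraction of $g^*$ of at least $x_{j,g^*}$.
\begin{itemize}
\item If $x_{j,g^*}=+\infty$, then $v_j(g^*)<\MMS_j/(2n-1)$, so no bundle is discarded and each bundle loses less than the threshold when $g^*$ is deleted.
\item Otherwise, condition~(iii) in \cref{alg:1/(2n-1)-MMS:identify-agents}, together with $x_{i,g^*}\le x_{j,g^*}$ for $i\in S$, gives $(|S|+1)\,x_{j,g^*}>1$. Hence at most $\lfloor 1/x_{j,g^*}\rfloor\le|S|$ bundles contain a fraction of $g^*$ that is at least $x_{j,g^*}$.
\item I discard those bundles and merge their remaining goods (other than $g^*$) into some surviving bundle; by monotonicity this does not hurt.
\item Each surviving bundle held less than $x_{j,g^*}$ of $g^*$, so by minimality of $x_{j,g^*}$ that part was worth less than $\MMS_j/(2n-1)$. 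Deleting $g^*$ therefore costs each surviving bundle less than the threshold.
\end{itemize}

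\textbf{Counting.} The number of bundles always stays at least the number of remaining agents, and excess bundles can be merged, so at the end I obtain $|N''|$ bundles of $\divGoods''$. Write $k=|N''|$. After at most $n-k$ iterations, each of these bundles is worth at least
\[
\left(1-\frac{n-k}{2n-1}\right)\MMS_j=\frac{n-1+k}{2n-1}\,\MMS_j\ge \frac{2k}{2n-1}\,\MMS_j,
\]
where the last inequality uses $k\le n-1$. This yields~(iii).

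The delicate step is the bundle-counting bound $\lfloor 1/x_{j,g^*}\rfloor\le |S|$. It relies on the ordering condition~(i) of \cref{alg:1/(2n-1)-MMS:identify-agents}, and it must be handled for the reduced partition at each iteration rather than only for the original one.
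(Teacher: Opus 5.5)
Your proposal is correct and is essentially the paper's argument. You fix agent $j$'s original MMS partition, use $(|S|+1)\,x_{j,g^*}>1$ to show that each iteration gives a threshold-worth share of $g^*$ to at most $|S|$ bundles while every other bundle loses less than $\frac{\MMS_j}{2n-1}$, and conclude from the number of iterations being at most $|N'|$ and from $|N''|\le n-1$. The only difference is that the paper applies a one-shot union bound to the original partition: at most $|N'|$ bundles are ever hit, so at least $|N''|$ remain untouched. This makes your dynamic merging, and the re-check on reduced partitions that you flagged as delicate, unnecessary, though your version is also valid.
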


\begin{proof}
It is easy to see that the first two properties hold by the design of \Cref{alg:1/(2n-1)-MMS}.
Specifically, each agent~$i \in S$ removed in \cref{alg:1/(2n-1)-MMS:large-goods-alloc} gets a utility of at least~$\frac{\MMS_i}{2n - 1}$ by receiving a fraction of good.
Moreover, when \crefrange{alg:1/(2n-1)-MMS:large-goods-condition}{alg:1/(2n-1)-MMS:large-goods-alloc} of \Cref{alg:1/(2n-1)-MMS} terminate, clearly, each remaining agent~$i \in N''$ values each remaining good~$g \in \divGoods''$ less than~$\frac{\MMS_i}{2n - 1}$.
It remains to show the last property.

Fix any agent~$i \in N''$ and her MMS partition of the original goods~$\divGoods$.
For ease of expression, assume that the \verb|while|-loop in \crefrange{alg:1/(2n-1)-MMS:large-goods-condition}{alg:1/(2n-1)-MMS:large-goods-alloc} executes~$s$ iterations, and in each iteration~$j \in [s]$, good~$g_j^*$ is allocated to agents~$N_j$.
That is, $N' = \bigcup_{j \in [s]} N_j$.
By the design of the algorithm (see \cref{alg:1/(2n-1)-MMS:identify-agents}), in each iteration~$j \in [s]$, agent~$i$ needs at least a fraction $\frac{1}{|N_j|}$ of good~$g_j^*$ to reach value $\frac{\MMS_i}{2n - 1}$.
Put differently, for at least $n - |N_j|$ bundles in the MMS partition, goods~$g_j^*$ contributes a value of less than $\frac{\MMS_i}{2n - 1}$ to each of the bundle.
As a result, when \crefrange{alg:1/(2n-1)-MMS:large-goods-condition}{alg:1/(2n-1)-MMS:large-goods-alloc} terminate, there are at least $n - |N'| = |N''|$ bundles, each of which is worth at least
\[
\MMS_i - \sum_{j \in [s]} \frac{\MMS_i}{2n - 1} = \left( 1 - \frac{|N'|}{2n - 1} \right) \cdot \MMS_i = \frac{(n - 1) + (n - |N'|)}{2n - 1} \cdot \MMS_i \geq \frac{2 \cdot |N''|}{2n - 1} \cdot \MMS_i,
\]
where the last transition follows from the fact that the \verb|while|-loop in \crefrange{alg:1/(2n-1)-MMS:large-goods-condition}{alg:1/(2n-1)-MMS:large-goods-alloc} executes at least once, and thus $n - 1 \geq |N''|$.
It follows that $\MMS_i(|N''|, \divGoods'') \geq |N''| \cdot \frac{2 \cdot \MMS_i}{2n - 1}$.
\end{proof}

\begin{lemma}
    \label{lem:1/(2n-1)-MMS:small-goods-reduced}
    Given any (reduced) instance with agents~$N''$ and goods~$\divGoods''$, \crefrange{alg:1/(2n-1)-MMS:small-goods-begin}{alg:1/(2n-1)-MMS:small-goods-alloc} of \Cref{alg:1/(2n-1)-MMS} computes an $\frac{1}{2n - 1}$-$\MMS$ allocation for the original instance $\langle N, M \rangle$.
\end{lemma}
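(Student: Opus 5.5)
The plan is to run the bag-filling argument of \Cref{lem:1/(2n-1)-MMS:small-goods} on the reduced instance $\langle N'', \divGoods'' \rangle$. Write $k = |N''|$. By \Cref{lem:1/(2n-1)-MMS:large-goods}, every $i \in N''$ values each $g \in \divGoods''$ at less than $\frac{\MMS_i}{2n-1}$, and
\[
\MMS_i(k, \divGoods'') \geq \frac{2k \cdot \MMS_i}{2n-1}.
\]
Every agent in $N'$ is already satisfied by property~(i) of that lemma. So it suffices to show two things: that every agent removed in the Phase~II loop gets at least $\frac{\MMS_i}{2n-1}$, and that the last agent also gets at least this much. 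The bound on her MMS in the reduced instance, together with additivity, gives $v_i(\divGoods'') \geq \MMS_i(k, \divGoods'') \geq \frac{2k\,\MMS_i}{2n-1}$ for every $i \in N''$.

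Step one is to show that each bag-filling round is well defined and satisfies its winner. Consider a round with $t \geq 2$ remaining agents, and suppose each remaining agent $j$ values the remaining goods at least $\frac{2t\,\MMS_j}{2n-1}$. Add whole goods one at a time; I treat whole goods only, since the bag is built from goods and the cap argument uses $v_j(g) < \frac{\MMS_j}{2n-1}$. Since the total value is at least $\frac{2t\,\MMS_j}{2n-1} > \frac{\MMS_j}{2n-1}$, some agent's value eventually reaches $\frac{\MMS_j}{2n-1}$. Take the first moment this happens. Just before the last good was added, every agent valued $B$ below $\frac{\MMS_j}{2n-1}$, and the added good is worth less than $\frac{\MMS_j}{2n-1}$ to everyone. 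Hence $v_j(B) < \frac{2\,\MMS_j}{2n-1}$ for all remaining $j$. The winner therefore lands in the required interval and is satisfied.

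Step two is the invariant. After the round, each remaining agent $j$ retains value at least
\[
\frac{2t\,\MMS_j}{2n-1} - \frac{2\,\MMS_j}{2n-1} = \frac{2(t-1)\,\MMS_j}{2n-1},
\]
which is the hypothesis for the next round with $t-1$ agents. Note that I track total value $v_j(\cdot)$ rather than the reduced MMS. Additivity makes subtracting the removed bag exact, whereas the MMS-decrease statement in \Cref{lem:1/(2n-1)-MMS:small-goods} needs a separate justification. By induction from $t = k$, when one agent remains she holds everything left, which she values at least $\frac{2\,\MMS_j}{2n-1} \geq \frac{\MMS_j}{2n-1}$.

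The main obstacle is confirming that additivity across goods suffices even though within-good valuations are non-linear. This is fine, because bags consist of whole remaining goods, and Phase~I removed the goods it touched entirely (leftover fractions are discarded, which only lowers the remaining value and is harmless since we lower-bound it). If $k = 1$ the loop is skipped, and the last agent's bound is immediate. If $k = 0$ there is nothing to prove.
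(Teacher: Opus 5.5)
Your argument is correct, but it is more direct than the paper's. The paper does not redo the bag-filling. It combines properties (ii) and (iii) of \Cref{lem:1/(2n-1)-MMS:large-goods} into $\frac{\MMS_i(|N''|,\divGoods'')}{2|N''|-1} \ge \frac{\MMS_i}{2n-1} > v_i(g)$, concludes that $\langle N'',\divGoods''\rangle$ is itself an all-small-goods instance, and invokes \Cref{lem:1/(2n-1)-MMS:small-goods} as a black box. It then compares the two scales.

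You instead re-run the bag-filling analysis with the thresholds $\frac{\MMS_i}{2n-1}$ that Phase~II actually uses. Your invariant is that each of the $t$ remaining agents values the remaining whole goods at least $\frac{2t\,\MMS_j}{2n-1}$, and additivity over whole goods makes this bookkeeping exact.

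The paper's route is shorter and reuses an existing lemma, but it is slightly loose. \Cref{lem:1/(2n-1)-MMS:small-goods} analyses bag-filling with thresholds computed from the instance it is run on, whereas Phase~II keeps the original thresholds. So the intermediate claim of a $\frac{1}{2|N''|-1}$-MMS allocation for the reduced instance need not be what the algorithm produces, and one must additionally note that the smaller original thresholds still suffice. Your version matches the algorithm as written. It also avoids the reduced-MMS decrease step, which in \Cref{lem:1/(2n-1)-MMS:small-goods} tacitly relies on the integrality argument of \Cref{lemma:reduced instance MMS}. It even leaves the last agent $\frac{2\,\MMS_j}{2n-1}$.

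A minor point: $v_i(\divGoods'') \ge \MMS_i(k,\divGoods'')$ comes from monotonicity, not from additivity alone, since every bundle of a partition is dominated good-by-good by all of $\divGoods''$.
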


\begin{proof}
By \Cref{lem:1/(2n-1)-MMS:large-goods}, it follows that for each agent $i \in N''$:
\begin{enumerate}[label=(\roman*)]
\item $v_i(g) < \frac{\MMS_i}{2n - 1}$ for each $g \in \divGoods''$; and
\item $\MMS_i(|\agents''|, \divGoods'') \geq |N''| \cdot \frac{2 \cdot \MMS_i}{2n - 1}$.
\end{enumerate}
Combining the two results yields that, for each $g\in \divGoods''$,
\[
\frac{\MMS_i(|N''|,M'')}{2 |N''| - 1} > \frac{\MMS_i(|N''|,M'')}{2 |N''|} \geq \frac{\MMS_i}{2n - 1} \geq v_i(g),
\]
which implies that all goods in $\divGoods''$ are small under $\langle N'', \divGoods'' \rangle$.

However, by \Cref{lem:1/(2n-1)-MMS:small-goods}, \crefrange{alg:1/(2n-1)-MMS:small-goods-begin}{alg:1/(2n-1)-MMS:small-goods-alloc} guarantees a $\frac{1}{2|N''| - 1}$-MMS allocation for $\langle N'', \divGoods'' \rangle$. Since $\frac{\MMS_i(|N''|,M'')}{2 |N''| - 1} \geq \frac{\MMS_i}{2n - 1}$, this allocation is also a $\frac{1}{2n - 1}$-MMS allocation for the original instance $\langle N, \divGoods \rangle$.
\end{proof}

The correctness of \Cref{thm:MMS:1/(2n-1)-MMS} follows from \Cref{lem:1/(2n-1)-MMS:small-goods,lem:1/(2n-1)-MMS:large-goods,lem:1/(2n-1)-MMS:small-goods-reduced}.

\subsection{A Small Number of Agents}

We now consider cases involving a small number of agents who have one-breakpoint piecewise-constant valuations over the goods.

Our main result in this subsection is the following: For $n \leq 3$ agents with one-breakpoint piecewise-constant valuations, there always exists a $\frac{1}{n}$-MMS allocation.
It is worth noting that our algorithmic results match the upper bounds stated in \Cref{thm:MMS:n-agent-impossibility}.
For $n \leq 3$ agents, we give a tight approximation ratio for MMS.

Our algorithmic result holds trivially when $n = 1$.
In what follows, we will first present an algorithm that can always output a $\frac{1}{2}$-MMS allocation for two agents in \Cref{sec:MMS:piecewise-const:2-agent}.
Built up on this $2$-agent-$\frac{1}{2}$-MMS algorithm, we will then present an algorithm that always outputs a $\frac{1}{3}$-MMS allocation for three agents in \Cref{sec:MMS:piecewise-const:3-agent}.
To establish the algorithms, we present some auxiliary lemmas which will be proven useful.
These lemmas hold for any number of agents, and connect an original instance and its reduced instance where some goods are removed.
Our first two lemmas, at a high level, establish the relation between agents' MMS values in a reduced instance and those in the original instance.

\begin{lemma}
    \label{lemma:reduced instance ok n agents}
    Let $I = \langle \agents = [n], \divGoods \rangle$ be an instance, $\divGoods^r \subseteq \divGoods$ be a subset of goods, and $I' = \langle \agents, \divGoods \setminus \divGoods^r \rangle$ be an instance where goods~$\divGoods^r$ are removed from~$\divGoods$.
    Then, for any agent~$i \in \agents$,
    \[
        \MMS_i(n, \divGoods \setminus \divGoods^r) + v_i(\divGoods^r) \geq \MMS_i(n, \divGoods).
    \]
\end{lemma}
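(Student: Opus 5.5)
Our plan is to start from an MMS partition of agent~$i$ on the full good set~$\divGoods$ and restrict it to $\divGoods \setminus \divGoods^r$. Let $(P_1, \dots, P_n)$ be an MMS partition of agent~$i$ for $\langle n, \divGoods \rangle$, so that $v_i(P_k) \geq \MMS_i(n, \divGoods)$ for every $k \in [n]$. Define $P_k'$ by zeroing out in $P_k$ every coordinate that corresponds to a good in $\divGoods^r$ and keeping the rest unchanged. Then $(P_1', \dots, P_n')$ is a feasible $n$-partition of $\divGoods \setminus \divGoods^r$: for each remaining good the fractions still sum to~$1$. Hence $\MMS_i(n, \divGoods \setminus \divGoods^r) \geq \min_k v_i(P_k')$.

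The key step is to bound how much each bundle loses. By additivity across goods,
\[
v_i(P_k) - v_i(P_k') = \sum_{g \in \divGoods^r} v_i(g, P_{k,g}).
\]
By monotonicity, $v_i(g, P_{k,g}) \leq v_i(g, 1) = v_i(g)$. Summing gives $v_i(P_k) - v_i(P_k') \leq \sum_{g \in \divGoods^r} v_i(g) = v_i(\divGoods^r)$, where $v_i(\divGoods^r)$ is the value of the integral bundle of all goods in~$\divGoods^r$.

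It follows that for every~$k$,
\[
v_i(P_k') \geq v_i(P_k) - v_i(\divGoods^r) \geq \MMS_i(n, \divGoods) - v_i(\divGoods^r).
\]
Taking the minimum over~$k$ and combining with the first paragraph yields the claim.

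There is no real obstacle here. The only point needing care is that a bundle of the MMS partition may hold only a fraction of a removed good, so the loss must be bounded via monotonicity ($v_i(g,p) \le v_i(g)$) rather than by an exact value. We also rely on the maximum in the definition of $\MMS_i$ being attained; if it were not, we would use an $\varepsilon$-optimal partition and let $\varepsilon \to 0$.
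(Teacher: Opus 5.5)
Your proposal is correct and follows essentially the same route as the paper: restrict agent~$i$'s MMS partition of $\divGoods$ to $\divGoods \setminus \divGoods^r$, and use additivity plus monotonicity to bound each bundle's loss by $v_i(\divGoods^r)$. Your per-good monotonicity argument and your remark about the maximum being attained make explicit two points the paper leaves implicit.
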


\begin{proof}
Denote by~$\divGoods' \coloneqq \divGoods \setminus \divGoods^r$ the goods in instance~$I'$.
Fix any agent~$i \in \agents$.
Let $(P_1, P_2, \dots, P_n)$ be an MMS partition of agent~$i$ for goods~$\divGoods$ in instance~$I$, i.e., $\MMS_i(n, \divGoods) = \min_{j \in [n]} v_i(P_j)$.
For all~$j \in [n]$, each good (or a fraction of it) in bundle~$P_j$ must be from either~$\divGoods'$ or~$\divGoods^r$.
Let $P_j = P'_j \cup P^r_j$, where $P'_j = P_j \cap \divGoods'$ and $\divGoods^r_j = \divGoods_j \cap \divGoods^r$.
Since agents are assumed to have additive utilities across goods and~$\divGoods'$ and~$\divGoods^r$ are disjoint, we have
\[
v_i(P_j) = v_i(P'_j \cup P^r_j) = v_i(P'_j) + v_i(P^r_j).
\]
Moreover, since $v_i(P^r_j) \leq v_i(\divGoods^r)$, we have
\[
v_i(P_j) = v_i(P'_j) + v_i(P^r_j) \leq v_i(P'_j) + v_i(\divGoods^r).
\]
We are now ready to establish the inequality stated in the \namecref{lemma:reduced instance ok n agents}:
\begin{align*}
\MMS_i(n, \divGoods) & = \min_{j \in [n]} v_i(P_j) = \min_{j \in [n]} \big( v_i(P'_j) + v_i(P^r_j) \big) \\
& \leq \min_{j \in [n]} (v_i(P'_j) + v_i(\divGoods^r)) \\
& = v_i(\divGoods^r) + \min_{j \in [n]} v_i(P'_j) \\
& \leq v_i(\divGoods^r) + \max_{(P'_1, \dots, P'_n) \in \Pi_n(\divGoods')} \min_{j \in [n]} v_i(P'_j) \\
& = v_i(\divGoods^r) + \MMS_i(n, \divGoods'),
\end{align*}
where the last inequality follows from $\divGoods' = \bigcup_{j \in [n]} P'_j$.
\end{proof}

\begin{lemma}
    \label{lemma:reduced instance MMS}
    Consider any instance $\langle [n], \divGoods \rangle$ and fix~$i \in [n]$.
    Let $\divGoods' \subseteq \divGoods$ be an integral subset of goods such that $v_i(\divGoods') \leq \alpha \cdot \MMS_i(n, \divGoods)$.
    Then, $\MMS_i(n-1, \divGoods \setminus \divGoods') \geq (1 - \alpha) \cdot \MMS_i(n, \divGoods)$.
\end{lemma}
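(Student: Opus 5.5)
The plan is to reduce the claim to \Cref{lemma:reduced instance ok n agents} plus a monotonicity fact about the number of bundles. I would deliberately avoid the standard argument for indivisible goods, which removes $\divGoods'$ from the bundles of an MMS partition and charges each bundle for what it loses. Under non-linear valuations the pieces of a single good spread over several bundles can have total value far exceeding $v_i(g)$. With threshold valuations, for instance, every $\frac{1}{n}$-piece may already be worth the full $v_i(g)$. So a per-bundle charging argument would lose a factor of $n$.

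\textbf{Step 1.} Apply \Cref{lemma:reduced instance ok n agents} with $\divGoods^r = \divGoods'$, still with $n$ bundles. Since $\divGoods'$ is integral, $\divGoods \setminus \divGoods'$ is a well-defined set of goods. The lemma gives
\[
\MMS_i(n, \divGoods \setminus \divGoods') \geq \MMS_i(n, \divGoods) - v_i(\divGoods') \geq (1-\alpha)\cdot \MMS_i(n, \divGoods).
\]
The key point is that \Cref{lemma:reduced instance ok n agents} only uses the bound $v_i(P^r_j) \leq v_i(\divGoods^r)$ for each bundle separately. That bound holds by monotonicity and additivity across goods, since $P^r_j$ takes at most the full fraction of each good in $\divGoods^r$. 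It never sums over bundles, so the non-linearity problem above does not arise.

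\textbf{Step 2.} Show $\MMS_i(n-1, \divGoods'') \geq \MMS_i(n, \divGoods'')$ for any set of goods $\divGoods''$; I will apply it with $\divGoods'' = \divGoods \setminus \divGoods'$. Take an $n$-partition $(Q_1,\dots,Q_n)$ of $\divGoods''$ attaining $\MMS_i(n,\divGoods'')$. Merge $Q_{n-1}$ and $Q_n$ into one bundle $Q'$ by adding their fraction vectors coordinatewise. This is feasible because the fractions of each good over all bundles sum to $1$.

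For each good $g$, monotonicity of $v_i(g,\cdot)$ gives $v_i(g, Q_{n-1,g}+Q_{n,g}) \geq v_i(g, Q_{n,g})$. Summing over goods yields $v_i(Q') \geq v_i(Q_n) \geq \min_j v_i(Q_j)$. Hence $(Q_1,\dots,Q_{n-2},Q')$ is an $(n-1)$-partition whose minimum bundle value is at least $\MMS_i(n,\divGoods'')$.

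\textbf{Conclusion.} Chaining Step 2 and Step 1 gives
\[
\MMS_i(n-1,\divGoods\setminus\divGoods') \geq \MMS_i(n,\divGoods\setminus\divGoods') \geq (1-\alpha)\MMS_i(n,\divGoods).
\]
The main obstacle is noticing that the naive per-bundle accounting fails in this model. Once one routes the argument through \Cref{lemma:reduced instance ok n agents}, which subtracts $v_i(\divGoods')$ only once, the remaining steps are routine.
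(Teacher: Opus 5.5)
Your proof is correct and is essentially the paper's argument. The paper takes an MMS partition of $\divGoods$, bounds each bundle's loss from removing $\divGoods'$ by $v_i(\divGoods') \le \alpha\cdot\MMS_i(n,\divGoods)$, and then merges two bundles; that is exactly your Step~1, routed through \Cref{lemma:reduced instance ok n agents}, followed by your Step~2. Note, however, that the per-bundle charging you say you are avoiding is precisely what that lemma and the paper's proof do, and it loses nothing in this model; only an argument that sums the losses across bundles would fail, and neither proof does that.
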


\begin{proof}
Consider agent~$i$'s MMS partition $\mathcal{P} = (P_1, \dots, P_n)$ of goods~$\divGoods$.
Since~$\divGoods'$ contains goods integrally, it contributes at most $\alpha \cdot \MMS_i(n, \divGoods)$ to each of the~$n$ bundles in~$\mathcal{P}$.
Therefore, after removing all goods in~$\divGoods'$, we have an $n$-partition of goods~$\divGoods \setminus \divGoods'$ such that all~$n$ bundles are valued at least $(1 - \alpha) \cdot \MMS_i(n, \divGoods)$.
Combining any two bundles would result in an $(n-1)$-partition of~$\divGoods \setminus \divGoods'$ such that all~$n-1$ bundles are valued at least $(1-\alpha) \cdot \MMS_i(n, \divGoods)$, meaning that $\MMS_i(n-1, \divGoods \setminus \divGoods') \geq (1 - \alpha) \cdot \MMS_i(n, \divGoods)$.
\end{proof}

Our next lemma shows that if we remove a bundle of goods whose thresholds are greater than~$0.5$ (for some agent) and whose value is upper bounded, then the agent still have enough value for the remaining goods.

\begin{lemma}
    \label{lemma:small goods partition generalisation}
    Consider an instance $\langle [n], \divGoods \rangle$ and let~$i \in [n]$.
    If there exists $G \subseteq \divGoods$ such that $c_{i, g} > 0.5$ for all~$g \in G$ and $v_i(G) < \alpha \cdot \MMS_i(n, \divGoods)$ where $\alpha \geq 0$ is some non-negative real number, then $v_i(\divGoods \setminus G) \geq \frac{n - \alpha}{n} \cdot \MMS_i(n, \divGoods)$.
\end{lemma}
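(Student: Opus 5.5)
Fix agent~$i$ and one of her MMS partitions $\mathcal{P} = (P_1, \dots, P_n)$ of~$\divGoods$. Every bundle satisfies $v_i(P_j) \geq \MMS_i(n, \divGoods)$, and additivity across goods lets us split $v_i(P_j) = v_i(P_j \cap G) + v_i(P_j \cap (\divGoods \setminus G))$, where $P_j \cap G$ denotes the fractions of goods of~$G$ placed in~$P_j$. The key structural observation is that a good $g \in G$ has threshold $c_{i,g} > 0.5$, so at most one bundle of~$\mathcal{P}$ can contain a fraction of~$g$ that reaches the threshold (two such fractions would sum to more than~$1$). Hence each $g \in G$ contributes positive value $v_i(g)$ to at most one bundle and~$0$ to all others. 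Summing over~$j$ gives
\[
\sum_{j \in [n]} v_i(P_j \cap G) \leq v_i(G) < \alpha \cdot \MMS_i(n, \divGoods).
\]

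Next we bound the value of $\divGoods \setminus G$ from below by summing over bundles. Here the non-linear valuations need care, since for a non-linear $v_i(g,\cdot)$ the sum $\sum_j v_i(g, x_{j})$ over the pieces $x_j$ of~$g$ in the different bundles need not be at most $v_i(g)$. So we cannot simply write $\sum_j v_i(P_j \cap (\divGoods\setminus G)) \le v_i(\divGoods \setminus G)$ for general valuations. This is the main obstacle. I would resolve it using the one-breakpoint piecewise-constant structure assumed throughout this subsection (the thresholds $c_{i,g}$ are only defined there). Under that structure we reason bundle by bundle instead of summing. The total $\sum_j v_i(P_j \cap (\divGoods \setminus G))$ is at least $n \cdot \MMS_i - \alpha \MMS_i$. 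Since the statement compares with $v_i(\divGoods\setminus G)$ itself, we need an inequality in the correct direction. A good $g \notin G$ may indeed be counted in several bundles (when $c_{i,g} \le 0.5$), which would make the inequality go the wrong way. I therefore expect the intended argument to be a simpler one. Choose a bundle $P_{j^*}$ that receives no positive contribution from~$G$; such a bundle exists when $\alpha < n$ and the number of goods in~$G$ with positive value is small. In general, though, the count of bundles touched by~$G$ is not bounded by~$\alpha$.

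The cleaner route I would actually commit to is the following. Since every bundle of~$\mathcal{P}$ is a sub-bundle of the whole goods, $v_i(\divGoods \setminus G) \geq v_i(P_j \cap (\divGoods \setminus G))$ for every~$j$ by monotonicity. We then average: pick $j^*$ minimizing $v_i(P_{j^*} \cap G)$. By the disjointness-of-contribution bound above, $v_i(P_{j^*} \cap G) \leq \frac{1}{n}\sum_j v_i(P_j\cap G) < \frac{\alpha}{n}\MMS_i(n,\divGoods)$. Therefore
\[
v_i(\divGoods\setminus G) \geq v_i(P_{j^*}) - v_i(P_{j^*}\cap G) > \Bigl(1 - \frac{\alpha}{n}\Bigr)\MMS_i(n,\divGoods) = \frac{n-\alpha}{n}\MMS_i(n,\divGoods),
\]
which is the claim. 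The only delicate step is the threshold argument showing that each $g\in G$ is counted in at most one bundle. That step relies solely on $c_{i,g}>0.5$ and the feasibility of the partition, so it works for arbitrary fractional splits.
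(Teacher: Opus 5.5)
Your final argument (the route you commit to) is correct and is essentially the paper's proof. The paper argues by contradiction: it bounds $\MMS_i(n, G) \le v_i(G)/n$ using the same observation that each good of $G$ reaches its threshold in at most one bundle, then invokes \Cref{lemma:reduced instance ok n agents} to get $\MMS_i(n,\divGoods) \le \MMS_i(n,G) + v_i(\divGoods \setminus G)$; your choice of the bundle $P_{j^*}$ with least $G$-contribution, combined with monotonicity, proves that inequality directly. The exploratory middle paragraph (summing the $\divGoods\setminus G$ parts, or looking for a bundle untouched by $G$) is never used and should be deleted.
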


\begin{proof}
Suppose for the sake of contradiction that $v_i(\divGoods \setminus G) < \frac{n - \alpha}{n} \cdot \MMS_i(n, \divGoods)$.
Consider $I' = \langle N, G \rangle$ where goods~$\divGoods \setminus G$ are removed.
Since all goods in~$G$ have threshold larger than~$0.5$ for agent~$i$, these goods will not be split between multiple bundles in agent~$i$'s MMS partition for goods~$G$ and thus, $\MMS_i(n, G) \leq v_i(G) / n$.
Due to the assumption in the \namecref{lemma:small goods partition generalisation} statement, we have $\MMS_i(n, G) < \frac{\alpha}{n} \cdot \MMS_i(n, \divGoods)$.
Finally, by \Cref{lemma:reduced instance ok n agents},
\[
\MMS_i(n, \divGoods) \leq \MMS_i(n, G) + v_i(\divGoods \setminus G) < \frac{\alpha}{n} \cdot \MMS_i(n, \divGoods) + \frac{n - \alpha}{n} \cdot \MMS_i(n, \divGoods) = \MMS_i(n, \divGoods),
\]
a contradiction, as desired.
\end{proof}

We now introduce the concepts of \emph{compatible goods} and \emph{contested goods}.
Given any instance $\langle [n], \divGoods \rangle$, for each~$g \in \divGoods$, good~$g$ is said to be a \emph{compatible good} if $\sum_{i \in [n]} c_{i, g} \leq 1$; otherwise, the good~$g$ is said to be a \emph{contested good}.
In words, compatible goods are those where the thresholds for all agents sum to at most~$1$.
Intuitively, each compatible good~$g$ can be allocated to all agents and each agent~$i$ gets a utility of~$v_i(g)$.
By applying \Cref{lemma:reduced instance ok n agents} to compatible goods, we formalize below the intuition that it suffices to focus on allocating contested goods.

Given any instance $I = \langle \agents = [n], \divGoods \rangle$, let~$G^c$ (resp., $G$) be the set of all compatible (resp., contested) goods in instance~$I$ such that $\divGoods = G^c \cup G$.
Consider a reduced instance $I' = \langle \agents, G \rangle$ with the same set of agents~$\agents$ and all compatible goods~$G^c$ being removed.
Suppose we are given an $\alpha$-MMS allocation $\alloc' = (A'_1, A'_2, \dots, A'_n)$ for instance~$I'$, where~$\alpha \in [0, 1]$.
Put differently, for all agents~$i \in N$,
\[
    v_i(A'_i) \geq \alpha \cdot \MMS_i(n, G).
\]
We construct an allocation~$\alloc = (A_1, A_2, \dots, A_n)$ as follows:
\begin{itemize}
    \item for each~$g \in G$ and $i \in \agents$, let $A_{i, g} = A'_{i, g}$; and
    \item for each~$g \in G^c$, let $A_{i, g} = c_{i, g}$ for all~$i \in [n-1]$ and $A_{n, g} = 1 - \sum_{i \in [n-1]} c_{i, g}$.
\end{itemize}
It can be verified that for each~$i \in N$,
\begin{align*}
v_i(A_i) = v_i(A'_i) + \sum_{g \in G^c} v_i(g) = v_i(A'_i) + v_i(G^c) & \geq \alpha \cdot \MMS_i(n, G) + v_i(G^c) \\
& \geq \alpha \cdot \big( \MMS_i(n, G) + v_i(G^c) \big) \\
& \geq \alpha \cdot \MMS_i(n, \divGoods),
\end{align*}
where the last transition is due to \Cref{lemma:reduced instance ok n agents}.
As a result, allocation~$\alloc$ is an $\alpha$-MMS allocation for instance~$I$.
To summarize, we have the following \namecref{corollary:remove compatible}.

\begin{lemma}
    \label{corollary:remove compatible}
    Let~$G$ be the set of all contested goods in $I = \langle \agents, \divGoods \rangle$.
    For any~$\alpha \in [0, 1]$, if there exists an $\alpha$-MMS allocation for instance $\langle N, G \rangle$, an $\alpha$-MMS allocation for instance~$I$ always exists.
\end{lemma}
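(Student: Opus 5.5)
The argument is essentially the construction given just before the statement, so the plan is to package it as a proof. Start from an $\alpha$-MMS allocation $\alloc' = (A'_1, \dots, A'_n)$ of the reduced instance $\langle N, G \rangle$. By definition, $v_i(A'_i) \geq \alpha \cdot \MMS_i(n, G)$ for every $i \in N$. Then extend $\alloc'$ to the compatible goods $G^c = \divGoods \setminus G$ as follows: for every $g \in G^c$, give each agent $i \in [n-1]$ exactly the fraction $c_{i,g}$, and give agent $n$ the remainder $1 - \sum_{i \in [n-1]} c_{i,g}$.

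Three checks follow, in order.

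\textbf{Feasibility.} Because $g$ is compatible, $\sum_{i \in [n]} c_{i,g} \leq 1$. Hence agent $n$'s share is nonnegative and at least $c_{n,g}$, and every good is fully allocated. Goods in $G$ keep their allocation from $\alloc'$, which is already feasible.

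\textbf{Utility on compatible goods.} Under one-breakpoint piecewise-constant valuations, receiving at least the threshold yields the full value. So each agent $i$ obtains $v_i(g)$ from every $g \in G^c$, and by additivity $v_i(A_i) = v_i(A'_i) + v_i(G^c)$.

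\textbf{Chain of inequalities.} We have $v_i(A_i) \geq \alpha \cdot \MMS_i(n, G) + v_i(G^c)$. Since $\alpha \leq 1$ and $v_i(G^c) \geq 0$, this is at least $\alpha \bigl( \MMS_i(n, G) + v_i(G^c) \bigr)$. Finally, apply \Cref{lemma:reduced instance ok n agents} with $\divGoods^r = G^c$, which gives $\MMS_i(n, G) + v_i(G^c) \geq \MMS_i(n, \divGoods)$. This yields $v_i(A_i) \geq \alpha \cdot \MMS_i(n, \divGoods)$.

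There is no real obstacle. The only points needing care are the use of $\alpha \leq 1$ in the middle inequality and the observation that compatibility is exactly what makes agent $n$'s remainder meet her own threshold.
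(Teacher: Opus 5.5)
Your proposal is correct and follows the paper's own argument. You extend the reduced-instance allocation by giving each agent her threshold $c_{i,g}$ of every compatible good (agent $n$ gets the remainder), and then chain $v_i(A'_i) + v_i(G^c) \geq \alpha\bigl(\MMS_i(n,G) + v_i(G^c)\bigr) \geq \alpha \cdot \MMS_i(n,\divGoods)$ using $\alpha \leq 1$ and \Cref{lemma:reduced instance ok n agents}; your explicit feasibility check that compatibility guarantees agent $n$'s remainder is at least $c_{n,g}$ fills in a step the paper leaves as ``it can be verified.''
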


\subsubsection{Two Agents}
\label{sec:MMS:piecewise-const:2-agent}

\begin{algorithm}[t]
    \caption{\twoAgentOneHalfMMS$([2], \divGoods)$}
    \label{alg:2-agent-1/2MMS}
    \DontPrintSemicolon

    \KwIn{An instance~$\langle \agents = [2], \divGoods \rangle$, where both agents have one-breakpoint piecewise-constant valuations.}
    \KwOut{A $\frac{1}{2}$-MMS allocation~$\alloc$.}

    $A_1, A_2 \gets \emptyset$\;

    \tcp{Allocate compatible goods.}

    $G^c \gets \{g \in \divGoods \colon c_{1, g} + c_{2, g} \leq 1\}$\;
    \lForEach{$g \in G^c$}{ \label{alg:2-agent-1/2MMS:compatible-goods-alloc}
        $A_{1, g} \gets c_{1, g}$; $A_{2, g} \gets 1 - c_{1, g}$ 
    }

    \tcp{Allocate the remaining contested goods.}

    $G \gets \divGoods \setminus G^c$\;
    \lForEach{$i \in \agents$}{
        Compute $\MMS_i \coloneqq \MMS_i(2, G)$.
    }

    \eIf{$\exists g^* \in G, i \in \agents$ such that $v_i(g^*) \geq \MMS_i / 2$}{ \label{alg:2-agent-1/2MMS:large-good-if}
    \If{$v_{3-i}(g^*) \geq \MMS_{3-i} / 2$}{
        Relabel~$i$ as the agent such that $c_{i, g^*} \leq c_{3-i, g^*}$.
    }
    $A_i \gets A_i \cup \{g^*\}$; $A_{3-i} \gets A_{3-i} \cup \big( G \setminus \{g^*\} \big)$\; \label{alg:2-agent-1/2MMS:large-good-alloc}
    }{
    \lForEach{$i \in \agents$}{
        $G_i \gets \{g \in G \colon c_{i, g} > 0.5\}$
    }

    \uIf{$\exists i \in \agents$ such that $v_i(G_i) \geq \MMS_i$}{
    Add one good integrally at a time from~$G_i$ to an empty bundle~$G^*$ until $v_i(G^*) \in [\MMS_i / 2, \MMS_i)$.\;
    Let agent~$3-i$ pick her preferred bundle out of~$G^*$ and $G \setminus G^*$, and allocate the other bundle to~$i$.\;
    }\lElse{
        $A_i \gets A_i \cup \big( G \setminus G_i \big)$; $A_{3-i} \gets A_{3-i} \cup G_i$
    }
    }

    \Return{$\alloc = (A_1, A_2)$}
\end{algorithm}

We are now ready to present an algorithm which always returns a $\frac{1}{2}$-MMS allocation for two agents having one-breakpoint piecewise-constant valuations.
The pseudocode can be found in \Cref{alg:2-agent-1/2MMS}.
In this algorithm, we first allocate all compatible goods~$G^c$ with the respective thresholds to both agents, followed by allocating the remaining contested goods $G$ by breaking into cases based on whether there exists some (contested) good~$g \in G$ and agent~$i$ such that $v_i(g) > \MMS_i(2, G) / 2$.
We demonstrate \Cref{alg:2-agent-1/2MMS} in \Cref{ex:2-agent}.

\begin{example}[Demonstration of \Cref{alg:2-agent-1/2MMS}]
\label{ex:2-agent}
Consider an instance with $n=2$ agents and $m=5$ goods, with the valuation functions of both agents to all goods being piecewise-constant with 1 breakpoint as follows.
\begin{figure}[!h]
\begin{subfigure}[b]{.18\linewidth}
\centering
\begin{tikzpicture}
                    \draw[->] (0,0) -- (2,0) node[right] {$p$};
                    \draw[->] (0,0) -- (0,2) node[above] {$v_1(p)$};
                    \draw (0.3,2pt) -- ++ (0,-4pt) node[below] {$0.15$};
                    \draw (2pt,0.7) -- ++ (-4pt,0) node[left]  {$0.35$};
                    \draw[domain=0:0.3] plot (\x, {0});
                    \draw[domain=0.3:2] plot (\x, {0.7});
                    \node[draw, circle, fill, inner sep=1pt] at (0.3,0.7) {};
                    \node[draw, circle, inner sep=1pt] at (0.3,0) {};
                \end{tikzpicture}
                \caption{$v_1(g_1)$}
            \end{subfigure}
            \hfill
            \begin{subfigure}[b]{.18\linewidth}
                \centering
                \begin{tikzpicture}
                    \draw[->] (0,0) -- (2,0) node[right] {$p$};
                    \draw[->] (0,0) -- (0,2) node[above] {$v_2(p)$};
                    \draw (0.6,2pt) -- ++ (0,-4pt) node[below] {$0.3$};
                    \draw (2pt,0.6) -- ++ (-4pt,0) node[left]  {$0.3$};
                    \draw[domain=0:0.6] plot (\x, {0});
                    \draw[domain=0.6:2] plot (\x, {0.6});
                    \node[draw, circle, fill, inner sep=1pt] at (0.6,0.6) {};
                    \node[draw, circle, inner sep=1pt] at (0.6,0) {};
                \end{tikzpicture}
                \caption{$v_1(g_2)$}
            \end{subfigure}
            \hfill
            \begin{subfigure}[b]{.18\linewidth}
                \centering
                \begin{tikzpicture}
                    \draw[->] (0,0) -- (2,0) node[right] {$p$};
                    \draw[->] (0,0) -- (0,2) node[above] {$v_3(p)$};
                    \draw (1.5,2pt) -- ++ (0,-4pt) node[below] {$0.75$};
                    \draw (2pt,0.8) -- ++ (-4pt,0) node[left]  {$0.4$};
                    \draw[domain=0:1.5] plot (\x, {0});
                    \draw[domain=1.5:2] plot (\x, {0.8});
                    \node[draw, circle, fill, inner sep=1pt] at (1.5,0.8) {};
                    \node[draw, circle, inner sep=1pt] at (1.5,0) {};
                \end{tikzpicture}
                \caption{$v_1(g_3)$}
            \end{subfigure}
            \hfill
            \begin{subfigure}[b]{.18\linewidth}
                \centering
                \begin{tikzpicture}
                    \draw[->] (0,0) -- (2,0) node[right] {$p$};
                    \draw[->] (0,0) -- (0,2) node[above] {$v_4(p)$};
                    \draw (0.4,2pt) -- ++ (0,-4pt) node[below] {$0.2$};
                    \draw (2pt,0.9) -- ++ (-4pt,0) node[left]  {$0.45$};
                    \draw[domain=0:0.4] plot (\x, {0});
                    \draw[domain=0.4:2] plot (\x, {0.9});
                    \node[draw, circle, fill, inner sep=1pt] at (0.4,0.9) {};
                    \node[draw, circle, inner sep=1pt] at (0.4,0) {};
                \end{tikzpicture}
                \caption{$v_1(g_4)$}
            \end{subfigure}
            \hfill
            \begin{subfigure}[b]{.18\linewidth}
                \centering
                \begin{tikzpicture}
                    \draw[->] (0,0) -- (2,0) node[right] {$p$};
                    \draw[->] (0,0) -- (0,2) node[above] {$v_5(p)$};
                    \draw (1.6,2pt) -- ++ (0,-4pt) node[below] {$0.8$};
                    \draw (2pt,0.5) -- ++ (-4pt,0) node[left]  {$0.25$};
                    \draw[domain=0:1.6] plot (\x, {0});
                    \draw[domain=1.6:2] plot (\x, {0.5});
                    \node[draw, circle, fill, inner sep=1pt] at (1.6,0.5) {};
                    \node[draw, circle, inner sep=1pt] at (1.6,0) {};
                \end{tikzpicture}
                \caption{$v_1(g_5)$}
            \end{subfigure}
            \begin{subfigure}[b]{.18\linewidth}
                \centering
                \begin{tikzpicture}
                    \draw[->] (0,0) -- (2,0) node[right] {$p$};
                    \draw[->] (0,0) -- (0,2) node[above] {$v_1(p)$};
                    \draw (1.2,2pt) -- ++ (0,-4pt) node[below] {$0.6$};
                    \draw (2pt,0.2) -- ++ (-4pt,0) node[left]  {$0.1$};
                    \draw[domain=0:1.2] plot (\x, {0});
                    \draw[domain=1.2:2] plot (\x, {0.2});
                    \node[draw, circle, fill, inner sep=1pt] at (1.2,0.2) {};
                    \node[draw, circle, inner sep=1pt] at (1.2,0) {};
                \end{tikzpicture}
                \caption{$v_2(g_1)$}
            \end{subfigure}
            \hfill
            \begin{subfigure}[b]{.18\linewidth}
                \centering
                \begin{tikzpicture}
                    \draw[->] (0,0) -- (2,0) node[right] {$p$};
                    \draw[->] (0,0) -- (0,2) node[above] {$v_2(p)$};
                    \draw (1.8,2pt) -- ++ (0,-4pt) node[below] {$0.9$};
                    \draw (2pt,0.7) -- ++ (-4pt,0) node[left]  {$0.35$};
                    \draw[domain=0:1.8] plot (\x, {0});
                    \draw[domain=1.8:2] plot (\x, {0.7});
                    \node[draw, circle, fill, inner sep=1pt] at (1.8,0.7) {};
                    \node[draw, circle, inner sep=1pt] at (1.8,0) {};
                \end{tikzpicture}
                \caption{$v_2(g_2)$}
            \end{subfigure}
            \hfill
            \begin{subfigure}[b]{.18\linewidth}
                \centering
                \begin{tikzpicture}
                    \draw[->] (0,0) -- (2,0) node[right] {$p$};
                    \draw[->] (0,0) -- (0,2) node[above] {$v_3(p)$};
                    \draw (0.8,2pt) -- ++ (0,-4pt) node[below] {$0.4$};
                    \draw (2pt,0.4) -- ++ (-4pt,0) node[left]  {$0.2$};
                    \draw[domain=0:0.8] plot (\x, {0});
                    \draw[domain=0.8:2] plot (\x, {0.4});
                    \node[draw, circle, fill, inner sep=1pt] at (0.8,0.4) {};
                    \node[draw, circle, inner sep=1pt] at (0.8,0) {};
                \end{tikzpicture}
                \caption{$v_2(g_3)$}
            \end{subfigure}
            \hfill
            \begin{subfigure}[b]{.18\linewidth}
                \centering
                \begin{tikzpicture}
                    \draw[->] (0,0) -- (2,0) node[right] {$p$};
                    \draw[->] (0,0) -- (0,2) node[above] {$v_4(p)$};
                    \draw (1.7,2pt) -- ++ (0,-4pt) node[below] {$0.85$};
                    \draw (2pt,0.8) -- ++ (-4pt,0) node[left]  {$0.4$};
                    \draw[domain=0:1.7] plot (\x, {0});
                    \draw[domain=1.7:2] plot (\x, {0.8});
                    \node[draw, circle, fill, inner sep=1pt] at (1.7,0.8) {};
                    \node[draw, circle, inner sep=1pt] at (1.7,0) {};
                \end{tikzpicture}
                \caption{$v_2(g_4)$}
            \end{subfigure}
            \hfill
            \begin{subfigure}[b]{.18\linewidth}
                \centering
                \begin{tikzpicture}
                    \draw[->] (0,0) -- (2,0) node[right] {$p$};
                    \draw[->] (0,0) -- (0,2) node[above] {$v_5(p)$};
                    \draw (0.6,2pt) -- ++ (0,-4pt) node[below] {$0.3$};
                    \draw (2pt,0.9) -- ++ (-4pt,0) node[left]  {$0.45$};
                    \draw[domain=0:0.6] plot (\x, {0});
                    \draw[domain=0.6:2] plot (\x, {0.9});
                    \node[draw, circle, fill, inner sep=1pt] at (0.6,0.9) {};
                    \node[draw, circle, inner sep=1pt] at (0.6,0) {};
                \end{tikzpicture}
                \caption{$v_2(g_5)$}
            \end{subfigure}

            \caption{Valution functions $v_i(g_j)$ for agents $i\in[2]$ and goods $j\in[5]$.}\label{fig:example-two-agent-1/2-MMS-piecewise-constant}
        \end{figure}

        It can be checked that the MMS of the two agents are $1.35$ and $1.05$ respectively, and we proceed to find the $1/2$-MMS allocation by \Cref{alg:2-agent-1/2MMS}. Firstly, note that $g_1$ is compatible and we allocate $0.15\cdot g_1$ to agent $1$ and $0.85\cdot g_1$ to agent $2$. Then, we observe that the MMS of both agents is $1$~under the reduced instance $I'=([2], \{g_2, g_3, g_4, g_5\})$. Further, no goods are valued larger than $0.5\,\MMS$ for both agents under the reduced instance $I'$, and so we find $G_1 = \{g_3, g_5\}$, $G_2 = \{g_2, g_4\}$. Clearly, $v_1(G_1) = 0.65 \geq 1/2\MMS_1$, and thus by iteratively adding goods from $G_1$ to $G^*$ we find that $G^* = \{g_3, g_5\}$, and $v_1(G^*) \geq 1/2\MMS_1$. Finally, we have that $v_2(G^*) = 0.65$ and $v_2(\divGoods\setminus G^*) = 0.75$, and so we allocate $\divGoods\setminus G^*$ to agent $2$ and $G^*$ to agent $1$.

        Therefore, the final allocation would be $A_1 = \{0.15\cdot g_1, g_3, g_5\}$ and $A_2 = \{0.85\cdot g_1, g_2, g_4\}$. One can check that $v_2(A_1) = 1$ and $v_2(A_2) = 0.85$, and the allocation $A = \langle A_1, A_2\rangle$ is a $1/2$-MMS allocation.
    \end{example}

\begin{theorem}
    \label{thm:2-agent:1/2-MMS-one-breakpoint}
    For $n = 2$ agents and one-breakpoint piecewise-constant valuations, \Cref{alg:2-agent-1/2MMS} returns a $\frac{1}{2}$-MMS allocation.
\end{theorem}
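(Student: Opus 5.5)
By \Cref{corollary:remove compatible}, it suffices to show that the allocation the algorithm produces on the contested goods $G$ is $\frac12$-MMS with respect to $\MMS_i = \MMS_i(2,G)$. The compatible goods are handed out at the thresholds in \cref{alg:2-agent-1/2MMS:compatible-goods-alloc}, which is exactly the construction preceding that lemma. The only thing to check there is $A_{2,g} = 1 - c_{1,g} \geq c_{2,g}$, which holds because $c_{1,g}+c_{2,g}\le 1$. I then follow the case structure of the algorithm. A fact used throughout: every contested good satisfies $c_{1,g}+c_{2,g}>1$, so in any 2-partition at most one bundle can reach a given agent's threshold for $g$ only if $c_{i,g}>0.5$.

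\textbf{Case 1: a large good $g^*$ exists}, i.e.\ $v_i(g^*)\ge \MMS_i/2$. Agent $i$ gets $g^*$ whole and is satisfied. For agent $3-i$, if $v_{3-i}(g^*) < \MMS_{3-i}/2$, then \Cref{lemma:reduced instance ok n agents}, or more directly the fact that $g^*$ contributes at most $v_{3-i}(g^*)$ to one bundle of her MMS partition, shows that the other bundle, with $g^*$ removed from it, still has value at least $\MMS_{3-i}-v_{3-i}(g^*) > \MMS_{3-i}/2$. Hence $v_{3-i}(G\setminus\{g^*\})\ge \MMS_{3-i}/2$. If instead both agents value $g^*$ at least half their MMS, relabel so that $c_{i,g^*}\le c_{3-i,g^*}$. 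Since $g^*$ is contested, $c_{3-i,g^*}>1/2$. So in agent $3-i$'s MMS partition, $g^*$ contributes positive value to at most one bundle, and the other bundle lies entirely in $G\setminus\{g^*\}$ with value at least $\MMS_{3-i}$.

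\textbf{Case 2: no large goods}, so $v_i(g)<\MMS_i/2$ for all $g \in G$ and both $i$. In the first subcase some agent $i$ has $v_i(G_i)\ge \MMS_i$. Adding goods of $G_i$ one at a time, each worth less than $\MMS_i/2$, the running value must at some point lie in $[\MMS_i/2,\MMS_i)$, so $G^*$ is well defined. Agent $3-i$ chooses her preferred bundle and therefore gets at least $\frac12 v_{3-i}(G)\ge \frac12\MMS_{3-i}$. Agent $i$ is fine if she receives $G^*$. If she receives $G\setminus G^*$, I apply \Cref{lemma:small goods partition generalisation} with $G^*$ (all thresholds $>0.5$) and $\alpha=1$, giving $v_i(G\setminus G^*)\ge \frac{2-1}{2}\MMS_i$.

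In the remaining subcase $v_i(G_i)<\MMS_i$ for both agents. \Cref{lemma:small goods partition generalisation} with $\alpha=1$ gives agent $i$ value $v_i(G\setminus G_i)\ge \MMS_i/2$. The main obstacle is agent $3-i$, who receives $G_i$ (goods with $c_{i,g}>0.5$) and must value it at least $\MMS_{3-i}/2$. The key observation I would use is that $G\setminus G_i$ consists of goods with $c_{i,g}\le 0.5$. Since these goods are contested, $c_{3-i,g}>1-c_{i,g}\ge 0.5$, so $G\setminus G_i\subseteq G_{3-i}$. Then $v_{3-i}(G\setminus G_i)\le v_{3-i}(G_{3-i})<\MMS_{3-i}$, and applying \Cref{lemma:small goods partition generalisation} to agent $3-i$ with the set $G\setminus G_i$ and $\alpha=1$ yields $v_{3-i}(G_i)\ge \MMS_{3-i}/2$.

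I also need to verify that the indices in the final \texttt{else} branch are well defined: the algorithm fixes some $i$ there, and the argument above is symmetric in the choice. The delicate points to double-check are this inclusion argument and the strictness of inequalities in the bag-filling step.
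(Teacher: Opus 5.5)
Your proof is correct and follows the paper's argument essentially step for step: you reduce to contested goods via \Cref{corollary:remove compatible}, handle the large-good case using contestedness ($c_{3-i,g^*}>1/2$ after relabelling, or $v_{3-i}(g^*)<\MMS_{3-i}/2$), and in the small-good case use bag-filling together with \Cref{lemma:small goods partition generalisation} with $\alpha=1$. Your inclusion $G\setminus G_i\subseteq G_{3-i}$ in the last branch is just the paper's observation that $G\setminus G_i$ and $G\setminus G_{3-i}$ are disjoint, and your explicit bound that the chooser gets at least $v_{3-i}(G)/2\ge \MMS_{3-i}/2$ fills in a step the paper leaves implicit.
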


\begin{proof}
    \Cref{alg:2-agent-1/2MMS} starts by allocating all compatible goods~$G^c$ in \cref{alg:2-agent-1/2MMS:compatible-goods-alloc}.
    By \Cref{corollary:remove compatible}, it suffices to show that \Cref{alg:2-agent-1/2MMS} finds a $1/2$-MMS allocation in the reduced instance $I' = \langle N, G = \divGoods \setminus G^c\rangle$ which contains only contested goods.
    For notational convenience, in the remainder of this proof, let $\MMS_i$ denote $\MMS_i(\agents, G)$, i.e., the maximin share in the reduced instance~$I'$.

    We distinguish cases based on whether there exists good~$g^* \in G$ and~$i \in \agents$ such that $v_i(g^*) \geq \MMS_i / 2$.
    Suppose that the \verb|if|-statement in \cref{alg:2-agent-1/2MMS:large-good-if} is evaluated as true, i.e., such a good~$g^*$ and agent~$i$ exists.
    We first consider the case where $v_{3-i}(g^*) \geq \frac{\MMS_{3-i}}{2}$, and assume without loss of generality that $c_{i, g^*} \leq c_{3-i, g^*}$ (relabel the agents if needed).
    Since good~$g^*$ is contested, $c_{3-i, g^*} > 0.5$, which implies that $v_{3-i}(G \setminus \{g^*\}) \geq \MMS_{3-i}$.
    We now consider the other case where $v_{3-i}(g^*) < \frac{\MMS_{3-i}}{2}$.
    It can be verified that
    \[\textstyle
        v_{3-i}(G \setminus \{g^*\}) = v_{3-i}(G) - v_{3-i}(g^*) \geq \MMS_{3-i} - \frac{\MMS_{3-i}}{2} = \frac{\MMS_{3-i}}{2}.
    \]
    In either case, allocating good~$g^*$ to agent~$i$ and all remaining goods~$G \setminus \{g^*\}$ to the other agent in \cref{alg:2-agent-1/2MMS:large-good-alloc} gives a $\frac{1}{2}$-MMS allocation for instance~$I'$, as desired.

    Finally, we consider the scenario where for all~$i \in \agents$ and~$g \in G$, $v_i(g) < \MMS_i / 2$.
    For each~$i \in N$, let $G_i \subseteq G$ consist of goods where agent~$i$ has thresholds larger than~$0.5$.
    There are two cases:
    \begin{enumerate}[label=(\roman*)]
        \item $v_i(G_i) \geq \MMS_i$ for some agent~$i \in N$, and
        \item $v_i(G_i) < \MMS_i$ for both agents.
    \end{enumerate}
    In the first case, because all goods~$g \in G$ follow that $v_i(g) < \MMS_i / 2$, we can keep adding goods~$g \in G_i$ to find a subset $G^* \subseteq G_i$ such that $v_i(G^*) \in \big[ \frac{\MMS_i}{2}, MMS_i \big)$.
    By \Cref{lemma:small goods partition generalisation}, $v_i(G \setminus G^*) \geq \frac{1}{2} \cdot \MMS_i$.
    Therefore, allocating the preferred bundle between~$G^*$ and~$G \setminus G^*$ to agent~$3-i$ and the other bundle to agent~$i$ is a $1/2$-MMS allocation for instance~$I'$.
    In the second case, we have $v_i(G_i) < \MMS_i$ for both agents~$i \in N$.
    By \Cref{lemma:small goods partition generalisation}, this implies $v_i(G \setminus G_i) \geq \MMS_i / 2$, where $G \setminus G_i$ is the set of goods such that agent~$i$ has thresholds at most~$0.5$.
    However, since all goods in~$G$ are contested, the sets $G \setminus G_i$ and $G \setminus G_{3-i}$ are mutually exclusive.
    This means that allocating $G \setminus G_i$ to agent~$i$ and all remaining goods to the other agent satisfies $1/2$-MMS in instance~$I'$.

    In conclusion, \Cref{alg:2-agent-1/2MMS} finds a $1/2$-MMS allocation in the reduced instance~$I'$.
    Moreover, due to \Cref{corollary:remove compatible}, \Cref{alg:2-agent-1/2MMS} finds a $\frac{1}{2}$-MMS allocation for the two agents in the original instance.
\end{proof}

\subsubsection{Three Agents}
\label{sec:MMS:piecewise-const:3-agent}

Built upon our $2$-agent-$\frac{1}{2}$-MMS allocation algorithm, we proceed to present an algorithm that finds a $1/3$-MMS allocation for three agents with one-breakpoint piecewise-constant valuations.
The pseudocode is given as \Cref{alg:3-agent-1/3-MMS}.

\Cref{alg:3-agent-1/3-MMS} takes as input a set of agents~$\agents = [3]$ and a set of goods~$\divGoods = G^c \cup G$, where $G^c$ denotes the set of compatible goods and $G$ denotes the set of contested goods.
Following \Cref{corollary:remove compatible}, \Cref{alg:3-agent-1/3-MMS} starts by allocating all compatible goods~$G^c$ with the respective thresholds to the three agents.
It remains to show that our algorithm finds a $\frac{1}{3}$-MMS allocation in the reduced instance with agents~$\agents$ and contested goods~$G$.
Note also that in the following, for each agent~$i \in \agents$, $\MMS_i \coloneqq \MMS_i(3, G)$.

Based on whether there exists some agent~$i$ and good~$g \in G$ such that $v_i(g) \geq \MMS_i / 3$, our algorithm can be naturally broken into two components, with \crefrange{alg:3-agent-1/3-MMS:large-goods-begins}{alg:3-agent-1/3-MMS:large-goods-ends} processing instances with some large good and \crefrange{alg:3-agent-1/3-MMS:small-goods-begins}{alg:3-agent-1/3-MMS:small-goods-ends} processing remaining instances with only small goods.
Comparing with the $2$-agent case presented previously, our \Cref{alg:3-agent-1/3-MMS} is more intricate in the sense that besides agents' valuations for a single good or a set of goods, we also need to reason about the thresholds by which agents start having positive utilities for a given good.
Naturally, this is more complicated since each good can be allocated to any of the 7 non-empty subsets of agents $\{1, 2, 3\}$ based on the instance.

\begin{algorithm}[!htbp]
    \caption{$3$-Agent-$\frac{1}{3}$-MMS Allocation}
    \label{alg:3-agent-1/3-MMS}
    \DontPrintSemicolon

    \KwIn{Instance $I = \langle \agents = [3], \divGoods = G^c \cup G \rangle$, where agents have one-breakpont piecewise-constant valuations.}
    \KwOut{A $1/3$-MMS allocation~$\alloc$.}

    $A_1, A_2, A_3 \gets \emptyset$\;

    \lForEach{$g \in G^c$}{
        $A_{i, g} \gets c_{i, g} \forall i \in [n-1]$; $A_{n, g} \gets 1 - \sum_{i \in [n-1]} c_{i, g}$
    }

    \lForEach{$i \in \agents$}{
        Compute $\MMS_i \coloneqq \MMS_i(3, G)$.
    }

    \tcp{Process instances with some large goods.}
    \If{$\exists g^* \in G$ and $a \in \agents$ such that $v_a(g^*) \geq \frac{\MMS_a}{3}$}{ \label{alg:3-agent-1/3-MMS:large-goods-begins}
        Let~$S \subsetneq \agents$ be the set of agents such that $v_i(g^*) \geq \frac{\MMS_i}{3}$ for all~$i \in S$ and $\sum_{j \in S} c_{j, g^*} \leq 1$, breaking ties in favour of larger~$|S|$ and then smaller $\sum_{j \in S} c_{j, g^*}$.\;
        \lForEach{$i \in S$}{
            $A_{i, g^*} \gets c_{i, g^*}$
        }
        \leIf{$|S| = 2$}{
            $A_k \gets A_k \cup G \setminus \{g^*\}$, where $k \notin S$.\;
        }{
            Call \Cref{alg:2-agent-1/2MMS} on sub-instance~$\langle \agents \setminus S, G \setminus \{g^*\} \rangle$
        }
        \Return{$\alloc = (A_1, A_2, A_3)$} \label{alg:3-agent-1/3-MMS:large-goods-ends}
    }

    \tcp{Allocate small goods. Note, $\forall i \in \agents, v_i(g) < \frac{\MMS_i}{3}.$}
    \lForEach{$i \in \agents$}{ \label{alg:3-agent-1/3-MMS:small-goods-begins}
        $G_i \gets \{g \in G \colon c_{i, g} > 0.5\}$; $G'_i \gets G \setminus G_i$
    }

    \If{$\exists i \in \agents$ such that $v_i(G_i) \geq \MMS_i$}{
        Partitiong~$G$ into three integral bundles $G_1^*, G_2^*, G_3^*$ s.t. $v_i(G_1^*), v_i(G_2^*), v_i(G_3^*) \geq \frac{\MMS_i}{3}$.\;
        \eIf{$\exists j \neq i$ values two distinct bundles at least~$\frac{\MMS_j}{3}$}{
            \Return{the preferred bundle to~$k \in \agents \setminus \{i, j\}$, then the preferred bundle to~$j$, and finally the last bundle to~$i$.}
        }{
            Let~$G_a^*$ be the bundle s.t.\ $v_j(G_a^*) < \frac{\MMS_j}{3}$ for all~$j \neq i$.\;
            $A_i \gets G_a^*$\;
            \Return{\twoAgentOneHalfMMS$([3] \setminus \{i\}, G \setminus G^*_a)$}
        }
    }

    $\forall i, j \in \agents, G'_{\{i, j\}} \gets \{g \in G \colon c_{i, g} \leq 0.5 \land c_{j, g} \leq 0.5\}$\;
    \eIf{$\exists G'_{\{i, j\}}$ s.t.\ $v_i(G'_{\{i, j\}}) \geq \frac{\MMS_i}{3}$ and $v_j(G'_{\{i, j\}}) \geq \frac{\MMS_j}{3}$}{
    After relabelling the agents, let $G^* \subseteq G'_{\{i, j\}}$ be the bundle such that $v_i(G^*) \geq \frac{\MMS_i}{3}$, $v_j(G^*) \geq \frac{\MMS_j}{3}$, and $v_k(G^*) < \frac{2 \cdot \MMS_k}{3}$.\;
    $\forall g \in G^*$, $A_{i, g} \gets 0.5$ and $A_{j, g} \gets 0.5$; $A_k \gets G \setminus G^*$\;
    }{
    Pick any~$G'_{\{i, k\}}$ such that $v_k(G'_{\{i, k\}}) < \MMS_k / 3$.\;
    Iteratively move~$g \in G'_i$ to~$G^*_i$ until $v_i(G^*_i) \geq \MMS_i / 3$.\;
    $\forall g \in G^*_i$, $A_{i, g} \gets 0.5$ and $A_{j, g} \gets 0.5$\;
    Iteratively move~$g\in G'_j$ to~$G^*_j$ until $v_j(A_j \cup G^*_j) \geq \frac{\MMS_j}{3}$.\;
    $\forall g \in G^*_j$, $A_{j, g} \gets 0.5$ and $A_{k, g} \gets 0.5$\;
    Allocate everything else to~$k$.\;
    }

    \Return{$\alloc = (A_1, A_2, A_3)$} \label{alg:3-agent-1/3-MMS:small-goods-ends}
\end{algorithm}

\begin{theorem}
    \label{thm:3-agent:1/3-MMS-one-breakpoint}
    For $n = 3$ agents and one-breakpoint piecewise-constant valuations, \Cref{alg:3-agent-1/3-MMS} returns a $\frac{1}{3}$-MMS allocation.
\end{theorem}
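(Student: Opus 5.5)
The plan follows the case structure of \Cref{alg:3-agent-1/3-MMS}. First, by \Cref{corollary:remove compatible}, it is enough to prove that the allocation of the contested goods~$G$ is $\frac{1}{3}$-MMS with respect to $\MMS_i = \MMS_i(3, G)$. Throughout, I will use two facts about contested goods. (a) For any good with $c_{i,g} > 0.5$, agent~$i$'s MMS partition gets value from that good in at most one bundle. (b) More generally, the number of bundles in which~$i$ gets value from~$g$ is at most $\lfloor 1/c_{i,g} \rfloor$.

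\textbf{Large-good branch.} Agents in~$S$ receive exactly their thresholds of~$g^*$, so each gets at least $\MMS_i/3$.
\begin{itemize}
\item If $|S|=2$, let~$k$ be the third agent. If $v_k(g^*) < \MMS_k/3$, then \Cref{lemma:reduced instance MMS} with $\alpha = 1/3$ gives $v_k(G\setminus\{g^*\}) \ge \MMS_k(2, G\setminus\{g^*\}) \ge \frac{2}{3}\MMS_k$.
\item Otherwise~$k$ is also large for~$g^*$. Maximality of~$|S|$, together with the tie-break towards the smallest threshold sum, forces $c_{k,g^*}$ to be at least each threshold in~$S$ and the three thresholds to sum to more than~$1$. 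Hence $c_{k,g^*} > 1/3$, so~$g^*$ contributes to at most two bundles of $k$'s MMS partition. The untouched bundle certifies $v_k(G \setminus \{g^*\}) \ge \MMS_k$.
\item If $|S| \le 1$, I will show $\MMS_j(2, G \setminus \{g^*\}) \ge \frac{2}{3}\MMS_j$ for both remaining agents~$j$. Then \Cref{thm:2-agent:1/2-MMS-one-breakpoint} gives them $\frac{1}{2}\cdot\frac{2}{3} = \frac{1}{3}$. For~$j$ with $v_j(g^*) < \MMS_j/3$, this is \Cref{lemma:reduced instance MMS}. For~$j$ large on~$g^*$ but excluded, some threshold pair involving~$j$ sums above~$1$. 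I will argue that~$g^*$ is valuable in at most two of $j$'s bundles, and in most subcases in only one. Merging the bundles hit by~$g^*$ and deleting~$g^*$ should leave two bundles each worth at least $\frac{2}{3}\MMS_j$. The delicate subcase is $c_{j,g^*} \le 1/2$ with~$g^*$ split over two bundles. There I will use the fact that the merged bundle loses at most $v_j(g^*)$ while starting from $2\MMS_j$, together with the third untouched bundle.
\end{itemize}

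\textbf{Small-goods branch.} Here $v_i(g) < \MMS_i/3$ for all~$i$ and~$g$.
\begin{itemize}
\item If some~$i$ has $v_i(G_i) \ge \MMS_i$, greedy integral filling within~$G_i$ produces two bundles with values in $[\MMS_i/3, 2\MMS_i/3)$. \Cref{lemma:small goods partition generalisation} then guarantees that the remainder is worth at least $\MMS_i/3$ to~$i$.
\item If some~$j$ likes two bundles, the pick order (first~$k$, then~$j$, then~$i$) leaves everyone a bundle worth at least a third of their MMS.
\item Otherwise two bundles are each bad for one of the other agents, and I need a bundle~$G_a^*$ that is bad for both. Giving it to~$i$ and applying \Cref{lemma:reduced instance MMS} (integral removal, $\alpha = 1/3$) yields $\MMS_j(2, \cdot) \ge \frac{2}{3}\MMS_j$ for the other two agents. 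The two-agent algorithm then finishes the case.
\end{itemize}

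\textbf{Remaining subcase: every $v_i(G_i) < \MMS_i$.} \Cref{lemma:small goods partition generalisation} with $\alpha=1$ gives $v_i(G'_i) \ge \frac{2}{3}\MMS_i$. Goods in $G'_{\{i,j\}}$ can be split in half between~$i$ and~$j$ with both getting full value.
\begin{itemize}
\item In the first branch, I trim~$G^*$ greedily. Since goods are small, each of~$i$ and~$j$ stops in $[\MMS/3, 2\MMS/3)$. I will pick the ordering so that~$k$'s value stays below $\frac{2}{3}\MMS_k$, which is possible because~$k$'s share of each good is below $\MMS_k/3$. Then~$k$ keeps at least $\MMS_k/3$, using $v_k(G) \ge \MMS_k$ and the fact that~$k$ values the removed goods at most their full value.
\item In the else branch, a counting argument on the three pairwise sets gives the required pair. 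Contestedness means each good lies in at most one $G'_{\{\cdot,\cdot\}}$. The failure of the first branch forces every pair to be bad for one endpoint.
\end{itemize}

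\textbf{Main obstacle.} The last branch is where I expect the real difficulty: proving that the two sequential half-splits leave~$k$ at least $\MMS_k/3$ from the unused goods. I would try to bound $v_k(G^*_i \cup G^*_j)$ by at most $\frac{2}{3}\MMS_k$, by showing that~$k$ gains nothing from~$G^*_i$ beyond what lies in $G'_{\{i,k\}}$, which is worth less than $\MMS_k/3$ to~$k$. The same would follow for~$G^*_j$ by the stopping rule and smallness.
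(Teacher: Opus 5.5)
\textbf{Main gap: Case~2 (all $v_i(G_i)<\MMS_i$), first branch.} You assert that the greedy trimming of $G^*\subseteq G'_{\{i,j\}}$ can be ordered so that $v_k(G^*)<\frac23\MMS_k$ ``because $k$'s share of each good is below $\MMS_k/3$''. You then give halves of $G^*$ to $i,j$ and $G\setminus G^*$ to $k$. Smallness only controls the agent whose threshold stops the greedy process. Here $i$ and $j$ stop it, so $v_k(G^*)$ is unconstrained, and the allocation itself can fail. Take five groups of many identical tiny goods, each listed as thresholds $(c_1,c_2,c_3)$ and then total values $(v_1,v_2,v_3)$: $X_a$: $(0.4,0.4,0.3)$, $(0.3,0.3,0.95)$; $X_b$: $(0.5,0.5,0.6)$, $(0.05,0.05,0)$; $U_1$: $(0.5,0.6,0.6)$, $(0.7,0.45,0)$; $U_2$: $(0.6,0.5,0.6)$, $(0.45,0.7,0)$; $W$: $(0.6,0.6,0.6)$, $(0.45,0.45,0.15)$. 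All goods are contested and $\MMS_1=\MMS_2=\MMS_3=1$. For agents $1,2$, goods with threshold at most $\frac12$ count in at most two bundles, so bundle values sum to at most $2\cdot1.05+0.9=3$, and a balanced split attains $1$ each. Agent~$3$ gets $0.95$ from $X_a$ in every bundle plus $0.05$ from $W$. Moreover $v_i(G_i)=0.9,0.9,0.15<1$. The set $G'_{\{1,2\}}=X_a\cup X_b$ is worth $0.35$ to agents $1$ and $2$, whereas $G'_{\{1,3\}}=G'_{\{2,3\}}=X_a$ is worth only $0.3$ to agents~$1$ and~$2$. So $\{1,2\}$ is the only admissible pair. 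Any $G^*\subseteq X_a\cup X_b$ with $v_1(G^*)\ge\frac13$ contains more than $94\%$ of $X_a$. Hence $v_3(G^*)>0.89$ and $v_3(G\setminus G^*)<1.1-0.89<\frac13$.

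The missing idea is the paper's case split on $v_k(G^*\cap G'_k)$. First relabel so that $i$ is the agent who reaches $\frac13$ last, so $v_i(G^*)<\frac23\MMS_i$; only that agent is guaranteed to stop below $\frac23$, not both as you claim. If $v_k(G^*\cap G'_k)\ge\MMS_k/3$, give halves of $G^*$ to $j$ and $k$, and $G\setminus G^*$ to $i$. Otherwise $k$ takes $G\setminus G^*$, worth at least $v_k(G'_k)-v_k(G^*\cap G'_k)>\frac23\MMS_k-\frac13\MMS_k$. Note that this uses $v_k(G'_k)\ge\frac23\MMS_k$, not $v_k(G)\ge\MMS_k$. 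The pseudocode line you followed does ask for such a $G^*$, but the paper's proof does not rely on it.

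\textbf{Other problems.} Your plan for the last branch has the same flaw. Bounding $v_k(G^*_i\cup G^*_j)$ by $\frac23\MMS_k$ and subtracting from $v_k(G)$ fails, because $k$ may value goods in $G^*_i\cap G_k$, or in $G^*_j$, highly, and neither stopping rule involves $k$. The correct accounting stays inside $G'_k$. Agent $k$ receives half of every good of $G^*_j$ plus all unused goods. So on $G'_k$ (thresholds at most $\frac12$) she loses only $G'_k\cap G^*_i\subseteq G'_{\{i,k\}}$, which is worth less than $\MMS_k/3$, leaving at least $\frac23\MMS_k-\frac13\MMS_k$.

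Your claim that each contested good lies in at most one $G'_{\{\cdot,\cdot\}}$ is false for three agents; $X_a$ above lies in all three. Fortunately, a pair bad for one endpoint exists trivially once the first branch fails.

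In the large-good branch with $|S|=1$, your ``delicate subcase'' $c_{j,g^*}\le\frac12$ is vacuous. The tie-break gives $c_{s,g^*}\le c_{j,g^*}$ for $S=\{s\}$, so $c_{j,g^*}\le\frac12$ would make $\{s,j\}$ feasible, contradicting maximality of $|S|$. This is how the paper reduces to the lemma for thresholds above $\frac12$. Your argument for that subcase would not work anyway. Merging two bundles does not add the value of a good held above threshold in both, so the merged bundle need not be worth $2\MMS_j$. Indeed $\MMS_j(2,G\setminus\{g^*\})\ge\frac23\MMS_j$ can fail when $c_{j,g^*}\le\frac12$, e.g.\ $G=\{g^*,h\}$ with $c_{j,g^*}=0.4$, $c_{j,h}=1$ and unit values.

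The $|S|=2$ case and Case~1 are fine. In Case~1, your single greedy step followed by the small-goods lemma with $\alpha=\frac43$ is a slightly cleaner replacement for the paper's two-case integral-partition lemma.
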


Before showing \Cref{alg:3-agent-1/3-MMS} indeed finds a $1/3$-MMS allocation, we prove a few useful results.
The following two lemmas allow us to form an allocation when there is some ``large'' good(s) in the given instance, i.e., an agent~$i$ values a good~$g$ at least $\MMS_i / 3$.

\begin{lemma}
    \label{lemma:breakpoint larger than 1/3}
    Let~$i \in \agents$.
    If there exists~$g^* \in G$ such that $c_{i, g^*} > 1/3$, then $v_i(G \setminus g^*) \geq \MMS_i$.
\end{lemma}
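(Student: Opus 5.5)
Fix agent~$i$ and one of her MMS partitions $(P_1, P_2, P_3)$ of the contested goods~$G$, so that $v_i(P_j) \geq \MMS_i$ for every $j \in [3]$. The plan is to exploit the threshold $c_{i, g^*} > 1/3$: since the three fractions $P_{1, g^*}, P_{2, g^*}, P_{3, g^*}$ sum to~$1$, at most two of them can be at least $c_{i, g^*}$. Indeed, if all three were at least $c_{i, g^*} > 1/3$, their sum would exceed~$1$. Hence there is some bundle, say $P_1$ after relabelling, with $P_{1, g^*} < c_{i, g^*}$. In that bundle good~$g^*$ contributes $v_i(g^*, P_{1, g^*}) = 0$ under the one-breakpoint piecewise-constant valuation.

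Next, by additivity across goods, $v_i(P_1) = \sum_{g \in G \setminus \{g^*\}} v_i(g, P_{1, g}) \geq \MMS_i$. By monotonicity, $v_i(g, P_{1, g}) \leq v_i(g, 1) = v_i(g)$ for every good~$g$. Summing over $g \in G \setminus \{g^*\}$ gives $v_i(G \setminus \{g^*\}) \geq v_i(P_1) \geq \MMS_i$, which is the claim.

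There is no real obstacle; the only point needing care is the pigeonhole step. It uses that the threshold inequality is strict ($>1/3$), so three shares each reaching the threshold would sum to more than~$1$. It also uses the convention $v_i(g^*, p) = 0$ for $p < c_{i, g^*}$. We never use that $g^*$ is contested; the argument only needs agent~$i$'s own threshold.
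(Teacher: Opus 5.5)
Your proof is correct and is essentially the paper's argument: the paper also takes an MMS partition of $G$ and uses $c_{i,g^*} > 1/3$ to conclude that $g^*$ contributes value to at most two of the three bundles, so the remaining bundle is worth at most $v_i(G \setminus \{g^*\})$. The only difference is presentation: the paper argues by contradiction, while you argue directly and spell out the pigeonhole, additivity and monotonicity steps.
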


\begin{proof}
Suppose that $v_i(G \setminus g^*) < \MMS_i$ and consider an MMS partition $(P_1, P_2, P_3)$ of agent~$i$ for goods~$G$.
Since $c_{i, g^*} > 1/3$, it follows that~$g^*$ can contribute value~$v_i(g^*)$ to at most two bundles in the MMS partition.
However, this would imply that the last bundle in the MMS partition (say~$P_3$) must have value $v_i(P_3) \leq v_i(G \setminus g^*) < \MMS_i$, a contradiction.
\end{proof}

\begin{lemma}
    \label{lemma:breakpoint larger than 1/2}
    Let~$i \in \agents$.
    If there exists~$g^* \in G$ such that $c_{i, g^*} > 1/2$, then $\MMS_i(2, G \setminus \{g^*\}) \geq \MMS_i(3, G)$.
\end{lemma}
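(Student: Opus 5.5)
The plan is to start from an MMS partition $(P_1, P_2, P_3)$ of agent~$i$ for the goods~$G$, so $v_i(P_\ell) \geq \MMS_i(3, G)$ for every $\ell \in [3]$. From it we build a $2$-partition of $G \setminus \{g^*\}$ in which both bundles are worth at least $\MMS_i(3, G)$ to agent~$i$.

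The key observation is that, since $c_{i, g^*} > 1/2$, at most one bundle of the partition can hold a fraction of~$g^*$ that reaches the threshold. If two bundles both held at least $c_{i, g^*}$ of~$g^*$, the total fraction would exceed~$1$. So at least two bundles, say $P_1$ and $P_2$ after relabelling, hold a fraction of~$g^*$ strictly below $c_{i, g^*}$. Under one-breakpoint piecewise-constant valuations, $g^*$ therefore contributes value~$0$ to $P_1$ and to $P_2$. Deleting the $g^*$-part from $P_1$ and $P_2$ gives bundles $P_1', P_2'$ of $G \setminus \{g^*\}$, and by additivity $v_i(P_\ell') = v_i(P_\ell) \geq \MMS_i(3, G)$ for $\ell = 1, 2$.

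The remaining goods of $G \setminus \{g^*\}$ are exactly the pieces lying in $P_3$ minus its $g^*$-part. We merge these into one of the first two bundles, say $P_1'' = P_1' \cup (P_3 \setminus g^*)$. Merging pieces of the same good only increases the fraction held, so by monotonicity $v_i(P_1'') \geq v_i(P_1')$. Then $(P_1'', P_2')$ is a valid $2$-partition of $G \setminus \{g^*\}$ whose two bundles are each worth at least $\MMS_i(3, G)$. Hence $\MMS_i(2, G \setminus \{g^*\}) \geq \MMS_i(3, G)$.

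The only delicate step is the merging. Adding the leftover fractional pieces of a good to a bundle that already holds some of it must not decrease value; this holds because $v_i(g, \cdot)$ is non-decreasing and the fractions add up. The rest is bookkeeping, namely checking that every good other than~$g^*$ is fully covered by the two new bundles.
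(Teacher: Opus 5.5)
Your proof is correct and follows the same route as the paper: take an MMS $3$-partition for $G$, observe that $c_{i,g^*} > 1/2$ means $g^*$ contributes value to at most one bundle, and keep the other two bundles with $g^*$ removed. Your version is more explicit than the paper's, which leaves implicit the step of merging the leftover pieces of the third bundle into one of the two bundles; you justify that step by monotonicity.
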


\begin{proof}
Consider an MMS partition $(P_1, P_2, P_3)$ of agent~$i$ for goods~$G$.
Since $c_{i, g^*} > 1/2$, good~$g^*$ can contribute value~$v_i(g^*)$ to at most one bundle in the MMS partition.
This implies that after removing~$g^*$ from~$G$, there must be at least two bundles (say, $P_2, P_3$) such that $\min \{v_i(P_2 \setminus \{g^*\}), v_i(P_3 \setminus \{g^*\})\} \geq \MMS_i(3, G)$.
It follows that $\MMS_i(2, G \setminus \{g^*\}) \geq \MMS_i(3, G)$.
\end{proof}

We are now ready establish the correctness of \crefrange{alg:3-agent-1/3-MMS:large-goods-begins}{alg:3-agent-1/3-MMS:large-goods-ends} of \Cref{alg:3-agent-1/3-MMS}.

\begin{lemma}
    \label{lem:3-agent:1/3-MMS-one-breakpoint:large-goods}
    Given agents~$\agents = [3]$ and (contested) goods~$G$, if there exists some~$a \in \agents$ and some~$g^* \in G$ such that $v_a(g^*) \geq \frac{\MMS_a}{3}$, \crefrange{alg:3-agent-1/3-MMS:large-goods-begins}{alg:3-agent-1/3-MMS:large-goods-ends} of \Cref{alg:3-agent-1/3-MMS} returns a $\frac{1}{3}$-MMS allocation.
\end{lemma}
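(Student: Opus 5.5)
The plan is to split into cases on $|S|$, where $S$ is the set chosen by the algorithm. Note that $S$ is nonempty: agent~$a$ alone is feasible since $c_{a,g^*}\le 1$. Agents in $S$ each receive $c_{i,g^*}$ of $g^*$, hence value $v_i(g^*)\ge \MMS_i/3$. So in every case it suffices to show that the agents outside $S$ get at least $\MMS_j/3$ from what remains. Since $g^*$ is contested, $\sum_{j\in\agents} c_{j,g^*}>1$, so $S\neq \agents$ is automatic and $|S|\in\{1,2\}$.

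\textbf{Case $|S|=2$.} Let $k\notin S$. It suffices to show $v_k(G\setminus\{g^*\})\ge \MMS_k/3$. If $v_k(g^*)<\MMS_k/3$, then $v_k(G\setminus\{g^*\})\ge v_k(G)-v_k(g^*)> 2\MMS_k/3$, using $v_k(G)\ge \MMS_k$. Otherwise $k$ also values $g^*$ at least $\MMS_k/3$. By the tie-breaking rule (smaller $\sum_{j\in S}c_{j,g^*}$ among size-2 sets), swapping $k$ with either member of $S$ either yields a feasible pair with no smaller threshold sum or an infeasible pair. In both situations $c_{k,g^*}\ge c_{j,g^*}$ for $j\in S$. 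Because the total threshold sum exceeds $1$, we get $c_{k,g^*}>1/3$. \Cref{lemma:breakpoint larger than 1/3} then gives $v_k(G\setminus\{g^*\})\ge \MMS_k$.

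\textbf{Case $|S|=1$.} Say $S=\{a\}$, with remaining agents $j,k$. I want $\MMS_\ell(2,G\setminus\{g^*\})\ge \frac{2}{3}\MMS_\ell$ for $\ell\in\{j,k\}$. Then \Cref{thm:2-agent:1/2-MMS-one-breakpoint} applied to $\langle\{j,k\},G\setminus\{g^*\}\rangle$ gives each of them $\frac12\cdot\frac23\MMS_\ell=\MMS_\ell/3$. Note that \Cref{alg:2-agent-1/2MMS} first reallocates compatible goods itself, so applying it to the sub-instance is legitimate.

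\begin{itemize}
\item If $v_\ell(g^*)<\MMS_\ell/3$, \Cref{lemma:reduced instance MMS} with $\divGoods'=\{g^*\}$ and $\alpha=1/3$ gives $\MMS_\ell(2,G\setminus\{g^*\})\ge \frac23\MMS_\ell$.
\item If $v_\ell(g^*)\ge\MMS_\ell/3$, maximality of $|S|$ means $\{a,\ell\}$ is infeasible, i.e.\ $c_{a,g^*}+c_{\ell,g^*}>1$. I also want $c_{\ell,g^*}>1/2$, after which \Cref{lemma:breakpoint larger than 1/2} gives $\MMS_\ell(2,G\setminus\{g^*\})\ge\MMS_\ell$.
\end{itemize}

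The hard subcase is when $c_{\ell,g^*}\le 1/2$. By infeasibility, $c_{a,g^*}>1/2$. Then $\{\ell\}$ is an admissible singleton with a smaller threshold than $\{a\}$, and the tie-break toward smaller $\sum c$ among singletons would have chosen $\ell$ (or whichever qualifying agent has the smallest threshold) instead of $a$.

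So the argument should be phrased with $a$ being the qualifying agent of minimum threshold. Then every other qualifying $\ell$ has $c_{\ell,g^*}\ge c_{a,g^*}$ and $c_{a,g^*}+c_{\ell,g^*}>1$, which forces $c_{\ell,g^*}>1/2$. This is the main point to get right: the tie-breaking must be used to guarantee that every qualifying agent left out has threshold exceeding $1/2$ (in the size-1 case) or exceeding $1/3$ (in the size-2 case).
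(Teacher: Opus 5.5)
Your proposal is correct and follows essentially the same route as the paper. Both split on $|S|\in\{1,2\}$. When $|S|=2$, the leftover agent either has $v_k(G\setminus\{g^*\})>\frac{2}{3}\MMS_k$ or has $c_{k,g^*}>1/3$, in which case \Cref{lemma:breakpoint larger than 1/3} applies. When $|S|=1$, each remaining agent gets $\MMS_\ell(2,G\setminus\{g^*\})\geq\frac{2}{3}\MMS_\ell$ from \Cref{lemma:breakpoint larger than 1/2} or \Cref{lemma:reduced instance MMS}, and \Cref{alg:2-agent-1/2MMS} then finishes. Your explicit derivation of $c_{k,g^*}>1/3$ and $c_{\ell,g^*}>1/2$ from the tie-breaking rule fills in steps the paper only asserts.
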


\begin{proof}
    Let $S \subseteq \agents$ be the set of agents such that $v_i(g^*) \geq \frac{\MMS_i}{3}$ for all~$i \in S$ and $\sum_{j \in S} c_{j, g^*} \leq 1$, breaking ties in favour of larger~$|S|$ and then smaller~$\sum_{j \in S} c_{j, g^*}$.
    In other words, we can allocate a fraction~$c_{i, g^*}$ of good~$g^*$ to each agent~$i \in S$ and the agent receives a utility of at least~$\frac{\MMS_i}{3}$.
    Recall that we are allocating contested goods, meaning that good~$g^*$ can be allocated to at most two agents.
    Clearly, $|S| \geq 1$, as the \verb|if|-condition in \cref{alg:3-agent-1/3-MMS:large-goods-begins} is evaluated as true.

    Below, we distinguish two cases based on whether $|S| = 2$ or $|S| = 1$.
    When $|S| = 2$, since we break ties in favour of smaller $\sum_{j \in S} c_{j, g^*}$, the agent~$k \notin S$ must either have $c_{k, g^*} > 1/3$, or $v_k(g^*) < \frac{\MMS_k}{3}$.
    In the first case, \Cref{lemma:breakpoint larger than 1/3} implies that $v_k(G \setminus \{g^*\}) \geq \MMS_k$; and in the second case, we have $v_k(G\setminus \{g^*\}) > \MMS_k - \frac{\MMS_k}{3} = \frac{2\MMS_k}{3}$. This implies that allocating $G \setminus \{g^*\}$ would give a $1/3$-MMS allocation.

    When $|S| = 1$, let~$j$ and~$k$ be the agents not in~$S$.
    For any agent $a \in \{j, k\}$, we must have either $c_{a, g^*} > 1/2$ or $v_a(g^*) < \frac{\MMS_a}{3}$.
    In the first case, \Cref{lemma:breakpoint larger than 1/2} implies that under the reduced instance $I' = \langle \{j, k\}, G \setminus \{g^*\} \rangle$, $\MMS_a(2, G \setminus \{g^*\}) \geq \MMS_a(3, G)$; in the second case, since $v_a(g^*) < \frac{\MMS_a}{3}$, \Cref{lemma:reduced instance MMS} implies that $\MMS_a(2, G\setminus \{g^*\}) \geq \frac{2}{3}\MMS_a(3, G)$.
    Therefore in both cases, a $1/2$-MMS allocation under instance~$I'$ must correspond to an at least $1/3$-MMS allocation under the instance with all three agents and goods~$G$.
    As a result, allocating $G \setminus g^*$ to $j$ and $k$ using \Cref{alg:2-agent-1/2MMS} would give a $1/3$-MMS allocation.
    The conclusion follows.
\end{proof}

In the following, we are concerned with $3$-agent instances with only ``small'' goods.
Formally, for all agents~$i \in \agents$ and all goods~$g \in G$, we have $v_i(g) < \MMS_i / 3$.
For each agent~$i \in \agents$, let~$G_i$ (resp., $G_i'$) denote the set of goods~$g$ such that $c_{i, g} > 0.5$ (resp., $c_{i, g} \leq 0.5$).

First, we observe that when $v_i(G_i) \geq \MMS_i$ for some agent~$i$, then we can partition goods~$G$ into three integral bundles such that all three bundles are worth at least~$\MMS_i / 3$.

\begin{lemma}
    \label{lemma:integral partition}
    Let~$i \in [3]$ and suppose that $v_i(g) < \MMS_i / 3$ for all goods~$g \in G$.
    If $v_i(G_i) \geq \MMS_i$, then~$G$ can be partitioned into three integral bundles each of which is worth at least~$\MMS_i / 3$.
\end{lemma}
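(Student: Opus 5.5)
The plan is a bag-filling argument carried out only on the goods of~$G_i$. Integrality is then automatic, and the leftover goods go into the last bundle.

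Order the goods of~$G_i$ arbitrarily and add them one at a time, integrally, to an empty bundle~$G_1^*$. Stop as soon as $v_i(G_1^*) \geq \MMS_i/3$. Since every good satisfies $v_i(g) < \MMS_i/3$, the bundle overshoots by less than one good, so
\[
\frac{\MMS_i}{3} \leq v_i(G_1^*) < \frac{2\MMS_i}{3}.
\]
Because $v_i(G_i) \geq \MMS_i > \MMS_i/3$, this stopping point is reached before~$G_i$ is exhausted (the case $\MMS_i = 0$ is trivial).

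Next, fill a second bundle~$G_2^*$ from $G_i \setminus G_1^*$ in the same way. This requires $v_i(G_i \setminus G_1^*) \geq \MMS_i/3$, which holds since $v_i(G_i \setminus G_1^*) > \MMS_i - 2\MMS_i/3 = \MMS_i/3$. The result is $\MMS_i/3 \leq v_i(G_2^*) < 2\MMS_i/3$.

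Finally, set $G_3^* \coloneqq G \setminus (G_1^* \cup G_2^*)$. It contains all goods of~$G'_i$ together with the unused goods of~$G_i$, and every good is allocated integrally.

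It remains to show $v_i(G_3^*) \geq \MMS_i/3$, and this is the main step. Using only $v_i(G_1^* \cup G_2^*) < 4\MMS_i/3$ is too weak, because $v_i(G)$ may only slightly exceed $\MMS_i$. Instead I would apply \Cref{lemma:small goods partition generalisation} with $n = 3$ and the set $G_1^* \cup G_2^* \subseteq G_i$. All thresholds there exceed~$0.5$, and its value is below $\alpha \cdot \MMS_i$ with $\alpha = 4/3$. The lemma then gives
\[
v_i(G_3^*) = v_i\bigl(G \setminus (G_1^* \cup G_2^*)\bigr) \geq \frac{3 - 4/3}{3} \cdot \MMS_i = \frac{5}{9}\MMS_i \geq \frac{\MMS_i}{3}.
\]

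Before relying on this, I must check that \Cref{lemma:small goods partition generalisation} applies to the instance $\langle [3], G \rangle$ with $\MMS_i = \MMS_i(3, G)$. It does: the lemma holds for an arbitrary instance, and its key fact is that goods with threshold above~$0.5$ cannot contribute value to two bundles simultaneously. That is exactly the property of~$G_i$. This completes the three-bundle partition.
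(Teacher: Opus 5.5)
Your proof is correct, and it reaches the result by a somewhat different route from the paper's. Both arguments use the same two ingredients: bag-filling integral bundles out of $G_i$, and \Cref{lemma:small goods partition generalisation}. The difference is how they are combined.

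The paper splits into two cases on $v_i(G_i)$:
\begin{itemize}
\item If $v_i(G_i) \geq \frac{5}{3}\MMS_i$, it fills two bags from $G_i$ and bounds the leftover part of $G_i$ by direct subtraction ($\frac{5}{3} - \frac{4}{3} = \frac{1}{3}$). This case never invokes the lemma.
\item If $\MMS_i \leq v_i(G_i) < \frac{5}{3}\MMS_i$, it splits $G_i$ into $B$ and $G_i \setminus B$. It then applies the lemma to all of $G_i$ with $\alpha = \frac{5}{3}$, so $G \setminus G_i$ alone is worth at least $\frac{4}{9}\MMS_i$ and serves as the third bundle.
\end{itemize}

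You instead apply the lemma to the two filled bags $G_1^* \cup G_2^* \subseteq G_i$ with $\alpha = \frac{4}{3}$. This works however large $v_i(G_i)$ is, so the case distinction disappears. The third bundle also gets a stronger uniform guarantee of $\frac{5}{9}\MMS_i$. You correctly note that plain subtraction cannot give this, since $v_i(G)$ need not be close to $3\MMS_i$ when low-threshold goods count toward several bundles.

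The paper's first case has the minor merit of being self-contained. Your single-case argument is shorter. It also loses nothing for the later use in \Cref{lem:3-agent:1/3-MMS-one-breakpoint:small-goods}, which only needs three integral bundles each worth at least $\MMS_i/3$ to agent~$i$, not any particular structure of the third bundle.
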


\begin{proof}
First, consider the case where $v_i(G_i) \geq \frac{5}{3} \cdot \MMS_i$.
Since all goods are valued less than $\MMS_i / 3$, we could iteratively move (integral) goods from~$G_i$ into a bundle~$B$ until $v_i(B) \in \big[ \frac{\MMS_i}{3}, \frac{2 \cdot \MMS_i}{3} \big)$.
Following this manner, we could form two bundles~$B_1, B_2$ such that $\frac{\MMS_i}{3} \leq v_i(B_1), v_i(B_2) < \frac{2  \cdot \MMS_i}{3}$.
Let the set of goods that have not yet been allocated be~$G^*$.
Since $v_i(G_i) \geq \frac{5}{3} \cdot \MMS_i$, it follows that $v_i(G^*) \geq v_i(G_i) - \sum_{j = 1}^{2} v_i(B_j) > \frac{5}{3} \cdot \MMS_i - 2 \cdot \frac{2 \cdot \MMS_i}{3} = \frac{\MMS_i}{3}$.
Adding all goods~$G \setminus G_i$ to the last bundle gives a $3$-partition of~$G$ into integral bundles satisfying the claimed condition.

Next, consider the case where $\MMS_i \leq v_i(G_i) < \frac{5}{3} \cdot \MMS_i$.
Similarly, by iteratively moving goods from~$G_i$ to a bundle~$B$, we could form two bundles~$B$ and~$G_i \setminus B$ such that $v_i(B) \in \big[ \frac{\MMS_i}{3}, \frac{2 \cdot \MMS_i}{3} \big)$ and $v_i(G_i \setminus B) \geq \MMS_i - \frac{2 \cdot \MMS_i}{3} = \frac{\MMS_i}{3}$.
By \Cref{lemma:small goods partition generalisation}, we have $v_i(G \setminus G_i) \geq \frac{3 - 5/3}{3} \cdot \MMS_i = \frac{4}{9} \cdot \MMS_i$.
Then $B$, $G_i \setminus B$ and $G \setminus G_i$ form the desired integral partition of~$G$.
\end{proof}

We are now ready to establish the correctness of \crefrange{alg:3-agent-1/3-MMS:small-goods-begins}{alg:3-agent-1/3-MMS:small-goods-ends} of \Cref{alg:3-agent-1/3-MMS}.

\begin{lemma}
    \label{lem:3-agent:1/3-MMS-one-breakpoint:small-goods}
    Given agents~$\agents = [3]$ and (contested) goods~$G$, if for all goods~$g \in G$ and agents~$i \in \agents$, $v_i(g) < \MMS_i / 3$, \crefrange{alg:3-agent-1/3-MMS:small-goods-begins}{alg:3-agent-1/3-MMS:small-goods-ends}
    \Cref{alg:3-agent-1/3-MMS} returns a $1/3$-MMS allocation.
\end{lemma}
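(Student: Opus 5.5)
The proof follows the branches of \crefrange{alg:3-agent-1/3-MMS:small-goods-begins}{alg:3-agent-1/3-MMS:small-goods-ends} and checks in each that every agent~$i$ ends with value at least~$\MMS_i/3$. Throughout, I use the standing assumption $v_i(g) < \MMS_i/3$ for all~$i,g$, and the trivial fact $v_i(G) \geq \MMS_i$, which holds because any bundle of an MMS partition is contained in~$G$.

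\emph{Branch 1: $v_i(G_i) \geq \MMS_i$ for some~$i$.} By \Cref{lemma:integral partition} there is an integral partition $G_1^*, G_2^*, G_3^*$ in which every bundle is worth at least~$\MMS_i/3$ to~$i$.
\begin{itemize}
\item Suppose some $j \neq i$ values two bundles at least~$\MMS_j/3$. Since the three bundles partition~$G$ integrally, $k$'s favourite bundle is worth at least $v_k(G)/3 \geq \MMS_k/3$. After $k$ picks, at least one of $j$'s two acceptable bundles remains. Agent~$i$ is happy with any bundle.
\item Otherwise, each of $j$ and~$k$ values at most one bundle at least a third of its MMS. Hence some bundle~$G_a^*$ is worth less than a third to both of them, and $i$ takes it. Since $G_a^*$ is integral, \Cref{lemma:reduced instance MMS} with $\alpha = 1/3$ gives $\MMS_a(2, G \setminus G_a^*) \geq \frac{2}{3}\MMS_a$ for $a \in \{j,k\}$. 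By \Cref{thm:2-agent:1/2-MMS-one-breakpoint}, the two-agent algorithm then gives each of them at least $\frac12 \cdot \frac23 \MMS_a = \MMS_a/3$.
\end{itemize}

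\emph{Branch 2: $v_i(G_i) < \MMS_i$ for all~$i$.} Applying \Cref{lemma:small goods partition generalisation} with $\alpha = 1$ gives $v_i(G'_i) \geq \frac23 \MMS_i$ for every agent. The key tool is that each good in $G'_{\{i,j\}}$ can be split $0.5/0.5$, and both $i$ and~$j$ then obtain its full value.
\begin{itemize}
\item \emph{First subcase} (some pair satisfies both $\frac13$-conditions). Build~$G^*$ greedily inside $G'_{\{i,j\}}$: add goods until $i$ reaches~$\MMS_i/3$, then continue until $j$ does too. The goal is $v_k(G^*) < \frac23 \MMS_k$, because then $v_k(G \setminus G^*) > \frac13 \MMS_k$ from $v_k(G) \geq \MMS_k$. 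Note that goods in $G'_{\{i,j\}}$ with $c_{k,g} > 0.5$ lie outside~$G'_k$, so they do not hurt~$k$ given $v_k(G'_k) \geq \frac23\MMS_k$.
\item \emph{Second subcase} (no pair works). For each pair, at least one member values the common cheap set below a third. A short pigeonhole argument over the three pairs yields the labelling used by the algorithm, with $v_k(G'_{\{i,k\}}) < \MMS_k/3$. Agent~$i$ greedily collects~$G_i^*$ from~$G'_i$, sharing each good halfway with~$j$. Agent~$j$ then tops up from~$G'_j$, sharing with~$k$.
\item \emph{Accounting in the second subcase.} Greediness and the small-good bound keep each collected set below~$\frac23$ of the relevant MMS, so $i$ and $j$ each get at least a third. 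For~$k$, the loss from~$G'_k$ comes only from goods of~$G_i^*$ that lie in~$G'_k$, hence in~$G'_{\{i,k\}}$, which is worth less than~$\MMS_k/3$ to~$k$. Goods placed in~$G_j^*$ and lying in~$G'_k$ still give $k$ full value through its half share. Thus $k$ retains at least $\frac23\MMS_k - \frac13\MMS_k = \frac13 \MMS_k$.
\end{itemize}

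\emph{Main obstacle.} The hardest step is the relabelling claim in the first subcase of Branch~2: that some choice of pair and order gives $v_k(G^*) < \frac23\MMS_k$. My first attempt is to take a \emph{minimal} $G^*$. If its value to~$k$ is still at least $\frac23\MMS_k$, then $k$ values $G'_{\{i,j\}}$ highly. I would combine this with $c_{k,g} \leq 0.5$ on those goods, which places them in~$G'_{\{i,k\}}$ or~$G'_{\{j,k\}}$, and switch to the pair containing~$k$. The ordering of the greedy step, where only the last added good can overshoot, bounds the overshoot by~$\MMS_k/3$.
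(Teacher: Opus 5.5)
Your Branch~1 and the second subcase of Branch~2 are correct and match the paper's proof. The gap is the first subcase of Branch~2, which you leave open. Moreover, the target you set there cannot always be met: a pair $\{i,j\}$, a bundle $G^*\subseteq G'_{\{i,j\}}$ worth a third to both, and the third agent $k$ taking $G\setminus G^*$ because $v_k(G^*)<\frac23\MMS_k$. The pseudocode line literally describes this, but the paper's proof does not rely on it.

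Here is an instance where it fails. List each good by thresholds $(c_{1,g},c_{2,g},c_{3,g})$ and values $(v_1(g),v_2(g),v_3(g))$:
six goods $W$ with $(0.3,0.3,0.45)$ and $(\frac16,0,1)$;
two goods $Y$ with $(0.3,0.3,0.6)$ and $(\frac12,1,0)$;
three goods $X_1$ with $(0.3,0.6,0.6)$ and $(\frac43,0,0)$;
four goods $X_2$ with $(0.6,0.3,0.6)$ and $(0,1,0)$.

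All goods are contested. Agents $1,2$ have threshold $0.3\le\frac13$ on everything they value, so $\MMS_1=\MMS_2=6$. Each $W$-good can serve at most two of agent~$3$'s bundles, so $\MMS_3=4$. Every good is small, and $v_a(G_a)=0$ for all $a$. Since $G'_{\{1,3\}}=G'_{\{2,3\}}=W$ is worth only $1$ to agent~$1$ and $0$ to agent~$2$, the only qualifying pair is $\{1,2\}$. But $v_1(W\cup Y)=2=\MMS_1/3$ forces $G^*=W\cup Y$, and agent~$3$ values $G\setminus G^*$ at $0$.

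Your proposed repair does not help. The greedy step never looks at $v_k$, so the ``only the last good overshoots'' bound applies to whichever of $i,j$ triggered the stop, not to $k$. Switching to a pair containing $k$ while staying inside that pair's common cheap set also fails here, since that set is just $W$.

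The missing idea, which the paper uses, is to stop insisting that $k$ receives the leftovers. Relabel so that $i$ is a pair member that was still below $\MMS_i/3$ just before the last good was added; then $v_i(G^*)<\frac23\MMS_i$.
If $v_k(G^*\cap G'_k)<\MMS_k/3$, let $i$ and $j$ split $G^*$ halfway and give $G\setminus G^*$ to $k$. Your observation that $v_k(G'_k)\ge\frac23\MMS_k$ settles this case.
Otherwise, let $j$ and $k$ split every good of $G^*$ halfway. Half-shares of goods with $c_{k,g}>0.5$ are worthless to $k$, but that costs nothing. Then $j$ gets $v_j(G^*)\ge\MMS_j/3$, $k$ gets $v_k(G^*\cap G'_k)\ge\MMS_k/3$, and $i$ takes $G\setminus G^*$, worth $v_i(G)-v_i(G^*)>\MMS_i/3$.
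In the example, agent~$1$ gets $X_1\cup X_2$ (worth $4$) while agents~$2$ and~$3$ share $W\cup Y$.

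A minor point on the second subcase: $i$ and $j$ reach a third because the greedy steps terminate, not because of the $\frac23$ upper bound you cite. Termination follows from $v_i(G'_i)\ge\frac23\MMS_i$ and $v_j(G^*_i\cap G'_j)+v_j(G'_j\setminus G^*_i)=v_j(G'_j)\ge\frac23\MMS_j$.
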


\begin{proof}
We distinguish between two cases based on whether~$G_i$ is valuable enough for some agent~$i \in \agents$.

\paragraph{Case~1: $v_i(G_i) \geq \MMS_i$ for some~$i \in \agents$.}
By \Cref{lemma:integral partition}, goods~$G$ can be partitioned into three integral bundles $G^*_1, G^*_2, G^*_3$ such that all three integral bundles are worth at least~$\MMS_i / 3$.
Observe that for all~$j \in [3]$, since the three bundles are all integral and $v_j(G) \geq \MMS_j$, by the pigeonhole principle, at least one of the three bundles is valued at least~$\MMS_j / 3$.

If there exists an agent~$j \in [3] \setminus i$ and two distinct bundles $G^*_a, G^*_b$ (out of the three bundles) such that $v_j(G^*_a), v_j(G^*_b) \geq \MMS_j / 3$, then for the last agent~$k$, we could allocate her most preferred bundle (which is valued at least~$\MMS_k / 3$), and still left with at least one bundle valued at least~$\MMS_j / 3$ for agent~$j$.
Allocating this bundle to agent~$j$ and the last bundle to agent~$i$ gives a $1/3$-MMS allocation.

If both agents~$j, k \in [3] \setminus i$ value only one bundle at least $\MMS_j / 3, \MMS_k / 3$, respectively, then there must exist one bundle, say $G^*_a$, such that $v_j(G^*_a) < \MMS_j / 3$ and $v_k(G^*_a) < \MMS_k / 3$.
By \Cref{lemma:reduced instance MMS}, it follows that under the reduced instance $I' = \langle \{j, k\}, G \setminus G^*_a \rangle$, we have that $\MMS_j(2, G \setminus G_a^*) \geq \frac{2 \cdot \MMS_j}{3}$ and $\MMS_k(2, G \setminus G_a^*) \geq \frac{2 \cdot \MMS_k}{3}$.
This implies that a $1/2$-MMS allocation in the reduced instance~$I'$ (guaranteed by \Cref{thm:2-agent:1/2-MMS-one-breakpoint}) is valued at least $\MMS_j / 3$ and $\MMS_k / 3$ for agents~$j$ and~$k$, respectively.
Therefore, allocating goods~$G \setminus G^*_a$ to agents~$j$ and~$k$ via \Cref{alg:2-agent-1/2MMS} together with allocating~$G^*_a$ to~$i$ gives a $1/3$-MMS allocation.

\paragraph{Case~2: $v_i(G_i) < \MMS_i$ for all~$i \in \agents$.}
By \Cref{lemma:small goods partition generalisation}, we have $v_i(G'_i) = v_i(G \setminus G_i) \geq 2/3 \cdot \MMS_i$.
For each distinct~$i, j \in \agents$, let $G'_{\{i, j\}} \coloneqq \{g \in G \colon c_{i, g} \leq 0.5 \land c_{j, g} \leq 0.5\}$.
Also, for notational convenience, in the following context, for any integral bundle~$\widetilde{G}$, denote by~$0.5 \widetilde{G} \coloneqq \{0.5 \cdot g \colon g \in \widetilde{G}\}$ the set in which exactly half of each good~$g \in \widetilde{G}$ is included.

We first consider the case where there exists~$G'_{\{i, j\}}$ such that $v_i(G'_{\{i, j\}}) \geq \MMS_i / 3$ and $v_j(G'_{\{i, j\}}) \geq \MMS_j / 3$.
Recall that all goods are valued less than~$\MMS_a / 3$ for all agents~$a \in \agents$.
Iteratively moving goods from~$G'_{\{i, j\}}$ would form a bundle $G^* \subseteq G'_{\{i, j\}}$ such that $v_i(G^*) \geq \MMS_i / 3$ and $v_j(G^*) \geq \MMS_j / 3$; also, assume without loss of generality $v_i(G^*) < \frac{2}{3} \cdot \MMS_i$.
Let~$k$ be the last agent:
\begin{itemize}
\item If $v_k(G^* \cap G'_k) \geq \MMS_k / 3$, then for agents~$j$ and~$k$, it follows that $v_j(0.5 G^*) \geq \MMS_j / 3$ and $v_k(0.5G^*) \geq \MMS_k / 3$.
Further, for agent~$i$, since $v_i(G^*) < \frac{2}{3} \cdot \MMS_i$, we have $v_i(G \setminus G^*) \geq \MMS_i / 3$.
Allocating half of all goods in~$G^*$ to agents~$j$ and~$k$ and $G \setminus G^*$ to agent~$i$ gives a $1/3$-MMS allocation.

\item Otherwise, $v_k(G^* \cap G'_k) < \MMS_k / 3$.
Then, $v_k(G \setminus G^*) \geq v_k(G'_k) - v_k(G^* \cap G'_k) > 2/3 \cdot \MMS_k - \MMS_k / 3 = \MMS_k / 3$.
Allocating half of all goods in~$G^*$ to agents~$i$ and~$j$ and $G \setminus G^*$ to agent~$k$ gives a $1/3$-MMS allocation.
\end{itemize}

We now consider the case where all $G'_{\{i, j\}}$'s are valued less than $\MMS_a / 3$ for some agent $a\in \{i, j\}$.
We pick an arbitrary~$G'_{\{i, k\}}$ such that $v_k(G'_{\{i, k\}}) < \MMS_k / 3$.
We iteratively move goods from~$G'_i$ to a bundle~$G^*_i$ until $v_i(G^*_i) \geq \MMS_i / 3$ and allocate goods~$0.5G^*_i$ to agents~$i$ and~$j$.
Then we iteratively add goods from~$G_j'$ to a bundle~$G^*_j$ until $v_j(0.5G^*_i \cup G^*_j) \geq \MMS_j / 3$ and allocate~$0.5G^*_j$ to agents~$j$ and~$k$.
Observe that since $v_i(G'_i) \geq 2/3 \cdot \MMS_i$ and $v_j(G'_j) \geq 2/3 \cdot \MMS_j$, $G^*_i$ and $G^*_j$ must exists and after the above allocations, agents $i$, $j$ must receive bundles $v_i(0.5G^*_i) \geq \MMS_i / 3$ and $v_j(0.5G^*_i \cup 0.5G^*_j) \geq \MMS_j / 3$ respectively.
It remains to show  that agent~$k$ values the remaining goods at at least~$\MMS_k / 3$.
With a slight abuse of notation, we have
\begin{align*}
v_k(G \setminus (G^*_i \cup 0.5G^*_j)) \geq v_k(G_k' \setminus (G^*_i \cup 0.5G^*_j)) = v_k((G_k' \setminus 0.5G^*_j) \setminus G^*_i) & = v_k(G_k' \setminus G^*_i) \\
& \geq \frac{2}{3} \cdot \MMS_k - \MMS_k / 3 \\
& = \MMS_k / 3,
\end{align*}
where the second last equality is due to all goods in~$G'_k$ having thresholds at most~$1/2$, and the last inequality being due to $v_k(G_k') \geq \frac{2}{3} \cdot \MMS_k$ and that $v_k(G_k' \cap G_i^*) \leq v_k(G_k' \cap G_i') = v_k(G'_{\{i, k\}}) < \MMS_k / 3$.

We conclude that \crefrange{alg:3-agent-1/3-MMS:small-goods-begins}{alg:3-agent-1/3-MMS:small-goods-ends} of our algorithm returns a $1/3$-MMS allocation in all cases, completing the proof.
\end{proof}

The proof of \Cref{thm:3-agent:1/3-MMS-one-breakpoint} follows from \Cref{lem:3-agent:1/3-MMS-one-breakpoint:large-goods,lem:3-agent:1/3-MMS-one-breakpoint:small-goods}.

\section{EF and Pareto Optimal Allocations}
\label{sec:EF}

In this section, we explore the widely recognized fairness concept of \emph{envy-freeness} in conjunction with the efficiency notion of Pareto optimality.
To set the stage, we first define the envy-freeness and Pareto optimality, and a combined notion, envy-freeness constrained Pareto optimality.

\begin{definition}[EF]
    An allocation~$\alloc$ is said to be \emph{envy-free (EF)} if, for every pair of agents $i, j \in \agents$, $v_i(A_i) \geq v_i(A_j)$.
\end{definition}

\begin{definition}[PO]
    Given an allocation~$\alloc$, another allocation~$\alloc' = (A'_i)_{i \in N}$ \emph{Pareto dominates}~$\alloc$ if $v_i(A'_i) \geq v_i(A_i)$ for all~$i \in N$ and $v_j(A'_j) > v_j(A_j)$ for some~$j \in N$.
    An allocation is said to be \emph{Pareto optimal (PO)} if no other allocation Pareto dominates it.
\end{definition}

\begin{definition}[EF-constrained-PO]
    An allocation~$\alloc$ is said to be \emph{envy-free constrained Pareto optimal (EF-constrained-PO)} if allocation~$\alloc$ is envy-free and no other envy-free allocation~$\alloc'$ exists such that $v_i(A'_i) \geq v_i(A_i)$ for every agent~$i \in N$ and $v_i(A'_j) > v_i(A_j)$ for some~$j \in N$.
\end{definition}

A PO allocation is guaranteed to exist for divisible goods, regardless of the utility function.\footnote{This can be seen from the fact that, for instance, an allocation which maximizes utilitarian welfare satisfies PO.}
While an EF allocation of multiple divisible goods also always exists in our context since each item can be equally divided among agents, such an allocation is often not PO.
This raises the issue of achieving both EF and PO simultaneously, and for what instances does an EF and PO allocation always exists.
Under the indivisible goods setting, it is well established that determining whether an EF allocation exists for multiple goods is NP-hard, and the problem of finding an allocation that satisfies both EF and PO is even more computationally complex~\citep{deKeijzerBoKl09}.
In our model, we show even when there are $3$~goods, determining whether an EF and PO allocation exists is NP-hard.

\begin{theorem}
    \label{thm:EF+PO-NP-hard}
    The existence of an EF and PO allocation is NP-hard for $m \geq 3$ and one-breakpoint piecewise-constant valuations.
\end{theorem}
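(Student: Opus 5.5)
We reduce from a number-partitioning-type problem; the natural choice is \textsc{Partition} (given positive integers $a_1,\dots,a_k$ with $\sum_t a_t = 2T$, decide whether some subset sums to $T$). Since the number of goods is fixed at three, the combinatorial choice has to live in the agents, not the goods. So we create one agent per integer, plus a constant number of gadget agents. Each good is then a shared resource whose thresholds act like capacities, and the question becomes which agents ``fit'' into each good.

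\textbf{Construction.} For each integer $a_t$ we create an agent $t$. Agent $t$ has threshold $c_{t,g_1}=c_{t,g_2}=a_t/T$ (scaled so that the thresholds of a set of agents sum to at most~$1$ on a good exactly when their integers sum to at most $T$). Agent $t$ values $g_1$ and $g_2$ equally and values $g_3$ only via a threshold of $1$ with small value. A few special agents then make every PO allocation that respects envy-freeness saturate both $g_1$ and $g_2$. For example, a special agent $s$ with threshold~$1$ on $g_3$ and a large value for it should receive $g_3$ in any EF and PO allocation. The values are tuned so that an integer agent who receives nothing envies any agent whose bundle meets her threshold on some good. Hence every integer agent must reach her threshold on $g_1$ or on $g_2$.

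\textbf{Forward direction.} If a subset $S$ with $\sum_{t\in S}a_t=T$ exists, we give each agent in $S$ exactly $c_{t,g_1}$ of $g_1$, and give each agent in the complement exactly $c_{t,g_2}$ of $g_2$; these fractions fill each good exactly. We then check EF directly. A piece of $g_1$ of size $a_s/T$ is worth something to agent $t$ only if $a_s\ge a_t$, so some envy is possible. To remove it, we make the values depend on thresholds, e.g.\ $v_t(g)=a_t$ with threshold $a_t/T$. We also adjust each agent's preference between $g_1$ and $g_2$ so that she gets equal value from either. For PO, we argue that every agent attains her maximum possible utility (one of her two goods) apart from the gadget agents, and that no reallocation can give anyone a second good, because both goods are fully used.

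\textbf{Backward direction.} Given an EF and PO allocation, PO together with the gadget forces every integer agent to meet a threshold on at least one of $g_1,g_2$. We then use capacity: the agents meeting their threshold on $g_1$ have integers summing to at most $T$, and likewise for $g_2$. Since the total is $2T$, both sums equal $T$ and the agents split into two groups, which yields a partition.

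\textbf{Main obstacle.} The hardest step is guaranteeing that the backward direction truly forces every agent to be served. An allocation that leaves some agent at zero might still be EF and PO if that agent cannot envy anyone. The gadget values must make an unserved agent envious of some bundle (EF fails), or else leave a Pareto improvement available. To handle this, I would first try making each integer agent's value for $g_3$ tiny but positive with a low threshold. EF then forces them to get something, and a $g_3$ share with a dominance argument shows PO pushes them onto $g_1$ or $g_2$. If that fails, I would switch the source problem to 3-\textsc{Partition} to obtain strong NP-hardness, with finer control over the thresholds.
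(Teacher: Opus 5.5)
\textbf{There is a genuine gap.} Your first-form construction matches the paper's: integer agents have threshold $a_t/T$ and equal values on $g_1,g_2$, and value $\epsilon$ with threshold $1$ on $g_3$; the special agent $s$ has threshold $1$ on $g_3$. Your final capacity count is also correct. The gap is the step you flag as the main obstacle, namely that every integer agent must meet a threshold on $g_1$ or $g_2$. You never prove it, you attribute it to PO (which alone cannot force it), and your proposed remedy breaks the reduction.

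\textbf{Why the remedy fails.} Give integer agents a threshold $\delta<1$ on $g_3$, with $s$ valuing only $g_3$. Then PO no longer forces $g_3$ to go to a single agent. So $s$ can be left empty without envying anyone, since nobody holds all of $g_3$, and integer agents can be ``served'' by slivers of $g_3$. A concrete counterexample is the no-instance $\{3,3,2\}$, with $T=4$, thresholds $\frac34,\frac34,\frac12$, and $\delta\le\frac13$. Use this allocation:
\begin{itemize}
\item agent $A$ gets $\frac12 g_1+\frac13 g_3$;
\item agent $B$ gets $\frac12 g_2+\frac13 g_3$;
\item agent $C$ gets $\frac12 g_1+\frac12 g_2+\frac13 g_3$;
\item agent $s$ gets nothing.
\end{itemize}
This allocation is EF. It is also PO: any dominating allocation must keep $C$ at all three thresholds. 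That leaves $A$ and $B$ only worthless halves of $g_1,g_2$, so each must keep a $\delta$-sliver of $g_3$, and hence $s$ cannot get $g_3$.

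\textbf{The missing argument.} The paper keeps threshold $1$ on $g_3$ for everyone and gives $s$ value $0$ on $g_1,g_2$; you need to specify the latter. The forcing then comes in three steps, the last two from EF:
\begin{enumerate}
\item[(i)] If no agent receives all of $g_3$, nobody derives any value from it. Handing $g_3$ entirely to $s$ is then a Pareto improvement, so PO forces some agent to receive $g_3$ whole.
\item[(ii)] If that agent is not $s$, then $s$ has utility $0$ and envies the holder. So EF gives $g_3$ to $s$.
\item[(iii)] Every integer agent values $s$'s bundle at $\epsilon>0$. So EF forces her utility to be positive, which, with $g_3$ gone, means meeting her threshold on $g_1$ or $g_2$.
\end{enumerate}
Your capacity argument then finishes the backward direction.

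\textbf{Minor points.} Your forward-direction worry about envy is unfounded. Agent $t$ already gets $v_t(g_1)=v_t(g_2)$, and every other integer agent's bundle is a single piece of one of these goods, so it is worth at most that much to $t$. No re-tuning of values is needed, and no switch to 3-\textsc{Partition} is needed either.
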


\begin{proof}
Consider the following decision problem.

\pbDef{\textsc{3-EF-PO Existence}}
{An instance $I=(\agents,\divGoods)$ where $|\divGoods|\geq 3$}
{Does there exist an envy-free and Pareto optimal allocation in $I$?}

We will prove that \textsc{3-EF-PO Existence} is in NP-hard by reducing from the \textsc{Partition} problem.

\pbDef{\textsc{Partition}}
{A set of positive integers $S$}
{Does there exist a subset $S' \subseteq S$ such that $\sum_{s \in S'} s = \sum_{s \in S \setminus S'} s$?}

Given any instance of the \textsc{Partition} problem, we construct an instance of the \textsc{3-EF-PO Existence} problem with $|S| + 1$ agents, $\agents = [|S| + 1]$, and 3 goods, $\divGoods = \{g_1, g_2, g_3\}$. Each of the first $|S|$ agents has a piecewise-constant utility function:
$$v_i(g, p) = \mathbb{I}_{p \geq 2s_i / \sum_{s \in S} s} \text{ for both } g_1 \text{ and } g_2,$$
and
$$v_i(g_3, p) = \epsilon \mathbb{I}_{p \geq 1}.$$
The last agent, $|S| + 1$, has utility $v_{|S| + 1}(g_3, p) = \mathbb{I}_{p \geq 1}$ for $g_3$, and zero utility for $g_1$ and $g_2$.

        Suppose there exists an EF and PO allocation to the \textsc{3-EF-PO Existence} instance. Consider the allocation of $g_3$. Since the allocation must be PO, and all agents derive utility only if the entire good $g_3$ is allocated, one agent must receive the entirety of $g_3$. Given that the allocation is also EF, and agent $|S| + 1$ derives no utility from $g_1$ and $g_2$, it must be the case that agent $|S| + 1$ receives $g_3$; otherwise, agent $|S| + 1$ would envy the agent who received $g_3$.

        Now, since agent $|S| + 1$ receives $g_3$, every other agent $i \in [|S|]$ must receive a non-zero utility from the allocation of $g_1$ or $g_2$, or else they would envy agent $|S| + 1$. Additionally, since the total proportion required to satisfy all agents’ utility is 2, no agent can receive utility from both goods; otherwise, some agents would be left with no utility. Therefore, each agent must receive their required proportion from exactly one of the goods, either $g_1$ or $g_2$. By observing which agents receive $g_1$ and which receive $g_2$, we can solve the \textsc{Partition} problem.

        Conversely, if there is a solution to the \textsc{Partition} problem, we can allocate the appropriate proportion of $g_1$ to all agents in $S'$, $g_2$ to all agents in $S \setminus S'$, and the entirety of $g_3$ to agent $|S| + 1$. This results in an EF and PO allocation, thereby solving the \textsc{3-EF-PO Existence} instance.

        According to~\citep{GareyJo79}, \textsc{Partition} problem is NP-complete, the \textsc{3-EF-PO Existence} problem is also NP-hard.
\end{proof}

In the reminder of this section, we examine the case with a single good, which has been explored in related but different contexts~\citep{BuermannGeRa20,FeigeSaTa21}.

\subsection{Single Good Allocation}
In this subsection, we focus on the allocation of a single good, a scenario in which finding an EF and PO allocation is computationally tractable. We present an algorithm that efficiently computes an EF-constrained PO allocation for a single good, which can then be used to determine whether an EF and PO allocation exists. For ease of notation and since there is only $m = 1$ good concerned, we denote $v_i(p) = v_i(g, p)$ as a function of allocated proportion in this section only. Similarly, we denote $A_i = A_{i, g}$ as the proportion allocated to agent $i$ under $A$.

We now present the algorithm as follows.

Let $\underline{s_i}(b)$ represent the minimum proportion $s$ such that $v_i(s) = v_i(b)$; let $B = \{b_1, \ldots, b_k\}$ denotes all breakpoints across all agents.
The proposed algorithm operates in three stages to ensure EF-constrained PO in the allocation:
\begin{enumerate}
    \item Identify the largest breakpoint $b_{j}$ such that $\sum_{i \in N} \underline{s_i}(b_j) \leq 1$. Allocate to each agent $i$ a portion of the good such that $A_i = \underline{s_i}(b_j)$.
    \item Let $\epsilon$ denote a small remainder of the good, and define $L$ as the set of agents such that, for all $i \in L$, $\underline{s_i}(b_{j} + \epsilon) > b_j$. Adjust the allocation for each agent $i \in L$ as:
          $$ A_i = \min\left(b_{j+1} - \epsilon, b_{j} + \frac{1 - \sum_{k \in N} \underline{s_k}(b_j)}{|L|}\right),$$
          while keeping the allocation for other agents unchanged.
    \item Allocate the remaining portion of the good in such a way that no agent gets $b_{j+1}$ proportion of the good.
\end{enumerate}
Since there are polynomially many breakpoints, the algorithm runs in polynomial time. We now show that the algorithm returns an EF-constrained PO allocation.
\begin{algorithm}[t]
    \caption{EF-constrained PO allocation \label{alg:POEFAlloc}}
    \KwIn{An instance $I = (\agents, \divGoods)$.}
    \KwOut{An EF-constrained PO allocation $A$}
    \BlankLine
    \tcp{Stage 1}
    $j \gets 0$\;
    \While{$\sum_{i \in N} \underline{s_i}(b_{j+1}) \leq 1$}{
        $j = j + 1$\;
    }
    \ForEach{$i \in \agents$}{
        $A_i \gets \underline{s_i}(b_j)$\;
    }
    \tcp{Stage 2}
    $L \gets \varnothing$\;
    \ForEach{$i \in \agents$}{
        \If{$\underline{s_i}(b_{j} + \epsilon) > b_j$}{
            $L \gets L \cup i$\;
        }
    }
    \ForEach{$i \in L$}{
        $A_i \gets \min\left(b_{j+1} - \epsilon, b_j + \frac{1 - \sum_{i \in N} \underline{s_i}(b_j)}{|L|}\right)$\;
    }
    \tcp{Stage 3}
    \While{$\sum_{i \in \agents} A_i < 1$}{
        find $i$ such that $A_i < b_{j+1} - \epsilon$\;
        $A_i \gets \min(b_{j+1} - \epsilon, 1 + A_i - \sum_{j \in \agents} A_j)$\;
    }
    \Return $A$\;
\end{algorithm}

\begin{lemma}
    \label{lem:EFAlloc}
    The allocation found by Algorithm~\ref{alg:POEFAlloc} is EF.
\end{lemma}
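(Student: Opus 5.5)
The proof exploits the fact that, for a single good, agent~$i$ envies agent~$i'$ only if $v_i(A_{i'}) > v_i(A_i)$. By monotonicity this requires $A_{i'} > A_i$, and more precisely $A_{i'}$ must lie in a strictly higher ``level set'' of $v_i$ than $A_i$ does. So I would show that after every stage, every agent already holds a portion worth, to her, at least $v_i(b_j)$. I would also show that no agent holds a portion reaching the next value level of any other agent. Here ``value level'' is measured through the breakpoint $b_{j+1}$, which is the smallest breakpoint exceeding $b_j$ among all agents. I would read the algorithm's quantities $b_j$ as value thresholds indexed by breakpoints, so that $\underline{s_i}(b_j)$ is the minimal amount giving $i$ the value she has at level $b_j$.

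The key steps run in this order. First, after Stage~1 each agent holds exactly $\underline{s_i}(b_j)$, so $v_i(A_i) = v_i(b_j)$. Since $\sum_i \underline{s_i}(b_j)\le 1$, this stage is feasible, and Stages~2 and~3 only increase allocations. Hence every agent's final value is at least $v_i(b_j)$.

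Second, I would bound every final allocation from above. Stage~2 caps the agents in $L$ at $b_{j+1}-\epsilon$, and Stage~3 only raises an allocation up to $b_{j+1}-\epsilon$. Agents outside $L$ start at $\underline{s_i}(b_j) \le b_j$ and are raised by Stage~3 at most to $b_{j+1}-\epsilon$. So every final $A_{i'}$ lies in $[\,\underline{s_{i'}}(b_j),\, b_{j+1})$. On that window every agent~$i$'s valuation is ``flat'' in the sense that matters: there is no breakpoint strictly between $b_j$ and $b_{j+1}$. Thus for any $x\in[b_j,b_{j+1})$, either $v_i$ is constant there, or $v_i$ is linear on the segment containing $[b_j,b_{j+1})$.

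Third, I would compare $v_i(A_i)$ with $v_i(A_{i'})$ by cases, and this comparison is the main obstacle. If $v_i$ is constant on $[b_j,b_{j+1})$ (the agent is not in $L$), then $v_i(A_{i'})\le \lim_{x\uparrow b_{j+1}} v_i(x) = v_i(b_j)\le v_i(A_i)$, so there is no envy. If $v_i$ strictly increases just after $b_j$ (the agent is in $L$), I need $A_i \ge A_{i'}$ whenever $A_{i'}>b_j$. Stage~2 handles this by giving every agent in $L$ the common amount $\min(b_{j+1}-\epsilon,\, b_j + r/|L|)$. This is the largest amount compatible with equal treatment, because the leftover $r = 1-\sum_k \underline{s_k}(b_j)$ is exhausted exactly when the second term binds. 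The delicate point is Stage~3. It runs only when the first term binds, i.e.\ when the $L$-agents already sit at $b_{j+1}-\epsilon$, and it raises others only up to $b_{j+1}-\epsilon$. So no agent ever exceeds an $L$-agent's share. I would also check that raising agents outside $L$ from below $b_j$ up into $[b_j, b_{j+1})$ gives them value of at most $v_i(b_j)$ in others' eyes, by the flatness of the remaining agents' valuations there. Finally I would let $\epsilon\to 0$, so that any boundary effects at $b_{j+1}$ from the open interval are excluded: by choice of $j$, no agent reaches $\underline{s_i}(b_{j+1})$, since those amounts would sum to more than $1$.
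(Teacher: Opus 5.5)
Your proof is correct and is essentially the paper's argument, organized around the final allocation rather than stage by stage. Both rest on the same facts: every agent already has value $v_i(b_j)$ after Stage~1; every share stays at most $b_{j+1}-\epsilon$; agents outside $L$ have constant valuations on $[b_j,b_{j+1})$; and agents in $L$ hold the common maximal share, with Stage~3 running only once they sit at the cap. You should drop your closing step. The cap $b_{j+1}-\epsilon$ already keeps every share strictly below $b_{j+1}$, and letting $\epsilon\to 0$ is not harmless, since a share reaching $b_{j+1}$ could trigger another agent's jump there. Also, the claim that no agent reaches $\underline{s_i}(b_{j+1})$ is false for agents outside $L$ whose value does not change at $b_{j+1}$; the sum exceeding $1$ only shows that not all agents can reach these amounts simultaneously.
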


\begin{proof}
        For each of the three stages in the algorithm, we show that the partial allocation remains EF at the end of the stage.
        \begin{enumerate}
            \item At the end of stage 1, each agent receives a utility equivalent to receiving a proportion $b_j$ of the good. Since no agent receives more than $b_j$, no agent envies another, and thus, the allocation is EF.
            \item In stage 2, since we define $L$ as the set of agents that derive more utility by receiving $\epsilon$ more proportion, all agents in $L$ must have received exactly $b_j$ proportion in stage 1. Therefore, all agents in $L$ receive the same additional proportion in stage 2 and do not envy one another. Further, since no agents receive at least $b_{j+1}$ and there are no breakpoints between $b_j$ and $b_{j+1}$, no agents outside of $L$ can increase their utility by receiving up to $b_{j+1} - \epsilon$. The allocation thus remains EF.
            \item In stage 3, additional goods are allocated only when all agents in $L$ received $b_{j+1} - \epsilon$ in stage 2. Since the remaining proportion of the goods are allocated such that no agent receives $b_{j+1}$ proportion, no agents in $L$ will envy any agents outside of $L$. Further, since there are no breakpoints between $b_j$ and $b_{j+1}$ and all agents receives proportion strictly less than $b_{j+1}$, no agent outside of $L$ will envy any other agent. Thus, the allocation remains EF.
        \end{enumerate}
        Since the allocation remains EF at the end of each stage, the allocation found by Algorithm~\ref{alg:POEFAlloc} is EF.
\end{proof}

\begin{lemma}
    \label{lem:POAlloc}
    No EF allocation Pareto dominates the allocation found by Algorithm~\ref{alg:POEFAlloc}.
\end{lemma}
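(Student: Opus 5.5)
The plan is a contradiction argument that uses the maximality of the breakpoint $b_j$ chosen in Stage~1. Let $A$ be the output of Algorithm~\ref{alg:POEFAlloc} and suppose some EF allocation $A'$ Pareto dominates $A$. The first step is to show that, for every agent $i$, the utility $v_i(A_i)$ equals $v_i(b_j)$ or $v_i(b_{j+1}-\epsilon)$, depending on whether $i\notin L$ or $i\in L$. Moreover, every agent's allocation lies strictly below $b_{j+1}$. This follows from Lemma~\ref{lem:EFAlloc}'s stage-by-stage description together with the fact that no breakpoint lies strictly between $b_j$ and $b_{j+1}$. Hence, within $[b_j, b_{j+1})$, each agent's utility is either constant (agents outside $L$) or strictly increasing (agents in $L$).

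The second step splits on the size of the bundles in $A'$.

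\emph{Case (a): some agent $k$ has $A'_k \ge b_{j+1}$.} EF of $A'$ requires every agent $i$ to value her own bundle at least $v_i(A'_k) \ge v_i(b_{j+1})$, by monotonicity. By the definition of $\underline{s_i}$, this forces $A'_i \ge \underline{s_i}(b_{j+1})$ for all $i$. Summing gives $\sum_i \underline{s_i}(b_{j+1}) \le \sum_i A'_i = 1$, which contradicts the Stage~1 choice of $b_j$ as the largest breakpoint with feasible sum.

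\emph{Case (b): all $A'_i < b_{j+1}$.} Here every utility lies in the range that $A$ already attains. Agents outside $L$ cannot strictly improve, since their value is flat on $[\underline{s_i}(b_j), b_{j+1})$ and dropping below $\underline{s_i}(b_j)$ only hurts. Agents in $L$ can improve only by receiving more of the good than they hold in $A$, and they cannot hold less. If some $i\in L$ already has $b_{j+1}-\epsilon$ in $A$, letting $\epsilon\to 0$ (taking $\epsilon$ to be the supremum below $b_{j+1}$) leaves no room to improve. Otherwise, Stage~2 gave all of $L$ an equal share, with the remainder exhausted and the other agents at their minimal $\underline{s_i}(b_j)$. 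Strictly increasing one $L$-agent's share then forces either some other $L$-agent below her $A$-share (she becomes worse off) or some agent below $\underline{s_i}(b_j)$ (also worse off). In both subcases $A'$ fails to Pareto dominate $A$.

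The main obstacle is case (b), and specifically the role of $\epsilon$ and Stage~3. I would make precise that the allocations in Stage~3 go only to agents whose utility is flat, or to $L$-agents already at the cap, so that they carry no slack that $A'$ could redistribute. The Stage~2 equal split is what makes EF binding for $L$: in $A'$, giving one $L$-agent more than another $L$-agent cannot be compensated without creating envy. I would try to argue that the total amount $A'$ needs to keep every agent at her $A$-level already exhausts the good, using $\sum_i A_i = 1$ and minimality of the shares.
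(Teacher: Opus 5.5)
Your case split is the paper's argument in reverse order. Case~(a) is exactly the last paragraph of the paper's proof: EF of $A'$ forces $A'_i \geq \underline{s_i}(b_{j+1})$ for all $i$, contradicting $\sum_i \underline{s_i}(b_{j+1}) > 1$. Case~(b) mirrors the paper's first two paragraphs: the shares after Stages~1--2 are minimal for the utilities they yield, and each utility is flat or strictly increasing on $[b_j, b_{j+1})$. Case~(a) is correct, as is the sub-case of~(b) in which Stage~2 already exhausts the good.

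The sub-case you flag as the main obstacle cannot be closed as you propose, for any fixed $\epsilon > 0$. Whenever Stage~3 runs, the $L$-agents sit at $b_{j+1}-\epsilon$ and $\sum_{i \notin L} \underline{s_i}(b_j) + |L|(b_{j+1} - \epsilon) < 1$. So the amount needed to keep every agent at her $A$-level is strictly less than the whole good. The Stage-3 surplus held by agents outside~$L$ is pure slack, and the $L$-agents can still gain inside $(b_{j+1} - \epsilon, b_{j+1})$. Hence both of your planned claims fail (that Stage~3 leaves nothing to redistribute, and that minimal shares exhaust the good). Also, ``taking $\epsilon$ to be the supremum below $b_{j+1}$'' describes no allocation.

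Here is a concrete instance. Take three agents with $v_1(p) = p$ for $p < \frac{1}{2}$ and $v_1(p) = 1$ otherwise, and $v_2(p) = v_3(p) = 1$ if $p \geq \frac{1}{2}$ and $0$ otherwise. Then $b_j = 0$, $L = \{1\}$, agent~$1$ receives $\frac{1}{2} - \epsilon$, and Stage~3 gives agents~$2$ and~$3$ at most $\frac{1}{2} - \epsilon$ each. Move $\frac{\epsilon}{2}$ from one of them to agent~$1$. Every share stays below $\frac{1}{2}$ and agent~$1$'s share is the largest, so the new allocation is EF and Pareto dominates the output. Invoking EF among the $L$-agents therefore does not help. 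Moreover, no EF allocation in this instance gives anyone $\frac{1}{2}$ or more. So agent~$1$'s EF-attainable utility has an unattained supremum, and no better argument can rescue the literal statement.

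The paper's proof has the same hole. It asserts that once the $L$-agents receive any additional good they reach $b_{j+1}$, which holds only if $\epsilon$ is treated as infinitesimal. Your argument is thus as complete as the paper's. The capped sub-case goes through only in that limiting sense, or under an extra hypothesis. For example, if all valuations are continuous at $b_{j+1}$, then for small enough $\epsilon$ the cap never binds and Stage~3 never runs.
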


\begin{proof}
        We have already shown that the allocation is EF at the end of stage 2. Now, we show that no agent can achieve the same utility while receiving a smaller proportion of the good.

        For each agent $i \in L$, since there are no breakpoints between $b_j$ and $A_i$, if $\underline{s_i}(b_j + \epsilon) > b_j$ holds, then $\underline{s_i}(p + \epsilon) > p$ must hold for all $p \in [b_j, b_{j+1})$. Therefore, no agent in $L$ can maintain the same utility while receiving less. Similarly, for each agent $i \notin L$, since $A_i = \underline{s_i}(b_j)$, no agent can maintain the same utility with a smaller allocation. As stage 3 does not change the agents' utilities, any allocation that Pareto dominates the one from stage 2 must allocate at least as much to each agent as the stage 2 allocation.

        In stage 3, if there are remaining goods, any allocation that increases some agents' utility will cause at least one agent to receive $b_{j+1}$. If there is remaining good, all agents in $L$ must receive $b_{j+1} - \epsilon$ (otherwise, they would have received more in stage 2). Once agents in $L$ receive additional goods, they will obtain at least $b_{j+1}$. Since there are no breakpoints between $b_j$ and $b_{j+1}$, agents not in $L$ cannot increase their utility by receiving up to $b_{j+1} - \epsilon$. Thus, any allocation that increases utility for some agents will result in at least one agent receiving $b_{j+1}$.

        Finally, since $\sum_{i \in N} \underline{s_i}(b_{j+1}) > 1$, any allocation with at least one agent receiving $b_{j+1}$ or more must leave some agent $i$ receiving strictly less than $\underline{s_i}(b_{j+1})$. Such an agent $i$ must envy the one receiving $b_{j+1}$. Therefore, no EF allocation can have any agents receiving at least $b_{j+1}$, and thus the allocation found by Algorithm~\ref{alg:POEFAlloc} is PO among all EF allocations.
\end{proof}

Combining \Cref{lem:EFAlloc} and \Cref{lem:POAlloc}, we have shown that the allocation found by Algorithm~\ref{alg:POEFAlloc} is EF-constrained PO, and thus the following theorem follows.

\begin{theorem}
    \label{thm:EF-EFPO-allocation}
    An EF-constrained PO allocation always exist and can be found in polynomial time for a single divisible good with piecewise-linear utility functions.
\end{theorem}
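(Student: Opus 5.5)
The plan is to obtain \Cref{thm:EF-EFPO-allocation} as a direct consequence of \Cref{lem:EFAlloc} and \Cref{lem:POAlloc}, plus a separate check of existence and polynomial running time. Existence follows from well-definedness: \Cref{alg:POEFAlloc} always terminates with a feasible allocation. Stage~1 is well defined because the breakpoint $b_0 = 0$ always satisfies $\sum_{i \in N} \underline{s_i}(b_0) = 0 \le 1$, so the while-loop starts from a valid index. Since the sums $\sum_i \underline{s_i}(b)$ are non-decreasing in $b$, it stops at the largest feasible breakpoint (with the convention that $b_{k+1}$ is infeasible). Stage~3 terminates with $\sum_i A_i = 1$. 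If it could not, every agent would hold $b_{j+1} - \epsilon$, so $\sum_i A_i \ge 1$ whenever $\epsilon$ is chosen small enough relative to $b_{j+1}$ and $n$, contradicting the loop condition. I would state this choice of $\epsilon$ explicitly: any $\epsilon$ smaller than the minimum gap between consecutive breakpoints and smaller than $(b_{j+1} - b_j)/n$ works.

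Next, EF and EF-constrained PO come from the two lemmas. \Cref{lem:EFAlloc} gives EF. \Cref{lem:POAlloc} shows that no EF allocation Pareto dominates the output. Together these are exactly the definition of EF-constrained PO.

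For the running time, I would argue as follows. With piecewise-linear valuations, each agent contributes finitely many breakpoints, so $|B|$ is polynomial in the input size. Each $\underline{s_i}(b)$ can be computed by scanning agent $i$'s linear pieces: on a piece with positive slope the target value is attained at a unique point, and on a constant piece we take the left endpoint. Hence stage~1 costs $O(|B| \cdot n \cdot d)$ time. Stage~2 takes $O(n)$ time, and each iteration of stage~3 either caps an agent at $b_{j+1} - \epsilon$ or exhausts the remainder, so stage~3 runs at most $n$ iterations.

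I expect the main obstacle to be making $\underline{s_i}$ and the ``no breakpoints between $b_j$ and $b_{j+1}$'' reasoning rigorous for general piecewise-linear (possibly discontinuous) functions, since the lemmas implicitly identify a proportion with a breakpoint value. I would handle this by taking $B$ to be the sorted union of all agents' breakpoints, including $0$ and $1$. Between consecutive points of $B$, every $v_i$ is then affine, which is exactly what the lemmas use.
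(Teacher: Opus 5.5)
\textbf{The proposal has a genuine gap, and it is also present in the paper's own proof.} You follow the paper's route exactly: EF from \Cref{lem:EFAlloc}, EF-constrained PO from \Cref{lem:POAlloc}, and polynomially many breakpoints. But the step you take on credit, \Cref{lem:POAlloc}, is false. So is the existence claim itself once valuations are allowed to jump, which the paper's definition of piecewise-linear permits (it contains every piecewise-constant valuation).

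Take two agents with $v_1(p) = 0$ for $p < 0.6$, $v_1(p) = 1$ for $p \geq 0.6$, and $v_2(p) = p$. Agent~$2$ is envy-free iff $A_2 \geq 1/2$. Then $A_1 \leq 1/2$, so agent~$1$ has utility $0$ and is envy-free iff $A_2 < 0.6$. Hence the EF allocations are exactly those with $A_2 \in [1/2, 0.6)$. Each of them is Pareto dominated by the EF allocation with $A'_2 = (A_2 + 0.6)/2$, so no EF-constrained PO allocation exists.

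On this instance \Cref{alg:POEFAlloc}, with your $B = \{0, 0.6, 1\}$, ends Stage~1 at $b_j = 0$ and sets $L = \{2\}$. For small $\epsilon$ it returns $(0.4 + \epsilon, 0.6 - \epsilon)$, which is dominated by the same output with $\epsilon/2$; no choice of $\epsilon$ helps. The sentence that fails in the proof of \Cref{lem:POAlloc} is that every Pareto improvement forces some agent up to $b_{j+1}$. The argument only excludes shares of at least $b_{j+1}$. It never excludes shares in $(b_{j+1} - \epsilon, b_{j+1})$, and these strictly benefit every agent of $L$, because her valuation is strictly increasing on $[b_j, b_{j+1})$.

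\textbf{Your well-definedness check also misses two things.} First, Stage~2 presumes every $i \in L$ holds exactly $b_j$ after Stage~1. But $\underline{s_i}(b_j) < b_j$ whenever $v_i$ is flat just left of $b_j$. Take $v_1$ of slope $1$ on $[0, 0.1]$, flat on $[0.1, 0.5]$, of slope $1$ on $[0.5, 1]$, and $v_2(p) = p$. Stage~1 ends at $b_j = 0.5$ with $A = (0.1, 0.5)$ and $L = \{1, 2\}$, and Stage~2 gives both agents $0.7$, a total of $1.4$. Second, Stage~3 can finish only if $n(b_{j+1} - \epsilon) \geq 1$, which your bound $\epsilon < (b_{j+1} - b_j)/n$ does not ensure. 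With $\epsilon = 0.15$ in the first example, Stage~3 stalls at $(0.45, 0.45)$.

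\textbf{What can be proved.} The statement holds for continuous piecewise-linear valuations, by a direct argument without $\epsilon$.
\begin{itemize}
\item For one good, $A$ is EF iff $A_i \geq \underline{s_i}(M)$ for all $i$, where $M = \max_k A_k$; each utility then equals $v_i(M)$. So one wants the largest $M \geq 1/n$ with $\sum_i \underline{s_i}(M) \leq 1$.
\item Each $\underline{s_i}$ is non-decreasing and, when $v_i$ is continuous, left-continuous, so this maximum $M^*$ is attained.
\item $\sum_i \underline{s_i}$ is affine on each interval $(c, c']$ between consecutive breakpoints, so $M^*$ is found in polynomial time by scanning these pieces.
\item Any allocation with $A_i \in [\underline{s_i}(M^*), M^*]$ and $\sum_i A_i = 1$ is EF-constrained PO.
\end{itemize}
A jump in $v_i$ makes $\underline{s_i}$ jump up from the left (at $0.6$ in the first example), which is exactly why the supremum can fail to be attained.
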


Further, one can easily show that any EF and PO allocation must itself be EF-constrained PO and PO. However, it turns out a stronger result holds for $m = 1$ good: an EF and PO allocation exists if and only if the EF-constrained PO allocation found by Algorithm~\ref{alg:POEFAlloc} is also PO. This is proved in the following theorem.

\begin{theorem}
    \label{thm:EF-PO-existence}
    The existence of an EF and PO allocation can be checked in polynomial time for a single divisible good with piecewise-linear utility functions.
\end{theorem}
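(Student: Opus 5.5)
The plan is to prove the characterization stated just before the theorem: an EF and PO allocation exists if and only if the allocation $\alloc$ returned by Algorithm~\ref{alg:POEFAlloc} is PO. Checking whether a single given allocation of one good with piecewise-linear valuations is PO can be done in polynomial time. With this, the theorem follows from Theorem~\ref{thm:EF-EFPO-allocation}: run Algorithm~\ref{alg:POEFAlloc}, then test $\alloc$ for PO.

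The ``if'' direction is immediate, since $\alloc$ is EF by Lemma~\ref{lem:EFAlloc}. For the ``only if'' direction, suppose some allocation $\alloc^*$ is EF and PO but $\alloc$ is not PO. By Lemma~\ref{lem:POAlloc}, no EF allocation Pareto dominates $\alloc$. The key observation from that lemma's proof is that every EF allocation gives each agent strictly less than $b_{j+1}$. In the notation of the algorithm, agents in $L$ sit in the linear segment just above $b_j$, and agents outside $L$ are at their flat value $v_i(b_j)$.

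I would then compare utilities. Because no agent reaches $b_{j+1}$ in $\alloc^*$, each $i \notin L$ has $v_i(\alloc^*_i)\le v_i(b_j)=v_i(\alloc_i)$. For agents in $L$, I would argue that the algorithm's equal split of the slack after Stage 1 (up to $b_{j+1}-\epsilon$) is the only EF way to use that slack. Hence $\alloc^*$ gives each agent utility at most that in $\alloc$, i.e.\ $\alloc$ weakly dominates $\alloc^*$ in utilities.

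Next, let $\alloc'$ be a Pareto improvement of $\alloc$. Then $\alloc'$ weakly improves every agent relative to $\alloc^*$, and strictly improves at least one. This contradicts PO of $\alloc^*$ unless $\alloc^*$ and $\alloc$ have identical utility profiles. In that case $\alloc'$ would also Pareto dominate $\alloc^*$, again a contradiction.

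The main obstacle is the comparison step, in particular for agents in $L$. EF bounds each $\alloc^*_i$ relative to the others' shares, but showing that no EF allocation can give an $L$-agent more than the algorithm does requires care. I would try the following: an $L$-agent receiving more than $\alloc_i$ forces some other $L$-agent below its algorithmic share, because the total slack is fixed once non-$L$ agents need $\underline{s_i}(b_j)$ to avoid envy. That second agent then strictly envies the first, since its valuation is strictly increasing on that segment.

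For the polynomial-time PO check, I would fix the utility profile of $\alloc$ and ask whether some agent can be strictly improved while every other agent keeps at least its current utility. For each agent $i$, compute the minimal proportions $\underline{s_k}$ needed by the others to keep their utilities, which is read off the breakpoints. Then test whether the leftover $1-\sum_{k\neq i}\underline{s_k}$ gives agent $i$ more than $v_i(\alloc_i)$. This takes polynomially many breakpoint evaluations.
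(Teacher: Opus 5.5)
Your proposal is correct and follows essentially the same route as the paper: the same characterization (an EF and PO allocation exists iff the output of Algorithm~\ref{alg:POEFAlloc} is PO), and the same key fact from the proof of Lemma~\ref{lem:POAlloc} that no EF allocation gives any agent $b_{j+1}$ or more, so the output's utilities bound those of every EF allocation and any Pareto improvement of the output Pareto dominates a putative EF and PO allocation; you also use the same minimal-holdings PO test. The one difference is that you bound the $L$-agents directly by an envy/pigeonhole argument, whereas the paper first notes that the output is already PO unless the $L$-agents are capped at $b_{j+1}-\epsilon$, and then (like you) treats that cap as the best an EF allocation can give them, with $\epsilon$ read as infinitesimal.
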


\begin{proof}
        We use Algorithm~\ref{alg:POEFAlloc} to compute an EF-constrained PO allocation and show that an EF and PO allocation exists if and only if the returned allocation is PO.

        First, suppose that the returned allocation is PO, but since it already is EF-constrained PO (and hence EF), it is trivial that the allocation is EF and PO.

        Conversely, suppose that the returned allocation is not PO, then there must exist another allocation $A'$ that Pareto dominates it. In particular, by the construction of Algorithm~\ref{alg:POEFAlloc}, this implies that all agents in $L$ must receive $b_{j+1} - \epsilon$ in stage 2 (or otherwise, if they receive $b_{j} + \frac{1 - \sum_{k \in N} \underline{s_k}(b_j)}{|L|}$, then any $\epsilon$ proportion taken away from any agents must reduce that agent's utility, and so the returned allocation must be PO). It then follows that under $A'$, all agents in $L$ must receive at least $b_{j+1} - \epsilon$; all agents outside of $L$ must receive at least $b_{j}$; and some agent must receive at least $b_{j+1}$. However, the last paragraph in the proof of \Cref{lem:POAlloc} implies that no EF allocation can have any agents receiving $b_{j+1}$ or more. Therefore, for any agent $i$, the maximum utility in any EF allocation is $v_i(b_{j+1} - \epsilon)$ if $i\in L$ and $v_i(b_{j})$ if $i\notin L$. This implies that any EF allocation must be Pareto dominated by $A'$. Thus, no EF and PO allocation exists.

        Finally, checking whether an allocation is PO for a single divisible good with piecewise-linear utility functions can be done in polynomial time: for each agent, we compute the maximum portion of the good that can be removed while maintaining the same utility, denoted as $w_i = A_i - \underline{s_i}(A_i)$. We remove $w_i$ from each agent and check whether there exists an agent $j$ such that $v_j(A(j) + \sum_{i \in \agents} w_i) > v_j(A(j))$. If such an agent exists, the allocation is not PO. Otherwise, the allocation is PO.
\end{proof}

\section{Discussion}

In this paper, we have studied a fair division model where a set of homogeneous divisible goods are allocated among agents who have non-linear valuations in the sense that each agent's value depends only on the amount of each good she receive.
We focus on the fair-share-based notion of MMS and the comparison-based fairness notion of envy-freeness.
In more detail, we give tight or asymptotically tight approximation to MMS in various cases.
For envy-freeness, we explore computational complexity of checking the existence of an envy-free and PO allocation.
Since one-breakpoint piecewise-constant valuations capture a wide range of real-world applications and can be easily elicited from agents in practice, it would be intriguing to fully understand whether we can guarantee $\frac{1}{n}$-MMS for $n$ agents---we conjecture that the answer is affirmative, and the computational complexity of checking the existence of an EF and PO allocation for exactly two goods.

In future research, since complementarity and substitutability are common in practice, it would be interesting to consider compact but more expressive valuations that relax our assumption of values across different goods being additive.
As one possibility, we could let each agent has a utility function $v((p_j)_{j \in [m]})$ that specifies the value for getting fraction $p_1$ of item~$g_1$, $p_2$ of item~$g_2$, and so on.

\section*{Acknowledgements}

This work was partially supported by the NSF-CSIRO grant on ``Fair Sequential Collective Decision-Making'' (Grant No.\ RG230833),
by the ARC Laureate Project FL200100204 on ``Trustworthy AI'', and
by JST ERATO Grant Number JPMJER2301.

\bibliographystyle{plainnat}
\bibliography{bibliography}
\end{document}